\newcommand{\jap}[1]{\langle #1 \rangle}
\def\a{\alpha}
\def\b{\beta}
\def\d{\delta}
\def\e{\varepsilon}
\def\f{\varphi}
\def\g{\psi}
\def\k{\kappa}
\def\l{\lambda}
\def\m{\mu}
\def\s{\sigma}
\def\x{\xi}
\def\y{\eta}
\def\H{\mathcal{H}}
\def\Fd{\mathcal{F}_d}
\def\re{\mathbb{R}}
\def\co{\mathbb{C}}
\def\ze{\mathbb{Z}}
\def\T{\mathbb{T}}
\def\pa{\partial}
\renewcommand{\Re}{\text{{\rm Re}\;}}
\renewcommand{\Im}{\text{{\rm Im}\;}}
\newcommand{\supp}{\text{{\rm supp}\;}}
\newcommand{\Ker}{\text{{\rm Ker}\;}}
\newcommand{\Ran}{\text{{\rm Ran}\;}}
\newtheorem{thm}{Theorem}[section]
\newtheorem{lem}[thm]{Lemma}
\newtheorem{prop}[thm]{Proposition}
\newtheorem{cor}[thm]{Corollary}
\theoremstyle{definition}
\newtheorem{defn}{Definition}
\theoremstyle{remark}
\newtheorem{rem}[thm]{Remark}
\title{Uniform bounds of discrete Birman-Schwinger operators}
\author{Yukihide Tadano}
\address{Graduate School of Mathematical Sciences, University of Tokyo, 3-8-1 Komaba, Meguroku, Tokyo, Japan 153-8914}
\email{tadano@ms.u-tokyo.ac.jp}
\author{Kouichi Taira}
\address{Graduate School of Mathematical Sciences, University of Tokyo, 3-8-1 Komaba, Meguroku, Tokyo, Japan 153-8914}
\email{taira@ms.u-tokyo.ac.jp}
\subjclass[2010]{Primary 47A10, Secndary 47A40}
\keywords{discrete Schr\"odinger operators, resolvents, limiting absorption principle}
\begin{document}

\maketitle

\begin{abstract}
In this note, uniform bounds of the Birman-Schwinger operators in the discrete setting are studied. For uniformly decaying potentials, we obtain the same bound as in the continuous setting. However, for non-uniformly decaying potential, our results are weaker than in the continuous setting. As an application, we obtain unitary equivalence between the discrete Laplacian and the weakly coupled systems.
\end{abstract}

\section{Introduction}

 We consider the discrete Schr\"odinger operators:
\begin{align*}
H=H_0+ V(x)\quad \text{on}\quad \H= l^2(\mathbb{Z}^d),
\end{align*}
where $H_0$ is the negative discrete Laplacian
\begin{align*}
H_0 u(x)=-\sum_{|x-y|=1}(u(y)-u(x)),
\end{align*}
and $V$ is a real-valued function on $\ze^d$.  In this note, we study uniform bounds of the Birman-Schwinger operators:
\begin{align}\label{BS}
\sup_{z\in \co\setminus \re}\left\||V|^{\frac{1}{2}}(H_0-z)^{-1}|V|^{\frac{1}{2}}\right\|_{B(\H)}<\infty.
\end{align}
 As an application, we give sufficient conditions for $V$ that $H_0$ and $H$ are unitarily equivalent. We also give examples of potentials for which (\ref{BS}) does not hold. Though this subject is studied in a recent preprint \cite{BPL}, their assumptions are stronger than ours and some proofs seem incomplete. One of the purposes of this note is to generalize their results and give an alternative proof.

We denote the Fourier expansion by $\Fd$:
\begin{align*}
\hat{u}(\x)=\mathcal{F}_d u(\x)=\sum_{x\in \ze^d}e^{-2\pi i x\cdot\x}u(x),\quad \x\in \mathbb{T}^d=\re^d/\ze^d.
\end{align*}
Then it follows that
\begin{align*}
\Fd H_0 u(\x)=h_0(\x)\Fd u(\x),
\end{align*}
where $h_0(\x)=4\sum_{j=1}^d\sin^{2}(\pi \x_j)$, and hence $\s(H_0)=[0,4d]$. We denote the set of the critical points of $h_0$ by $\Gamma$:
\begin{align*}
\Gamma=\{\x\in \T^d\,|\, \nabla h_0(\x)=0\}=\{\x\in \T^d\,|\, \x_j\in \{0, 1/2\},\, j=1,...,d\}.
\end{align*}
We call $\x\in \Gamma$ an elliptic threshold if $\x$ attains maximum or minimum of $h_0$ and a hyperbolic threshold otherwise.

For a measure space $(X,\m)$, $L^{p,r}(X,\m)$ denotes the Lorentz space for $1\leq p\leq \infty$ and $1\leq r\leq \infty$:
\begin{align*}
&\|f\|_{L^{p,r}(X)}=\begin{cases}
p^{\frac{1}{r}}(\int_0^{\infty}\m(\{x\in X\,|\, |f(x)|>\a\})^{\frac{r}{p}} \a^{r-1}d\a)^{\frac{1}{r}},\quad &r<\infty,\\
\sup_{\a>0}\a\m(\{x\in X\,|\, |f(x)|>\a\})^{\frac{1}{p}},\quad &r=\infty,
\end{cases}\\
&L^{p,r}(X,\m)=\{f:X\to \mathbb{C}\,|\, f:\text{measurable},\, \|f\|_{L^{p,r}(X)}<\infty\}.
\end{align*}
Moreover, we denote $L^{p,r}(\re^d)=L^{p,r}(\re^d, \m_L)$ and $l^{p,r}(\ze^d)=L^{p,r}(\ze^d, \m_c)$, where $\m_L$ is the Lebesgue measure on $\re^d$ and $\m_c$ is the counting measure on $\ze^d$. For a detail, see \cite{G}.

First, we state our positive results:
\begin{thm}\label{mainpo}\begin{itemize}
\item[$(i)$] Let $d\geq 4$. If $V\in l^{\frac{d}{3},\infty}(\ze^d)$, then (\ref{BS}) holds.
\item[$(ii)$] Let $d\geq 3$. If $|V(x)|\leq C(1+|x|)^{-2}$ for some $C>0$, then  (\ref{BS}) holds.
\end{itemize}
\end{thm}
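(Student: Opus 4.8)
The plan is to reduce the Birman-Schwinger bound \eqref{BS} to a mapping-property statement for the free resolvent and then invoke a uniform Sobolev / Kenig-Ruiz-Sogge type estimate on the torus. Concretely, \eqref{BS} holds if and only if $(H_0-z)^{-1}$ extends to a bounded operator from $l^{q'}(\ze^d)$ to $l^q(\ze^d)$ uniformly in $z\in\co\setminus\re$, where in case $(i)$ we take $q=\tfrac{2d}{d-3}$ (so $q'=\tfrac{2d}{d+3}$), since then $\||V|^{1/2}\|_{l^{2d/3}\to\text{mult.}}$ controls the composition by H\"older in Lorentz spaces (multiplication by $|V|^{1/2}\in l^{d/3,\infty}$ maps $l^q\to l^{2}$ and $l^2\to l^{q'}$ by the Lorentz-H\"older inequality, using $\tfrac1{q'}=\tfrac12+\tfrac{3}{2d}$). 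So the core analytic task is the uniform bound
\begin{align*}
\sup_{z\in\co\setminus\re}\big\|(H_0-z)^{-1}\big\|_{l^{q'}(\ze^d)\to l^{q}(\ze^d)}<\infty,\qquad \tfrac1{q'}-\tfrac1q=\tfrac{3}{d}.
\end{align*}
On the Fourier side this is a Fourier multiplier estimate: bound $\|\widecheck{(h_0(\x)-z)^{-1}\widehat f}\|_{L^q}\lesssim\|f\|_{L^{q'}}$ for the multiplier $m_z(\x)=(h_0(\x)-z)^{-1}$ on $\T^d$, uniformly in $z$.

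The key steps I would carry out are: (1) a partition of unity on $\T^d$ localizing near each critical point of $\Gamma$ and away from $\Gamma$; away from $\Gamma$ the symbol $h_0(\x)-z$ is elliptic of order $0$ with nonvanishing gradient, so the level sets $\{h_0=\Re z\}$ are smooth hypersurfaces with suitable curvature, and the classical oscillatory-integral / restriction-extension machinery (Stein-Tomas, Kenig-Ruiz-Sogge) applies with $z$-uniform constants after rescaling. (2) Near an elliptic threshold $\x_0$, $h_0(\x)-h_0(\x_0)$ is comparable to $\pm|\x-\x_0|^2$, so locally the multiplier looks like the Euclidean resolvent $(|\x|^2\mp(z-h_0(\x_0)))^{-1}$ and one transfers the continuous uniform Sobolev inequality $\|(-\Delta-\zeta)^{-1}\|_{L^{q'}(\re^d)\to L^q(\re^d)}\lesssim1$ (valid precisely when $\tfrac1{q'}-\tfrac1q=\tfrac2d$, but here we have the larger gap $\tfrac3d$, which gives extra room); the requirement $d\ge4$ is exactly what makes $q'=\tfrac{2d}{d+3}\ge1$. (3) Near a hyperbolic threshold $h_0-h_0(\x_0)$ is a nondegenerate indefinite quadratic form; here the relevant Fourier-integral estimates for $(Q(\x)-z)^{-1}$ with $Q$ of signature $(k,d-k)$, $1\le k\le d-1$, still hold with a $d/3$-type gain and uniform constants — this is where I expect to lean on known results (e.g. the work on resolvent estimates for operators with simple characteristics of non-elliptic type). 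For part $(ii)$, the pointwise bound $|V(x)|\le C\jap{x}^{-2}$ replaces the Lorentz-H\"older step by a weighted estimate: $|V|^{1/2}(H_0-z)^{-1}|V|^{1/2}$ is controlled by $\|\jap{x}^{-1}(H_0-z)^{-1}\jap{x}^{-1}\|$, which is the standard Agmon-type / limiting absorption bound with weights $\jap{x}^{-s}$, $s=1>1/2$, valid uniformly down to the thresholds when $d\ge3$; one proves this by the same microlocal partition plus Mourre theory or the explicit kernel estimates for the discrete resolvent, noting that at thresholds the $d\ge3$ hypothesis guarantees the local singularity $|\x-\x_0|^{-2}$ (elliptic case) or the indefinite-form singularity is locally integrable against $\jap{x}^{-2}$-weighted $L^2$.

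The main obstacle I anticipate is the uniformity in $z$ as $z$ approaches a \emph{hyperbolic} threshold, where the level surface $\{h_0=\Re z\}$ degenerates (develops a conical singularity) and the curvature-based oscillatory integral bounds lose their constants; handling this requires either a dyadic decomposition in the distance to the singular point of the level set with careful bookkeeping of how the Stein-Tomas constant scales, or citing a sufficiently general uniform resolvent estimate for symbols with non-degenerate indefinite Hessian at the thresholds. A secondary technical point is matching the Lorentz-space exponents precisely: one must run H\"older in Lorentz spaces (not just Lebesgue) to accommodate $V\in l^{d/3,\infty}$ rather than $l^{d/3}$, and check that the interpolation of the multiplier bound can be taken with a strong-type endpoint so that the weak-type space on the potential side is admissible — this is routine but must be stated carefully. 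Everything else (the partition of unity, the reduction to the three model symbols, the transfer from $\re^d$ to $\T^d$ near elliptic thresholds) is standard once the uniform oscillatory-integral input is in hand.
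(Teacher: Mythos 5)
Your reduction of (\ref{BS}) to a uniform $l^{q'}\to l^{q}$ bound with gap $\tfrac1{q'}-\tfrac1q=\tfrac3d$, plus H\"older in Lorentz spaces, is fine and matches the paper's Proposition \ref{discuni}. The gap is in how you propose to prove that bound. Step (1) of your plan is false for the discrete Laplacian: away from the critical set $\Gamma$ the gradient of $h_0$ is nonvanishing, but the \emph{curvature} of the level sets $\{h_0=\Re z\}$ degenerates at points where $\cos(2\pi\x_j)=0$ (e.g.\ $\x_j\in\{1/4,3/4\}$), which are not in $\Gamma$. This is exactly why the dispersive decay is only $\jap{t}^{-d/3}$ rather than $\jap{t}^{-d/2}$, why the admissible gap is $3/d$ rather than $2/d$, and why Theorem \ref{mainne} $(ii)$ produces counterexamples in $d=2$ localized near $\x_1,\x_2\in\{1/4,3/4\}$. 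So Stein--Tomas/Kenig--Ruiz--Sogge with $z$-uniform constants does not apply in your ``regular'' region, and your steps (2)--(3), built on nondegenerate quadratic models near $\Gamma$, can only ever see the $2/d$ line, not the $3/d$ gap. The paper avoids all of this: it takes the global dispersive estimate $\|e^{-itH_0}\|_{l^1\to l^\infty}\lesssim\jap{t}^{-d/3}$ of Stefanov--Kevrekidis (one-dimensional van der Corput with third derivatives, exploiting the product structure $h_0(\x)=4\sum_j\sin^2(\pi\x_j)$), feeds it into the Keel--Tao endpoint Strichartz theorem with $\s=d/3$, and applies Duyckaerts' trick (substituting $u(t)=e^{itz}f$ into the inhomogeneous Strichartz estimate) to obtain the uniform $l^{3_*,2}\to l^{3^*,2}$ resolvent bound. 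The paper's own remark that it cannot extend $(i)$ to general Morse multipliers is precisely because the local-model route you propose runs into multidimensional van der Corput difficulties.

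For part $(ii)$ your outline is close to the paper's actual argument (partition of unity on $\T^d$, straightening near regular points, Morse-lemma reduction to ultrahyperbolic models near $\Gamma$), but two points need correction. Mourre theory for the ultrahyperbolic model $P$ gives the limiting absorption principle only on intervals $I\Subset\re\setminus\{0\}$, i.e.\ away from the threshold of the model; the uniformity down to $z=h_0(\x_0)$ is the whole difficulty, and the paper obtains it from the uniform Sobolev inequalities of Jeong--Kwon--Lee for second order non-elliptic operators together with real interpolation and H\"older (Proposition \ref{lapconti}), with weights $\jap{x}^{-\a}$, $\jap{x}^{-\b}$ satisfying $\a,\b>\tfrac12+\tfrac1{2(d-1)}$ and $\a+\b\ge2$ --- so $s=1$ works, but not by Mourre theory and not by local integrability of $|\x-\x_0|^{-2}$, which only addresses the elliptic thresholds. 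Second, transferring the Euclidean weighted estimate back to the torus through a nonlinear change of variables requires coordinate invariance of the fractional Sobolev spaces $H^{\a}$ for non-integer $2\a$ (the paper's Lemmas \ref{lemb2} and \ref{lemb3}); this is not automatic and must be supplied.
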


\begin{cor}\label{mainco}
Under the condition of  Theorem \ref{mainpo} $(i)$ or $(ii)$, $H=H_0+\l V$ is unitarily equivalent to $H_0$ for small $\l \in \re$.
\end{cor}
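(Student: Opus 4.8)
The plan is to read off the unitary equivalence from Kato's theory of smooth perturbations, with the uniform bound (\ref{BS}) as the only input beyond soft functional analysis. First I would note that under either hypothesis of Theorem \ref{mainpo} the potential is bounded: on the counting-measure space one has $l^{p,\infty}(\ze^d)\subset l^\infty(\ze^d)$ with $\|f\|_{l^\infty}\le\|f\|_{l^{p,\infty}}$, and $(1+|x|)^{-2}\in l^\infty$ trivially, so $H=H_0+\l V$ is a bounded self-adjoint operator and no form-domain subtleties arise. Write $V=|V|^{1/2}\,\Omega\,|V|^{1/2}$ with $\Omega=\mathrm{sgn}(V)$ (so $\|\Omega\|\le 1$), put $B(z)=|V|^{1/2}(H_0-z)^{-1}|V|^{1/2}$ and $M=\sup_{z\in\co\setminus\re}\|B(z)\|$, which is finite by Theorem \ref{mainpo}. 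For $|\l|<M^{-1}$ the operator $1+\l B(z)\Omega$ is boundedly invertible uniformly in $z\in\co\setminus\re$, with $\|(1+\l B(z)\Omega)^{-1}\|\le(1-|\l|M)^{-1}$; sandwiching the second resolvent identity between two copies of $|V|^{1/2}$ yields $|V|^{1/2}(H-z)^{-1}|V|^{1/2}=(1+\l B(z)\Omega)^{-1}B(z)$, whence $\sup_{z\in\co\setminus\re}\||V|^{1/2}(H-z)^{-1}|V|^{1/2}\|<\infty$. Since uniform boundedness of $A(T-z)^{-1}A^{*}$ over $z\notin\re$ implies that $A$ is $T$-smooth (Kato's criterion), $|V|^{1/2}$ is therefore both $H_0$-smooth and $H$-smooth.

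Viewing $V=(|V|^{1/2})^{*}(\l\Omega)|V|^{1/2}$ as a bounded operator times a square-root factor that is smooth relative to \emph{both} $H_0$ and $H$, the Kato--Kuroda theory of scattering for smooth perturbations gives that $W_\pm(H,H_0)=\lim_{t\to\pm\infty}e^{itH}e^{-itH_0}$ (strong limits) exist and are complete, so $H_0$ is unitarily equivalent to the absolutely continuous part of $H$. To promote this to full unitary equivalence I would show $\mathcal{H}_{\mathrm{ac}}(H)=\H$: Kato $H$-smoothness of $|V|^{1/2}$ forces $\overline{\Ran|V|^{1/2}}\subseteq\mathcal{H}_{\mathrm{ac}}(H)$, hence $\mathcal{H}_{\mathrm{ac}}(H)^{\perp}\subseteq\Ker|V|^{1/2}=\{u\in\H:Vu=0\}$. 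If $\psi\in\mathcal{H}_{\mathrm{ac}}(H)^{\perp}$, then $e^{-itH}\psi$ stays in this $H$-invariant subspace, so $Ve^{-itH}\psi=0$ for all $t$ and $u(t)=e^{-itH}\psi$ solves $i\dot u=H_0u$; by uniqueness $e^{-itH}\psi=e^{-itH_0}\psi$, so the $H$-spectral measure of $\psi$ equals its $H_0$-spectral measure, which is purely absolutely continuous because $H_0$, being unitarily equivalent via $\Fd$ to multiplication by $h_0$, has purely absolutely continuous spectrum. Hence $\psi\in\mathcal{H}_{\mathrm{ac}}(H)$, forcing $\psi=0$, so $H$ is purely absolutely continuous and $W_\pm(H,H_0)$ is a unitary from $\H=\mathcal{H}_{\mathrm{ac}}(H_0)$ onto $\H$ intertwining $H_0$ with $H$.

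The two nonroutine points I anticipate are: checking that the factorization $V=(|V|^{1/2})^{*}(\l\Omega)|V|^{1/2}$ genuinely satisfies the hypotheses of the smooth-perturbation scattering theorem --- this is exactly why the first paragraph derives the $H$-resolvent bound and not merely (\ref{BS}) --- and the removal of the possible singular part of $H$, which, because $\Ran|V|^{1/2}$ need not be dense (it is supported on $\supp V$), is not a direct citation but requires the short propagation argument above.
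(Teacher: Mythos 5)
Your argument is correct and is essentially the paper's: the paper derives Corollary \ref{mainco} by combining Theorem \ref{mainpo} with the classical Kato smooth-perturbation result quoted as Lemma \ref{Ksmooth} (Reed--Simon, Theorem XIII.26), and what you have written out --- the resolvent identity giving $|V|^{1/2}(H-z)^{-1}|V|^{1/2}=(1+\l B(z)\Omega)^{-1}B(z)$, the resulting two-sided smoothness of $|V|^{1/2}$, completeness of the wave operators, and the elimination of the singular subspace via $\mathcal{H}_{\mathrm{ac}}(H)^{\perp}\subseteq\Ker V$ together with the purely absolutely continuous spectrum of $H_0$ --- is precisely the standard proof of that cited lemma. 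The only difference is that you reprove the lemma rather than citing it; your reproof is sound.
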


\begin{rem}
For Theorem \ref{mainpo} $(ii)$, we show stronger results in Proposition \ref{discuni2}. For Theorem \ref{mainpo} $(i)$, we also obtain stronger results: Uniform resolvent estimates in Lorentz spaces as in Proposition \ref{discuni}.
\end{rem}

\begin{rem}
In \cite{IK} and \cite{SK}, the authors prove the absence of eigenvalues of $H_0+\l V$ for small $\l\in \re$ if $|V(x)|\leq C(1+|x|)^{-2-\e}$ for some $C>0$ and $\e>0$ with $d=3$ and $V\in l^{\frac{d}{3}}(\ze^d)$ with $d\geq4$ respectively. In \cite{BPL}, (\ref{BS}) is proved under stronger assumptions: $|V(x)|\leq C\jap{x}^{-2(d+3)}$ with $d\geq3$. Moreover, in \cite{KM}, $(\ref{BS})$ is established for $V\in l^{p}(\ze^d)$ with $1\leq p<6/5$ if $d=3$ and $1\leq p<3d/(2d+1)$ if $d\geq 4$. The authors in \cite{KM} also study the scattering theory of $H_0+V$.

\end{rem} 

\begin{rem}
Theorem \ref{mainpo} $(ii)$ holds if $H_0$ is replaced by a Fourier multiplier $\Fd^{-1}e\Fd$, where $e$ is a Morse function on $\mathbb{T}^d$. In fact, any Morse function can be deformed into ultrahyperbolic operators near its critical points. Thus we can apply the arguments in Section 3 directly. On the other hand, the authors are not confident whether we may replace $H_0$ by $\Fd^{-1}e\Fd$ in Theorem \ref{mainpo} $(i)$ due to the difficulty of multidimensional versions of the van der Corput lemma.
\end{rem}

\begin{rem}
Theorem \ref{mainpo} $(ii)$ is optimal as is shown in Theorem \ref{mainne} below. However, the authors expect that Theorem \ref{mainpo} $(i)$ is far from optimal.
\end{rem}

Corollary \ref{mainco} follows from Theorem \ref{mainpo} and the following classical result due to T.  Kato:

\begin{lem}[{\cite[Theorem XIII.26]{RS}\label{Ksmooth}}]
Let $H_0$ be a positive self-adjoint operator on a Hilbert space $\mathcal{H}$ and let $V$ be a bounded self-adjoint operator on $\mathcal{H}$. If 
\begin{align*}
\sup_{z\in \co\setminus \re}\left\||V|^{\frac{1}{2}}(H_0-z)^{-1}|V|^{\frac{1}{2}}\right\|_{B(\H)}<\infty,
\end{align*}
then $H_0$ and $H_0+\l V$ are unitarily equivalent for small $\l \in \re$.
\end{lem}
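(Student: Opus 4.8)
The plan is to read the hypothesis as a quantitative statement of $H_0$-smoothness and then run Kato's theory of smooth perturbations. Write $B=|V|^{1/2}$ and $U=\operatorname{sgn}(V)$, so that $V=BUB$ with $B=B^*\ge 0$ bounded, $U=U^*$, $\|U\|\le 1$; set $C\coloneqq UB$, so that $V=B^*C=C^*B$, and put $M\coloneqq\sup_{z\in\co\setminus\re}\|B(H_0-z)^{-1}B\|<\infty$. The first step is simply to record that this bound is exactly the assertion that $B$ is $H_0$-smooth (in Kato's sense), and that, since $\|U\|\le 1$, the same bound holds with $B$ replaced by $C$, so $C$ is $H_0$-smooth as well.

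The second step is to transfer smoothness to $H_\lambda\coloneqq H_0+\lambda V$ for small $\lambda$. Fixing $\lambda\in\re$ with $|\lambda|M<1$ and writing $R_\lambda(z)=(H_\lambda-z)^{-1}$, $R_0(z)=(H_0-z)^{-1}$, the resolvent identity $R_\lambda=R_0-\lambda R_0VR_\lambda$ together with $V=B^*C$ yields, for $\Im z\ne 0$,
\[
BR_\lambda(z)B^* = BR_0(z)B^*\bigl(I+\lambda\,CR_0(z)B^*\bigr)^{-1},
\]
and since $\|\lambda\,CR_0(z)B^*\|\le|\lambda|M<1$ uniformly in $z$, the right-hand side has norm at most $M/(1-|\lambda|M)$. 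Hence $\sup_z\|BR_\lambda(z)B^*\|<\infty$, i.e.\ $B$ is $H_\lambda$-smooth, and likewise (using $\|U\|\le 1$ once more) $C$ is $H_\lambda$-smooth. The same estimate at real $z\notin\sigma(H_0)$ makes $I+\lambda CR_0(z)B^*$ invertible there, so by the Birman--Schwinger principle $H_\lambda$ has no eigenvalues outside $\sigma(H_0)$.

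The third step is to invoke the Kato--Birman theorem: with $B^*$ and $C$ each $H_0$-smooth and $H_\lambda$-smooth and $V=B^*C$, the wave operators $W^\pm=W^\pm(H_\lambda,H_0)=\lim_{t\to\pm\infty}e^{itH_\lambda}e^{-itH_0}$ (strong operator topology) exist and are complete, $\Ran W^\pm=\H_{\mathrm{ac}}(H_\lambda)$, and they intertwine $H_0$ with $H_\lambda$; so $H_0|_{\H_{\mathrm{ac}}(H_0)}$ and $H_\lambda|_{\H_{\mathrm{ac}}(H_\lambda)}$ are unitarily equivalent. To promote this to a full unitary equivalence I would use that $H_\lambda$-smoothness of $C$ forces $\overline{\Ran C^*}\subseteq\H_{\mathrm{ac}}(H_\lambda)$, so $\H_{\mathrm{ac}}(H_\lambda)^\perp\subseteq\Ker C\subseteq\Ker V$; thus $H_\lambda$ coincides with $H_0$ on the reducing subspace $\H_{\mathrm{ac}}(H_\lambda)^\perp$, and since $H_0$ is purely absolutely continuous, that subspace — being singular for $H_\lambda$ — must be $\{0\}$. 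Hence $H_\lambda$ is purely a.c., $\H_{\mathrm{ac}}(H_0)=\H=\H_{\mathrm{ac}}(H_\lambda)$, and $W^+$ is the required unitary.

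The resolvent/Neumann-series manipulation of the second step is routine. The substantive input is the Kato--Birman theorem used in the third step — existence and completeness of the wave operators out of two-sided smoothness — which is exactly the content being cited and which I would take as known. The one point that needs care is the final upgrade from "absolutely continuous parts unitarily equivalent" to "unitarily equivalent": it relies on $H_0$ being purely absolutely continuous, which is the case in all the applications here ($H_0$ the discrete Laplacian, with $\sigma(H_0)=\sigma_{\mathrm{ac}}(H_0)=[0,4d]$), together with the argument above that $H_\lambda$ inherits this.
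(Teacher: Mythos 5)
The paper gives no proof of this lemma: it is quoted directly from Reed--Simon, Theorem XIII.26, so the relevant benchmark is the standard proof of that theorem, which is essentially what you have reconstructed. Your first two steps are exactly Kato's argument and are correct as written: the hypothesis is the quantitative form of $H_0$-smoothness of $B=|V|^{1/2}$ (hence of $C=UB$, since $\|U\|\le 1$), and the factorized resolvent identity $BR_\lambda(z)B^*=BR_0(z)B^*\bigl(I+\lambda CR_0(z)B^*\bigr)^{-1}$ with the Neumann series transfers smoothness to $H_\lambda$ for $|\lambda|M<1$.

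The only place where the write-up proves less than the stated lemma is the final upgrade from ``absolutely continuous parts unitarily equivalent'' to ``unitarily equivalent'': as you acknowledge, it uses that $H_0$ is purely absolutely continuous, which is not among the hypotheses (only positivity and self-adjointness of $H_0$ are assumed). This costs nothing for the paper's application, where $H_0$ is the discrete Laplacian, but the lemma is true as stated, and Kato's argument gets it without any a.c.\ bookkeeping. Indeed, $H_0$-smoothness of $B$ forces $\H_{\mathrm{sing}}(H_0)\subseteq\Ker B\subseteq\Ker V$, so for $\phi\in\H_{\mathrm{sing}}(H_0)$ one has $\frac{d}{dt}e^{itH_\lambda}e^{-itH_0}\phi=i\lambda e^{itH_\lambda}Ve^{-itH_0}\phi=0$ and the wave operators fix the singular subspace pointwise; on all of $\H$ they exist by Cook's method, since the Cauchy criterion only requires $\int_{\re}\|Be^{-itH_0}\phi\|^2\,dt<\infty$ and $\int_{\re}\|Ce^{-itH_\lambda}\psi\|^2\,dt<\infty$, which hold for every $\phi,\psi\in\H$. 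They are isometric, intertwine $H_0$ and $H_\lambda$, and are surjective because the reverse wave operators $\lim_{t\to\pm\infty}e^{itH_0}e^{-itH_\lambda}$ exist by the symmetric argument and invert them; hence $W^{\pm}$ is unitary on all of $\H$. (A cosmetic point: the result you invoke in the third step is Kato's smooth-perturbation theorem, not the Kato--Birman trace-class theorem, though the content you use from it is correct.)
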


Moreover, we state the existence of the boundary values of the free resolvent near $\Gamma$:

\begin{thm}\label{bv}
Suppose $s>1$. Then, $\jap{x}^{-s}(H_0-z)^{-1}\jap{x}^{-s}$ is H\"older continuous in $B(\mathcal{H})$ with respect to $z\in \mathbb{C}_{\mp}=\{z\in \mathbb{C}\,|\, \mp\Im z>0\}$. In particular, the incoming/outgoing resolvents
\begin{align*}
\jap{x}^{-s}(H_0-\m\pm i0)^{-1}\jap{x}^{-s}
\end{align*}
exist in the operator norm topology of $B(\mathcal{H})$ for $\m\in [0,4d]$.
\end{thm}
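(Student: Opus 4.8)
The statement is a limiting absorption principle for the Fourier multiplier $H_0=\Fd^{-1}h_0\Fd$, and the plan is to reduce it to a uniform estimate on the symbol $1/(h_0-z)$. The remark that organizes the argument is that, under $\Fd$, multiplication by $\langle x\rangle^{-s}$ on $l^2(\ze^d)$ corresponds exactly to the Sobolev structure $H^s(\T^d)$: the Fourier coefficients of $\Fd(\langle\cdot\rangle^{-s}u)$ are $\langle n\rangle^{-s}u(-n)$, so $\|\Fd(\langle\cdot\rangle^{-s}u)\|_{H^s(\T^d)}\asymp\|u\|_{l^2}$. Hence $\|\langle x\rangle^{-s}(H_0-z)^{-1}\langle x\rangle^{-s}\|_{B(\mathcal H)}$ is comparable to the norm of multiplication by $1/(h_0(\xi)-z)$ as an operator $H^s(\T^d)\to H^{-s}(\T^d)$, and Hölder continuity of the weighted resolvent in $z$ is equivalent to Hölder continuity of this multiplier. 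Off $[0,4d]$ the symbol is smooth and uniformly bounded whenever $z$ is kept at a positive distance from $[0,4d]$, so there is nothing to prove; using a finite partition of unity on $\T^d$ it suffices to fix $\xi_0\in\T^d$ and a cutoff $\chi$ supported near $\xi_0$ and establish the uniform bound and Hölder continuity for multiplication by $\chi(\xi)^2(h_0(\xi)-z)^{-1}$ from $H^s(\T^d)$ to $H^{-s}(\T^d)$, uniformly for $z$ near $h_0(\xi_0)$.

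If $\nabla h_0(\xi_0)\neq0$, a smooth local change of variables (which is bounded on every $H^s$, hence harmless in this formulation) brings $h_0$ into the form $h_0(\xi_0)+\xi_1$; the multiplier becomes $1/(\xi_1-z')$, $z'=z-h_0(\xi_0)$, tensored with the identity in the remaining variables, and the required estimate reduces to the standard one-dimensional bound for $\langle D_t\rangle^{-s}(t-z')^{-1}\langle D_t\rangle^{-s}$ as $z'\to0$, valid already for $s>1/2$ --- equivalently, a Mourre estimate with conjugate operator built from the transport field $\nabla h_0(\xi)\cdot D_\xi$. If instead $\xi_0\in\Gamma$, the Morse lemma provides a local diffeomorphism $\xi=\Phi(y)$ with $h_0\circ\Phi=h_0(\xi_0)+Q(y)$, where $Q(y)=\sum_{j\le p}y_j^2-\sum_{j>p}y_j^2$ is nondegenerate; since composition with $\Phi$ is bounded on $H^s$ and conjugating $\langle D_\xi\rangle^{s}$ by it produces an elliptic pseudodifferential operator of order $s$, the problem reduces --- modulo smoothing errors from the cutoffs --- to the uniform boundedness and Hölder continuity, as $z'\to0$, of multiplication by $\chi^2(Q(y)-z')^{-1}$ from $H^s$ to $H^{-s}$, i.e. (after a Fourier transform) of the weighted resolvent $\langle x\rangle^{-s}(Q(D)-z')^{-1}\langle x\rangle^{-s}$ on $L^2(\re^d)$ near the threshold $z'=0$. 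Here $Q(D)=-\Delta$ when $\xi_0$ is an elliptic threshold ($p=0$ or $p=d$) and $Q(D)$ is the ultrahyperbolic operator $\sum_{j\le p}D_j^2-\sum_{j>p}D_j^2$ otherwise.

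For $Q(D)=-\Delta$ this is the classical low-energy limiting absorption principle: as $z'\to0$ the kernel tends to the truncated Newtonian potential, of size $|x-y|^{-(d-2)}$, and a Schur test gives $\sup_x\int\langle x\rangle^{-s}|x-y|^{-(d-2)}\langle y\rangle^{-s}\,dy\lesssim\sup_x\langle x\rangle^{2-2s}<\infty$ exactly when $s>1$ (and $d\ge3$, so that $|x-y|^{-(d-2)}$ is locally integrable); Jensen--Kato type resolvent expansions near $z'=0$ then upgrade this to Hölder continuity. This is the only place the hypothesis $s>1$ is used. The ultrahyperbolic model at its threshold is what I expect to be the main obstacle: the limiting kernel is again homogeneous of degree $-(d-2)$ but now carries an additional singularity along the dual light cone $\{Q^*=0\}$ (with $\delta$- and $\log$-type contributions in odd, respectively even, dimensions), so the naive Schur test is only adequate in low dimension and must in general be replaced by a dyadic decomposition near the cone together with the $L^2$ restriction/extension estimate for the cone and the indefinite analogue of the low-energy bound. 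This is precisely the ``ultrahyperbolic'' analysis of Section 3, and the reason the elliptic theory cannot simply be quoted. Granting the model estimates, summing the finitely many localized pieces and adding their Hölder moduli completes the proof.
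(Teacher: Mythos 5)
Your global reduction is the same as the paper's: conjugating by $\Fd$ turns the weights $\jap{x}^{-s}$ into the $H^{s}(\T^d)$--$H^{-s}(\T^d)$ duality for the multiplier $(h_0-z)^{-1}$, a partition of unity localizes on $\T^d$, regular points are handled by flattening $h_0$ and a one-dimensional bound for $(D_{\y_1}-z)^{-1}$ (this is Lemma \ref{aplem} and Proposition \ref{cutMo}), the Morse lemma reduces each threshold to a quadratic model $Q(D)$, and the chart/cutoff issues for non-integer $s$ that you flag are exactly Lemmas \ref{lemb2} and \ref{lemb3}. The gap is at the decisive step: for the hyperbolic thresholds you write ``granting the model estimates,'' but the uniform bound and H\"older continuity of $\jap{x}^{-s}(Q(D)-z')^{-1}\jap{x}^{-s}$ up to and across $z'=0$ for the ultrahyperbolic $Q$ \emph{is} the hard core of Theorem \ref{bv}; your sketched route (dyadic decomposition near the dual light cone, cone restriction/extension estimates, an ``indefinite low-energy bound'') is never carried out, and it is also not what the ultrahyperbolic section of the paper supplies, so the argument is incomplete precisely at its main point. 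The result you need is Proposition \ref{ulbv}, whose proof is different and much shorter: for $z,z'$ in one half-plane one writes $(P-z)^{-1}-(P-z')^{-1}=\int_0^{\infty}(e^{itz}-e^{itz'})e^{-itP}\,dt$, applies the dispersive estimate $(\ref{disp})$ in $B(L^{\frac{2d}{d+2}-\d},L^{\frac{2d}{d-2}+\d})$, whose time decay is integrable with a margin $\a_\d>0$, and uses the embeddings $L^{2}_{1+\e}(\re^d)\subset L^{\frac{2d}{d+2}-\d}(\re^d)$ and $L^{\frac{2d}{d-2}+\d}(\re^d)\subset L^{2}_{-1-\e}(\re^d)$; this gives the modulus $|z-z'|^{\a_\d}$ uniformly, with no distinction between elliptic and hyperbolic indices and no analysis of the limiting kernel.

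Two smaller points. First, even in the elliptic model your upgrade from the Schur-test bound at $z'=0$ to H\"older continuity is only asserted (``Jensen--Kato type expansions''), whereas the Duhamel/dispersive argument above covers the elliptic and ultrahyperbolic cases simultaneously, which is why the paper needs no separate treatment of the elliptic thresholds. Second, if you intend to quote earlier results of the paper rather than prove the model estimate, the correct citations are Proposition \ref{ulbv} near $\Gamma$ and Proposition \ref{cutMo} away from $\Gamma$ (this is literally the paper's proof in Subsection \ref{pfbv}); note also that the threshold analysis there is carried out for $d\geq 3$, so a cone-restriction program would in any case not be the mechanism available in Section \ref{secul}.
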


As a corollary, we have upper bounds of the number of discrete eigenvalues of $H_0+\l V + W$ when $W$ is finitely supported.

\begin{cor} \label{cor}
Let $H=H_0+ \l V$, where $V$ satisfies the condition of Theorem \ref{mainpo} $(i)$ or $(ii)$ and $\l \in \re$ is small. Let $W$ be a real-valued finitely supported potential. 
Then
\begin{align*}
\dim \Ker (H+W-\m) \leq \#\left\{ x\in\ze^d\, |\, W(x) \neq 0 \right\}
\end{align*}
for any $\m\in\re$ and
\begin{align*}
&\dim \Ran E_{H+W}^{pp} \left( \left( -\infty, 0 \right] \right) \leq \#\left\{ x\in\ze^d\, |\, W(x) < 0 \right\}, \\
&\dim \Ran E_{H+W}^{pp} \left( \left[ 4d, \infty \right) \right) \leq \#\left\{ x\in\ze^d\, |\, W(x) > 0 \right\} , 
\end{align*}
where $\dim E_{H+W}^{pp} \left( I \right)$ denotes the projection onto the eigenspace of $H+W$ corresponding to the eigenvalues contained in $I \subset \re$.
\end{cor}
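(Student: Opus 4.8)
The plan is to reduce everything to a Birman–Schwinger type argument for $H+W$, viewed as a finite-rank perturbation of $H=H_0+\lambda V$, and to exploit the fact that by Corollary \ref{mainco} (via Lemma \ref{Ksmooth}) $H$ is unitarily equivalent to $H_0$, so in particular $H$ has purely absolutely continuous spectrum equal to $[0,4d]$ and no eigenvalues at all. First I would record the elementary kernel bound: if $u\in\Ker(H+W-\mu)$ with $\mu\in\re$, then $(H-\mu)u=-Wu$, and since $W$ is supported on the finite set $S=\{x:W(x)\neq0\}$, the vector $Wu$ lies in the finite-dimensional space $\ell^2(S)\subset\H$. If $\mu\notin\s(H)=[0,4d]$ then $H-\mu$ is boundedly invertible, so $u=-(H-\mu)^{-1}Wu$ is determined by $Wu\in\ell^2(S)$, whence $\dim\Ker(H+W-\mu)\le\dim\ell^2(S)=\#S$. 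For $\mu\in[0,4d]$ one argues similarly but must first observe that such an eigenvector $u$ cannot vanish on all of $S$ (otherwise $(H-\mu)u=0$, forcing $u=0$ because $H$ has no eigenvalues); hence the linear map $u\mapsto Wu=Wu|_S$ is injective on $\Ker(H+W-\mu)$, giving the same bound $\dim\Ker(H+W-\mu)\le\#S$. This settles the first displayed inequality.

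For the two inequalities on negative/positive eigenvalues I would use a Birman–Schwinger counting argument localized to the sign of $W$. Write $W=W_+-W_-$ with $W_\pm\ge0$ finitely supported, $\supp W_+=\{W>0\}$, $\supp W_-=\{W<0\}$. For $\mu<0$, an eigenvalue $\mu$ of $H+W$ below the spectrum $[0,4d]$ of $H$: since $H-\mu\ge \delta>0$ is invertible, $\mu\in\s(H+W)$ below $0$ is equivalent to $-1\in\s\big((H-\mu)^{-1/2}W(H-\mu)^{-1/2}\big)$, and the multiplicity of $\mu$ as an eigenvalue of $H+W$ equals the multiplicity of $-1$ as an eigenvalue of the (self-adjoint, finite-rank) operator $K(\mu)=(H-\mu)^{-1/2}W(H-\mu)^{-1/2}$. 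The rank of $K(\mu)$ is at most $\#\supp W=\#S$, but to get the sharper count by $\#\{W<0\}$ one restricts to the negative part: $-1$ being a negative eigenvalue of $K(\mu)$ forces, by the min-max/Sylvester-type inequality, that $K(\mu)$ has at least one negative eigenvalue contributed by $W_-$, and more precisely the total number (with multiplicity) of eigenvalues of $H+W$ in $(-\infty,0]$ is at most the number of negative eigenvalues of $W$ acting on $\ell^2(\supp W_-)$, which is $\#\{W<0\}$; standard Birman–Schwinger monotonicity in $\mu$ (the eigenvalues of $K(\mu)$ move monotonically as $\mu\downarrow-\infty$) converts this into the stated bound on $\dim\Ran E^{pp}_{H+W}((-\infty,0])$. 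The inequality at the top edge $[4d,\infty)$ is identical after replacing $H$ by $4d-H$ (or $z\mapsto -z$), which swaps the roles of $W_+$ and $W_-$.

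The one genuinely delicate point, and where Theorem \ref{bv} enters, is handling eigenvalues that might sit \emph{on} the band $[0,4d]$ — in particular embedded eigenvalues of $H+W$ — and confirming there are none that would spoil the counting, as well as justifying that all eigenvalues of $H+W$ outside $(0,4d)$ really are isolated with finite multiplicity so that the Birman–Schwinger correspondence and the projections $E^{pp}_{H+W}(\cdot)$ are well defined. Here one uses that $H$ has a limiting absorption principle: by Theorem \ref{bv}, $\jap{x}^{-s}(H-z)^{-1}\jap{x}^{-s}$ (equivalently for $H_0$ after the unitary equivalence, though one should really run Theorem \ref{bv}'s proof for $H$ directly, or note that a finitely supported $W$ is trivially $\jap{x}^{-s}$-bounded) extends continuously up to the real axis away from $\Gamma$, which by a standard Agmon–Kato–Kuroda/Birman–Schwinger argument rules out positive-energy eigenvalues of $H+W$ embedded in $(0,4d)$ except possibly at the thresholds $h_0(\Gamma)\subset\{0,4d\}$ — and those threshold points are exactly the endpoints $0$ and $4d$, which are already included in the closed intervals $(-\infty,0]$ and $[4d,\infty)$. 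Thus the main obstacle is the careful bookkeeping at the threshold energies $0$ and $4d$: one must check that a possible threshold eigenvalue or resonance of $H+W$ is still captured by the rank bound on $W_\mp$, which follows because a threshold eigenfunction $u$ again satisfies $(H-\mu)u=-Wu$ with $Wu$ supported in $S$ and $u\mapsto Wu$ injective on the corresponding eigenspace (as $H$ has no eigenvalue there), so the argument of the first paragraph applies verbatim and the counts go through.
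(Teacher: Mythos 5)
Your first inequality is fine and is essentially the paper's own argument (Lemma \ref{numberEV}~(i)): since $H=H_0+\l V$ is unitarily equivalent to $H_0$ (Corollary \ref{mainco}) it has no eigenvalues, so $u\mapsto Wu$ is injective on $\Ker(H+W-\mu)$ and the bound by $\#\{W\neq 0\}$ follows. The genuine gaps are in your treatment of the two bounds on $\dim \Ran E^{pp}_{H+W}((-\infty,0])$ and $\dim \Ran E^{pp}_{H+W}([4d,\infty))$. First, for sign-indefinite $W$ the eigenvalues of $K(\mu)=(H-\mu)^{-1/2}W(H-\mu)^{-1/2}$ are \emph{not} monotone in $\mu$, so the ``standard Birman--Schwinger monotonicity'' step does not apply as stated; one would have to absorb $W_+$ into the background operator and work with $W_-^{1/2}(H+W_+-\mu)^{-1}W_-^{1/2}$. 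Second, and more seriously, the intervals in the statement are closed: a possible eigenvalue at the threshold $\mu=0$ (or $\mu=4d$) must be counted, but there $K(\mu)$ is not even defined (since $0\in\s(H)$), and your patch --- ``the argument of the first paragraph applies verbatim'' --- only bounds $\dim\Ker(H+W)$ by $\#\{W\neq0\}$, not by $\#\{W<0\}$. Even if you sharpen the threshold bound to $\#\{W<0\}$ (which is possible: $W_-u=0$ and $(H+W)u=0$ give $0=(u,(H+W_+)u)_{\H}\geq (u,Hu)_{\H}\geq 0$, hence $Hu=0$, impossible), adding it to the Birman--Schwinger count for $(-\infty,0)$ yields at best $2\,\#\{W<0\}$; the claimed inequality needs a single argument controlling all of $\Ran E^{pp}_{H+W}((-\infty,0])$ at once, which your two-piece decomposition does not supply.

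The paper avoids all of this machinery with a short variational argument (Lemma \ref{numberEV}~(ii)): if $\dim \Ran E^{pp}_{H+W}((-\infty,0])>\#\{W<0\}=\dim\Ran W_-$, a dimension count produces a unit vector $u$ in the span of eigenvectors with eigenvalues $\leq 0$ satisfying $W_-u=0$; then
\begin{align*}
0\geq (u,(H+W)u)_{\H}=(u,(H+W_+)u)_{\H}\geq (u,Hu)_{\H},
\end{align*}
while $\s(H)=[0,4d]$ together with the absence of eigenvalues of $H$ forces $(u,Hu)_{\H}>0$, a contradiction; the bound at $4d$ is symmetric. Note also that your third paragraph is an unnecessary detour: the corollary does not exclude embedded eigenvalues in $(0,4d)$, and neither Theorem \ref{bv} nor any limiting absorption/Agmon--Kato--Kuroda input is needed --- the only facts used are that $H$ is bounded and self-adjoint, $\s(H)=[0,4d]$, and $H$ has no eigenvalues.
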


\begin{rem}
This corollary appears in \cite[Corollary 2.4]{BPL} under different assumptions in a stronger form. However, their argument seems to be incomplete. Indeed, their proof of the positivity of the quadratic form $(\f, [V,iA] \f)_\mathcal{H}$ on the eigenspace of $H_0+V$ does not work when $H_0+V$ has at least two eigenvalues, where $V$ is a real-valued function with finite support and $A$ is the conjugate operator associated to $H_0$.
\end{rem}

Next, we state our negative results:
\begin{thm}\label{mainne}
\begin{itemize}\item[$(i)$] Suppose $d=1$ or $2$. For a non positive potential $V\in l^{\infty}(\ze^d)$ which is not identically zero and vanishes at infinity, (\ref{BS}) does not hold. Moreover, $H_0+\l V$ has at least one eigenvalue for all $\l>0$.
\item[$(ii)$] Suppose $d=2$. Let $\chi\in C^{\infty}(\T^d)$ be a non-negative function which is equal to $1$ near $\{\x_1,\x_2\in \{1/4, 3/4\}\}$ and is supported near $\{\x_1,\x_2\in \{1/4, 3/4\}\}$.
Then, there exists $w\in l^{2,\infty}(\ze^d)$ such that
\begin{align}\label{ii}
\sup_{z\in\mathbb{C}\setminus \re}\|w\chi(D)(H_0-z)^{-1}w\|_{B(\H)}=\infty.
\end{align}
Moreover, $\chi(D)(H_0-z)^{-1}$ is not uniformly bounded in $B(l^p(\ze^d),l^q(\ze^d))$ for $1\leq p\leq \infty$ and $q<\infty$.
\item[$(iii)$] Let $d\geq 3$ and $q>\frac{d+2}{3}$. Then, there exists $V\in l^{q,\infty}(\ze^d)$ such that (\ref{BS}) does not hold. In particular, if $d\geq 5$ then there exists $V\in l^{\frac{d}{2},\infty}(\ze^d)$ such that (\ref{BS}) does not hold.
\item[$(iv)$] Let $d\geq 3$ and $V(x)=(1+|x|)^{-\a}$ for $0\leq \a<2$. Then (\ref{BS}) does not hold.
\end{itemize}
\end{thm}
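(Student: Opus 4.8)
\emph{Proof proposal.}
All four parts rest on the Birman--Schwinger identity together with the elementary lower bound
\[
\bigl\||V|^{1/2}(H_0-z)^{-1}|V|^{1/2}\bigr\|_{B(\H)}\ \ge\ \frac{\bigl|\langle h,(H_0-z)^{-1}h\rangle\bigr|}{\|\,|V|^{-1/2}h\,\|_{\H}^{2}},
\]
valid for every $h$ supported on $\{V\ne 0\}$ (apply the operator to $g=|V|^{-1/2}h$), combined with $\langle h,(H_0-z)^{-1}h\rangle=\int_{\T^d}|\widehat h(\xi)|^{2}\,(h_0(\xi)-z)^{-1}\,d\xi$; one then chooses $h$ and $z$ to resonate with the relevant critical point of $h_0$. \emph{Part (i).} Take $h=\delta_{x_0}$ with $V(x_0)\ne 0$, so $\langle\delta_{x_0},(H_0-z)^{-1}\delta_{x_0}\rangle=\int_{\T^d}(h_0(\xi)-z)^{-1}\,d\xi$. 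For $z=-\e+i\eta$, letting $\eta\dto 0$ the real part tends to $\int_{\T^d}(h_0(\xi)+\e)^{-1}\,d\xi$, which diverges as $\e\dto 0$ when $d=1,2$ because $h_0(\xi)\asymp\operatorname{dist}(\xi,\Gamma)^{2}$ near $\Gamma$ (the one--dimensional integral $\int(\xi^{2}+\e)^{-1}d\xi\asymp\e^{-1/2}$, resp.\ $\int_{|\xi|\le\d}(|\xi|^{2}+\e)^{-1}d\xi\asymp|\log\e|$); this gives the failure of (\ref{BS}). For the eigenvalue claim write $V=-|V|$. For $E<0$ the operator $K(E)=|V|^{1/2}(H_0-E)^{-1}|V|^{1/2}$ is positive and \emph{compact} (since $|V|^{1/2}$ vanishes at infinity, $M_{|V|^{1/2}}AM_{|V|^{1/2}}$ is a norm limit of finite--rank operators for any bounded $A$), with $\|K(E)\|\le\|V\|_\infty/|E|\to0$ as $E\to-\infty$ and, by the computation above, $\|K(E)\|\ge\langle\delta_{x_0},K(E)\delta_{x_0}\rangle=|V(x_0)|\int(h_0+|E|)^{-1}\to\infty$ as $E\uparrow0$. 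Since $E\mapsto K(E)$ is norm continuous, the intermediate value theorem gives, for each $\l>0$, some $E_\l\in(-\infty,0)$ with $\|K(E_\l)\|=\l^{-1}$; then $\l^{-1}\in\sigma_p(K(E_\l))$, so by the Birman--Schwinger principle $E_\l$ is an eigenvalue of $H_0-\l|V|$.

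\emph{Part (ii).} The essential point is that on $\operatorname{supp}\chi$ the characteristic variety $\{h_0=4\}$ is \emph{flat}: from $h_0(\xi)-4=-4\cos(\pi(\xi_1+\xi_2))\cos(\pi(\xi_1-\xi_2))$ this set is a union of straight lines, and near each of the four points of $\Sigma=\{\xi_1,\xi_2\in\{1/4,3/4\}\}$ (all of which lie on it, away from $\Gamma$) it is a single line segment transverse to which $h_0-4$ vanishes linearly. Hence the convolution kernel $\kappa_z(x)=\int_{\T^2}e^{2\pi i x\cdot\xi}\chi(\xi)(h_0(\xi)-z)^{-1}d\xi$ of $\chi(D)(H_0-z)^{-1}$ has a pointwise limit $\kappa_0(x)=\lim_{\e\dto0}\kappa_{4+i\e}(x)$ (Plemelj's formula in the normal coordinate), and I claim $\kappa_0$ does not vanish at infinity along the diagonal. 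Indeed, in the coordinates $u=\xi_1+\xi_2-\tfrac12$, $v=\xi_1-\xi_2$ near $(\tfrac14,\tfrac14)$ one has $h_0-4=4\sin(\pi u)\cos(\pi v)$ and $e^{2\pi i(n,n)\cdot\xi}=(-1)^{n}e^{2\pi i n u}$; integrating first in $v$ and then using $\tfrac1{\sin\pi u-i0}=\mathrm{p.v.}\tfrac1{\sin\pi u}+i\,\d(u)$ near $u=0$, the contribution of this point to $\kappa_0((n,n))$ equals $(-1)^{n}A_1+O(|n|^{-\infty})$ as $n\to+\infty$ with $A_1=\tfrac{i}{4}\int\chi(0,v)(\cos\pi v)^{-1}dv\ne0$; the point $(\tfrac34,\tfrac34)$ contributes $(-1)^{n}A_2+O(|n|^{-\infty})$ as $n\to-\infty$ with $A_2\ne0$ of the same form; and the remaining two points, whose segments are conormal to a direction other than $(1,1)$, contribute $O(|n|^{-\infty})$ because the relevant oscillatory integral has a smooth amplitude in its single surviving phase variable. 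Thus $|\kappa_0((n,n))|$ is bounded below for $|n|$ large, which yields both conclusions. First, $\kappa_0\notin l^{q}(\ze^2)$ for every $q<\infty$, so $\|\chi(D)(H_0-z)^{-1}\|_{B(l^p,l^q)}\ge\|\chi(D)(H_0-z)^{-1}\delta_0\|_{l^q}=\|\kappa_z\|_{l^q}$ is unbounded as $z\to4$ by Fatou. Second, taking $w$ supported on the diagonal with $w((n,n))=|n|^{-1/2}$ (so $w\in l^{2,\infty}(\ze^2)\setminus l^2(\ze^2)$), the operator $w\chi(D)(H_0-z)^{-1}w$ has range in $l^{2}$ of the diagonal and acts there as $M_{|\cdot|^{-1/2}}C_{z}M_{|\cdot|^{-1/2}}$, with $C_{z}$ the convolution on $l^{2}(\ze)$ by $k\mapsto\kappa_z((k,k))$; testing against $f_N(n)=(-1)^{n}|n|^{-1/2}\mathbf 1_{\{1\le n\le N\}}$ one finds, in the limit $z\to4$, a quadratic form of modulus $\gtrsim(\log N)^{2}$ while $\|f_N\|_{l^2}^{2}\asymp\log N$, so $\sup_{z}\|w\chi(D)(H_0-z)^{-1}w\|\gtrsim\sup_N\log N=\infty$.

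\emph{Part (iv).} Use a coherent state at the elliptic threshold $0$. With $\psi\in C_c^{\infty}(\re^d)$ supported in $\{\tfrac12<|y|<2\}$ set $u_R(x)=R^{-d/2}\psi(x/R)$, so $\|u_R\|_{l^2}\to\|\psi\|_{L^2}$, and $g_R=|V|^{1/2}u_R=R^{-\alpha/2}\tilde u_R$ where $\tilde u_R(x)=(R^{-1}+|x|/R)^{-\alpha/2}\psi(x/R)$ is a bump at scale $R$ converging (with profile $|y|^{-\alpha/2}\psi(y)$) as $R\to\infty$. Choosing $z=-iR^{-2}\in\co\setminus\re$, the contribution of $\{|\xi|\ge\tfrac14\}$ to $\int_{\T^d}|\widehat{g_R}|^{2}(h_0-z)^{-1}$ is negligible (there $h_0\ge c_0>0$), and rescaling $\zeta=R\xi$ near $\xi=0$, using $R^{2}h_0(\zeta/R)\to4\pi^{2}|\zeta|^{2}$ and Poisson summation, gives
\[
\langle u_R,|V|^{1/2}(H_0-z)^{-1}|V|^{1/2}u_R\rangle=\langle g_R,(H_0-z)^{-1}g_R\rangle=R^{\,2-\alpha}\bigl(I_\infty+o(1)\bigr),\quad I_\infty=\int_{\re^d}\frac{|\widehat{\tilde u_\infty}(\zeta)|^{2}}{4\pi^{2}|\zeta|^{2}+i}\,d\zeta\ne0
\]
(the real part of $I_\infty$ is a strictly positive integral). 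Since $\alpha<2$, letting $R\to\infty$ shows (\ref{BS}) fails. (Only $d\ge3$ is needed here; for $d=1,2$ the stronger (i) applies.)

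\emph{Part (iii), and the main obstacle.} This is the part where the discrete theory genuinely departs from the continuous one, and it is the hardest. The mechanism is the \emph{hyperbolic thresholds}: for $\xi_*\in\Gamma$ with $h_0(\xi_*)\in(0,4d)$ one has $h_0(\xi_*+\eta)-h_0(\xi_*)=4\pi^{2}q(\eta)+O(|\eta|^{4})$ with $q$ a nondegenerate \emph{indefinite} quadratic form, so the characteristic variety is a cone (and a one--sheeted hyperboloid, with mixed--signature second fundamental form, at nearby energies). Via the standard $TT^{*}$/restriction reduction, the uniform bound (\ref{BS}) for $V\in l^{q,\infty}$ is governed by the $l^{p}\to l^{p'}$ mapping properties (with $\tfrac1p=\tfrac12+\tfrac1{2q}$) of the ultrahyperbolic resolvent $(q(D)-\lambda-i0)^{-1}$ as $\lambda\to0$, whose sharp range is strictly smaller than for $q=|\eta|^{2}$. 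The construction I would carry out: for a sequence of scales $\d\dto 0$ take wave packets $u_\d$ whose Fourier support is a Knapp plank adapted to the cone (of dimensions $1\times\d\times\dots\times\d\times\d^{2}$), so the dual plank $\mathcal P_\d\subset\ze^d$ has dimensions $1\times\d^{-1}\times\dots\times\d^{-1}\times\d^{-2}$; let $V$ be (essentially) $v\,\mathbf 1_{\mathcal P_\d}$ with $v\sim\d^{d/q}$ — for a single $V$, a lacunary sum of such blocks placed far apart — and choose $z=h_0(\xi_*)+i\d^{2}$, for which $\langle u_\d,(H_0-z)^{-1}u_\d\rangle\gtrsim\d^{-2}\|u_\d\|^{2}$; the lower bound above then gives $\||V|^{1/2}(H_0-z)^{-1}|V|^{1/2}\|\gtrsim\d^{d/q-2}$, which diverges precisely when $q>d/2$, and since $(d+2)/3\ge d/2$ for $d=3,4$ this already settles (iii) in those dimensions. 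The main obstacle is $d\ge5$, where $(d+2)/3<d/2$: a single Knapp plank is optimal among single--block configurations, so one must exploit the ruled structure of the cone/hyperboloid (equivalently, the precise conic singularity of the kernel of $(q(D)-i0)^{-1}$ along the light cone) through a bush/Kakeya arrangement of many planks sharing a common location, and it is the combinatorial optimization of that configuration that should produce the sharp exponent $(d+2)/3$ and, in particular, a counterexample in $l^{d/2,\infty}$.
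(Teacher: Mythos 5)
Parts (i), (ii) and (iv) of your proposal are essentially sound. (i) is the paper's own Birman--Schwinger argument (the paper tests with $|V|^{1/2}\psi$ for a finitely supported $\psi\ge0$ instead of $\delta_{x_0}$; same computation). (ii) rests, as in the paper, on the flatness of $\{h_0=4\}$ near $\xi_1+\xi_2=1/2$; your route via the convolution structure along the diagonal and the test sequence $f_N$ is a legitimate variant of the paper's, which instead takes $w=\jap{x_1+x_2}^{-1/2}\jap{x_1-x_2}^{-1}$ and exhibits a single $u$ with $w\chi(D)\delta(H_0-4)wu\sim\jap{x_1+x_2}^{-1/2}\jap{x_1-x_2}^{-\infty}\notin l^2$. (Replace $|n|^{-1/2}$ by $\jap{n}^{-1/2}$, and you still owe a check that the one-sided limits $A_1,A_2$ have imaginary parts of the same sign so the two half-plane contributions to the $(\log N)^2$ term do not cancel.) (iv) is a lattice version of the scaling argument the paper itself records in Remark \ref{sculrem}; the paper's proof instead works on the Fourier side with the fixed singular function $\chi\,|y|^{-(d-2)/2}$ in Morse coordinates at the minimum of $h_0$. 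Either route works.

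Part (iii) is where the proposal genuinely fails, and for an identifiable reason: you have located the wrong piece of geometry. The paper's counterexample lives at the \emph{regular} energy $2d$ and the \emph{non-critical} point $\xi_0=(1/4,\dots,1/4)$, where $\nabla h_0\neq0$ but $\partial_{\xi_j}^2\bigl(4\sin^2\pi\xi_j\bigr)=8\pi^2\cos2\pi\xi_j$ vanishes for every $j$; hence $h_0(\xi)-2d=4\sum_j(\xi_j-1/4)+O(|\xi-\xi_0|^3)$ and the Fermi surface is flat to \emph{second} order there. A Knapp plank of dimensions $\e\times\dots\times\e\times a\e^{3}$ (not $\e^{2}$) then fits in an $O(\e^{3})$-neighbourhood of $\{h_0=2d\}$; this cubic degeneracy is exactly what converts the continuum exponent $d/2$ into $(d+2)/3$. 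Concretely, the paper's $\f_\e$ satisfies $(\f_\e,w\delta(H_0-2d)w\f_\e)\gtrsim\e^{d-1}$ while $\|\f_\e\|^2\lesssim\e^{(d+2)(1-2/p)}$, forcing $p\le2(d+2)/3$ for the single fixed potential $V=w_p^2$ with $w_p=\jap{x_d}^{-1/p}\prod_{j<d}\jap{x_j-x_d}^{-1/p}$. Your hyperbolic-threshold mechanism cannot reach this: you concede that a single cone-adapted plank only yields divergence for $q>d/2$, which for $d\ge5$ is strictly weaker than the claimed $q>(d+2)/3$ and in particular does not produce the $l^{d/2,\infty}$ counterexample; the ``bush/Kakeya arrangement'' you invoke to close the gap is not carried out and is not what is needed. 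Moreover, even your single-plank lower bound $\langle u_\d,(H_0-z)^{-1}u_\d\rangle\gtrsim\d^{-2}\|u_\d\|^2$ is unjustified at a hyperbolic point: on a plank straddling the cone the symbol $h_0-h_0(\xi_*)$ changes sign, so the real part of $(h_0-\lambda-i\d^2)^{-1}$ is not single-signed and the quadratic form may cancel. Part (iii) should therefore be rebuilt around the flat points of the Fermi surface, not the critical points of $h_0$.
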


\begin{rem} 
Theorem \ref{mainne} $(i)$ for $d=2$ was conjectured in \cite{SK}.
\end{rem}
\begin{rem}
Theorem \ref{mainne} $(i)$ and $(iv)$ hold even when $H_0$ is a Fourier multiplier $\Fd^{-1}e\Fd$ with a Morse function $e$. We expect that $(iii)$ also holds for such operators, however we have no proof for the moment.
\end{rem}

\begin{rem}
The left hand side of (\ref{ii}) is finite for the continuous Schr\"odinger operator $H_0=-\Delta$ on $L^2(\re^2)$, $w\in L^{q,\infty}(\re^2)$ ($2<q\leq 3$), and $\chi\in C_c^{\infty}(\re^d)$ which is supported away from the threshold $0$. For a proof, we use \cite[Theorem 5.8]{R} (uniform resolvent estimates for the two dimensional case), a real interpolation argument, and H\"older's inequality as in the proof of \cite[Corollary 2.3]{Mi}. 
\end{rem}

There are various works concerning bounds of Birman-Schwinger operators for the continuous Schr\"odinger operators (see \cite{KY}, \cite{KRS}, \cite{RS}).  For $H_0=-\Delta$ on $L^2(\re^d)$ with $d\geq 3$, it is known that (\ref{BS}) holds for $V\in L^{\frac{d}{2},\infty}(\re^d)$ (see \cite{KRS}). Moreover, this result is sharp in the sense that (\ref{BS}) does not hold for $V(x)=|x|^{-\frac{d}{q}}\in L^{q,\infty}(\re^d)$ if $q\neq \frac{d}{2}$. In fact, by scaling
\begin{align*}
\||x|^{-\frac{d}{2q}}(-\Delta-z)^{-1}|x|^{-\frac{d}{2q}}\|_{B(\H)}=\e^{2-\frac{d}{q}}\||x|^{-\frac{d}{2q}}(-\Delta-\e^2z)^{-1}|x|^{-\frac{d}{2q}}\|_{B(\H)}
\end{align*}
holds and we consider the limits as $\e\to 0$ and $\e\to \infty$. Cuenin's examples in \cite[Remark 1.9]{C} which are based on the examples by Frank and Simon (\cite{FS}, see also \cite{IJ}) show that there exists a sequence of real-valued potentials $V_n$ which satisfy $|V_n(x)|\leq C(n+|x|)^{-1}$ and induce an embedded eigenvalue of $H_0+V_n$. 
 
We compare our results with the continuous case. For uniformly decaying potentials $V(x)=(1+|x|)^{-\a}$, the range of $\a$ where (\ref{BS}) holds is the same as in the case of continuous Schr\"odinger operators. However, for non-uniformly decaying potentials, for example $V\in l^{p,\infty}(\ze^d)$, the classes of potentials where (\ref{BS}) holds differ between the discrete case and the continuous case. It seems that this is a consequence of the anisotropy of the discrete Laplacian.

Our paper is organized as follows. In section \ref{secul}, in order to study properties of the resolvent of $H_0$ near $\Gamma$, we investigate properties of the ultrahyperbolic operators. In section \ref{secpo}, we prove our positive results Theorems \ref{mainpo}, \ref{bv} and Corollary \ref{cor}. In section \ref{secne}, we give the proofs of our negative results Theorem \ref{mainne}.

We use the following notations throughout this paper. For Banach spaces $X$ and $Y$, $B(X,Y)$ denotes a set of all bounded linear operators from $X$ to $Y$. We denote the norm of a Banach space $X$ by $\|\cdot\|_X$. We also denote $(\cdot,\cdot)_X$ by the inner product of a Hilbert space $X$. Moreover, we write $B(X)=B(X,X)$. We denote $\jap{x}=(1+|x|^2)^{\frac{1}{2}}$ and $D_x=(2\pi i)^{-1}\nabla_x$ for $x\in \re^d$. A symbol $\mathcal{F}$ denotes the Fourier transform on $\re^d$:
\begin{align*}
\mathcal{F}u(\x)=\int_{\re^d}e^{-2\pi ix\cdot \x}u(x)dx,\quad \x\in\re^d.
\end{align*}
For $\chi \in C^{\infty}(\T^d)$ or $\chi\in C^{\infty}_c(\re^d)$, we denote $\chi(D)=\mathcal{F}_d^{-1}\chi\mathcal{F}_d$ or $\chi(D)=\mathcal{F}^{-1}\chi\mathcal{F}$ respectively.
For $h\in C^{\infty}(\T^d)$ or $h\in C^{\infty}(\re^d)$, $a\in\re$ and a compactly supported smooth function $f$, if $\nabla h\neq 0$ on $\{h(\x)=a\}\cap \supp f$, set
\begin{align*}
\d(h(D)-a)f(x)=\int_{\{h=a\}}\hat{f}(\x)e^{2\pi i x\cdot \x}\frac{d\s(\x)}{|\nabla h(\x)|},
\end{align*}
where $d\s$ is the induced surface measure.

We give a useful formula. Let $N\subset \T^d$ or $N\subset \re^d$ be a submanifold which has the following graph representation: 
\begin{align*}
N=\{\x \,|\, \x_1=f(\x')\}
\end{align*}
where we write $\x=(\x_1,\x')$ and $f$ is a $d-1$-variable smooth function. Then we have
\begin{align}\label{surf}
d\s(\x)=\sqrt{1+|\nabla f(\x')|^2}d\x'.
\end{align}

\noindent
\textbf{Acknowledgment.}  
KT was supported by JSPS Research Fellowship for Young Scientists, KAKENHI Grant Number 17J04478 and the program FMSP at the Graduate School of Mathematics Sciences, the University of Tokyo. YT was supported by JSPS Research Fellowship for Young Scientists, KAKENHI Grant Number 17J05051. The authors would like to thank their supervisor Shu Nakamura for encouraging to write this paper and helpful discussions about the Mourre theory for ultrahyperbolic operators.  The authors also would like to thank Evgeny Korotyaev for informing  the paper \cite{KM}.  KT would like to thank Haruya Mizutani for answering many questions about uniform resolvent estimates.

\section{Ultrahyperbolic operators}\label{secul}

Let $d\geq 2$, $0\leq k \leq d$ and let 
\begin{align*}
p(\x)=p_k(\x)=\x_1^2+...+\x_k^2-\x_{k+1}^2-...-\x_d^2
\end{align*}
for $\x\in \re^d$.

\begin{defn}
A differential operator $P$ is called an ultrahyperbolic operator with index $0\leq k \leq d$ if $P$ has the following form:
\begin{align*}
P=\sum_{j=1}^kD_{x_j}^2-\sum_{j=k+1}^dD_{x_j}^2=-\frac{1}{(2\pi)^2}(\sum_{j=1}^k\pa_{x_j}^2-\sum_{j=k+1}^d\pa_{x_j}^2).
\end{align*}
Note that $P=\mathcal{F}^{-1}p\mathcal{F}$.
\end{defn}

In this section, we study resolvent bounds of the ultrahyperbolic operators. Since $h_0$ is a Morse function on $\T^d$, near each critical value $q\in \Gamma$, $h_0$ can be expanded to as the following:
\begin{align*}
h_0(\x)=h_0(q)+(2\pi)^2(\sum_{j=1}^k\y_j^2-\sum_{j=k+1}^d\y_j^2)+O(|\y|^3),
\end{align*}
where $\y=\x-q$. Thus we study the ultrahyperbolic operators for analyzing the resolvent of the discrete Schr\"odinger operator near the thresholds.

\subsection{Limiting absorption principle for ultrahyperbolic operators}

In this subsection, we state a limiting absorption principle for the ultrahyperbolic operators. 
Let $P$ be the ultrahyperbolic operator with index $k$. We define
\begin{align*}
A=\tilde{x}\cdot D_x(I-(2\pi)^{-2}\Delta)^{-1}+(I-(2\pi)^{-2}\Delta)^{-1}D_x\cdot \tilde{x}
\end{align*}
on $C_c^{\infty}(\re^d)$, where $\tilde{x}=(x_1,...,x_k,-x_{k+1},...,-x_d)$. Then, it follows that $P$ and $A$ are essentially self-adjoint on $C_c^{\infty}(\re^d)$ and we also denote the unique self-adjoint extensions by $P$ and $A$ respectively. In fact, for the essential self-adjointness of $P$ it is enough to prove the essential self-adjointness of the multiplication operator $p(\x)$ on $L^2(\re^d)$ by the Fourier transform. However, this is shown since $(p(\x)\pm i)u=0$ and $u\in L^2(\re^d)$ imply $u=0$. For the essential self-adjointness of $A$, we employ Nelson's commutator theorem (see \cite[Theorem X.36]{RS}) with a conjugate operator $-\Delta+|x|^2+1$. 

By a simple calculation, we have
\begin{align*}
[P,iA]=-\pi^{-2}\Delta(I-(2\pi)^{-2}\Delta)^{-1}=\mathcal{F}^{-1}\left( \frac{4|\x|^2}{1+|\x|^2}\mathcal{F} \right).
\end{align*}
In the following, we see that $[P,iA]$ satisfy the Mourre estimate except at $0$.  Note that $E_{I}(P)=\mathcal{F}^{-1}\chi_{I}\circ p\mathcal{F}$, where $E_{I}(P)$ is the spectral projection of $P$ to $I$ and $\chi_I$ is the characteristic function of $I\subset \re$. Fix $I\Subset \re\setminus \{0\}$ and set $a=\inf \{|\l|\,|\, \l\in I\}>0$. Then for $\x\in \supp (\chi_{I}(p(\cdot)))$, we learn
\begin{align*}
|\x|^2=\sum_{j=1}^k|\x_j|^2+\sum_{j=k+1}^d|\x_j|^2\geq a.
\end{align*}
Thus we have
\begin{align*}
\chi_{I}(p(\x))\frac{4|\x|^2}{1+|\x|^2}\chi_{I}(p(\x))\geq \frac{4a}{1+a}\chi_{I}(p(\x))
\end{align*}
and hence
\begin{align*}
E_{I}(P)[P,iA]E_{I}(P)\geq \frac{4a}{1+a}E_{I}(P).
\end{align*}
Moreover since $[P,iA]$ and $[[P,iA],iA]$ are bounded operators, it follows that $P\in C^2(A)$. Thus by the standard Mourre theory (\cite{M}), we have the following proposition.

\begin{prop}\label{lapult}
Let $I\Subset \re\setminus \{0\}$ be a bounded interval and $s>1/2$. Then
\begin{align}\label{la1}
\sup_{z\in I_{\pm}}\|\jap{A}^{-s}(P-z)^{-1}\jap{A}^{-s}\|_{B(L^2(\re^d))}<\infty,
\end{align}
where $I_{\pm}=\{z\in\mathbb{C}\,|\, \Re z\in I,\, \pm\Im z>0\}$. Moreover, the limits
\begin{align}\label{la2}
\jap{A}^{-s}(P-\l\pm i0)^{-1}\jap{A}^{-s}=\lim_{\e\to +0}\jap{A}^{-s}(P-\l\pm i\e)^{-1}\jap{A}^{-s}
\end{align}
exist uniformly in $\l\in I$.
\end{prop}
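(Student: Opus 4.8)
The plan is to read Proposition \ref{lapult} off from the abstract Mourre theory, since every hypothesis has already been verified in the discussion preceding the statement. First I would assemble the ingredients: $P$ and $A$ are self-adjoint on their natural domains (the latter by Nelson's commutator theorem with the comparison operator $-\Delta+|x|^2+1$), the form commutators $[P,iA]$ and $[[P,iA],iA]$ extend to bounded operators, so $P\in C^2(A)$ and in particular $P$ is of class $C^{1,1}(A)$; and, for any compact $I\Subset \re\setminus\{0\}$ with $a\coloneqq \dist(0,I)>0$, one has the \emph{strict} Mourre estimate $E_I(P)[P,iA]E_I(P)\geq \frac{4a}{1+a}E_I(P)$ with no compact remainder, because $|\x|^2\geq a$ pointwise on $\supp(\chi_I\circ p)$. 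Note also that $P$ has no eigenvalue inside $I$ — one reads off from the multiplication operator $p(\x)$ that $P$ has purely absolutely continuous spectrum away from $0$ — so the Mourre estimate needs no spectral-projection correction.

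Second, I would invoke the standard output of the Mourre theory (\cite{M}) under a strict Mourre estimate together with $C^{1,1}(A)$ regularity: for $s>1/2$, the weighted resolvent $\jap{A}^{-s}(P-z)^{-1}\jap{A}^{-s}$ is bounded uniformly for $z$ in a slab $\{\Re z\in I,\ 0<\pm\Im z\leq 1\}$, and it extends Hölder-continuously up to the real axis from each half-plane. Since the trivial bound $\|\jap{A}^{-s}(P-z)^{-1}\jap{A}^{-s}\|_{B(L^2(\re^d))}\leq |\Im z|^{-1}$ (using $\|\jap{A}^{-s}\|\leq 1$) handles the region $|\Im z|>1$, this yields the uniform estimate (\ref{la1}) over all of $I_\pm$; the existence of the boundary values and their uniformity in $\l\in I$ in (\ref{la2}) is precisely the Hölder continuity statement restricted to $\Re z\in I$.

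I do not expect a serious obstacle here; the one point deserving attention is that the Mourre estimate is genuinely strict — the compact remainder is exactly $0$ — which is what makes the resolvent bound uniform up to the endpoints of $I$, and this is exactly what the pointwise inequality $\chi_I(p(\x))\frac{4|\x|^2}{1+|\x|^2}\chi_I(p(\x))\geq \frac{4a}{1+a}\chi_I(p(\x))$ delivers. Everything else is a black-box application of the abstract theory, so no further computation is required.
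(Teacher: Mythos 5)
Your proposal is correct and follows the same route as the paper: the text preceding the proposition verifies exactly the ingredients you list (essential self-adjointness of $P$ and $A$ via Nelson's commutator theorem, boundedness of $[P,iA]$ and $[[P,iA],iA]$ giving $P\in C^2(A)$, and the strict Mourre estimate $E_I(P)[P,iA]E_I(P)\geq \tfrac{4a}{1+a}E_I(P)$ from the pointwise bound $|\x|^2\geq a$ on $\supp(\chi_I\circ p)$), and then cites the standard Mourre theory \cite{M} as a black box. Your additional remarks (absence of eigenvalues in $I$, the trivial bound for $|\Im z|>1$) are harmless refinements of the same argument.
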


\begin{rem}\label{laprem}
Since $\jap{A}^s(P-i)^{-1}\jap{x}^{-s}$ is a bounded operator, we can replace $\jap{A}^{-s}$ in $($\ref{la1}$)$ and $($\ref{la2}$)$ by $\jap{x}^{-s}$. 
\end{rem}

\begin{rem}
The Proposition \ref{lapult} is possibly well-known. However, we cannot find a suitable reference and we give a self-contained proof.
\end{rem}

\begin{rem}
By using a scaling argument and Proposition \ref{lapult}, we have a uniform estimate of the high energy limit
\begin{align*}
\sup_{|z|\geq 1}|z|^{1-s}\|\jap{x}^{-s}(P-z)^{-1}\jap{x}^{-s}\|_{B(L^2(\re^d))}<\infty
\end{align*}
for $s>\frac{1}{2}$. 
\end{rem}

\subsection{Uniform resolvent estimates for ultrahyperbolic operators}
In this subsection, we assume $d\geq 3$.

\begin{prop}\label{lapconti}
Let $P$ be an ultrahyperbolic operator. For $\a,\b>\frac{1}{2}+\frac{1}{2(d-1)}$ with $\a+\b\geq 2$, 
\begin{align*}
\sup_{z\in \mathbb{C}\setminus \re}\|\jap{x}^{-\a}(P-z)^{-1}\jap{x}^{-\b}\|_{B(L^2(\re^d))}<\infty.
\end{align*}
\end{prop}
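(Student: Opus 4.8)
The plan is to reduce the ultrahyperbolic resolvent estimate to the classical Kenig--Ruiz--Sogge / Stein--Tomas type inequality for the Laplacian by a change of variables, and then to upgrade from Lebesgue spaces to the weighted $L^2$ statement via a duality and interpolation argument. First I would split the operator according to frequency localization. Writing $P=\mathcal{F}^{-1}p\,\mathcal{F}$ with $p(\x)=|\x'|^2-|\x''|^2$ where $\x=(\x',\x'')\in\re^k\times\re^{d-k}$, I would introduce a smooth dyadic partition $1=\sum_{j\in\ze}\f_j(\x)$ with $\f_j$ supported in $\{|\x|\sim 2^j\}$, so that $(P-z)^{-1}=\sum_j \f_j(D)(P-z)^{-1}$; a scaling argument (the symbol $p$ is homogeneous of degree $2$) reduces each dyadic piece to the unit-frequency annulus, picking up the factor $2^{j(-2+\a+\b)}$ from the weights $\jap{x}^{-\a},\jap{x}^{-\b}$ under $x\mapsto 2^{-j}x$. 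Since we only assume $\a+\b\ge 2$ the high-frequency sum $j\ge 0$ is summable (strictly when $\a+\b>2$; the borderline $\a+\b=2$ needs the $\ell^2$-valued Cotlar--Stein or square-function version, or a small additional $\e$ of decay absorbed because $\a,\b>1/2$), and the low-frequency sum $j<0$ is handled by a separate, easier estimate since $(P-z)^{-1}$ on low frequencies is controlled by $|{\rm Im}\,z|^{-1}$ only far from the characteristic cone, so one instead uses the $\jap{A}^{-s}$ bound of Proposition \ref{lapult} near $p=0$, rescaled.

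Next, for the fixed unit-frequency annulus, I would prove the Lebesgue-space bound $\|\f_0(D)(P-z)^{-1}\|_{B(L^{p'},L^p)}\le C$ uniformly in $z$, with $p=\frac{2(d+1)}{d-1}$ being the Stein--Tomas exponent for a hypersurface of nonvanishing Gaussian curvature. The point is that the level sets $\{p(\x)=\l\}$, intersected with the annulus $|\x|\sim 1$, are smooth hypersurfaces with nonvanishing Gaussian curvature (they are hyperboloids of one or two sheets, curved away from the lightcone), so the classical oscillatory-integral / restriction machinery applies uniformly in $\l$ and in the sign of ${\rm Im}\,z$. Concretely I would write $(p-z)^{-1}=\text{p.v.}\,(p-\l)^{-1}\mp i\pi\d(p-\l)$ in the limit, bound the delta-measure term by the adjoint-restriction estimate on the (curved) level hypersurface, and bound the principal-value term by a $TT^*$ argument combined with the same curvature input — this is exactly the Kenig--Ruiz--Sogge argument, and the only new feature is checking that the curvature constants stay uniform as $\l$ ranges over a bounded interval including $0$ (near $\l=0$ the surface degenerates into the cone, but the portion meeting $|\x|\sim 1$ stays uniformly curved in the directions transverse to the cone's rulings; this is where one uses that the signature is $(k,d-k)$ with both $k\ge 1$ and $d-k\ge 1$ — wait, when $k=0$ or $k=d$ the operator is elliptic and the standard result applies directly).

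Then I would pass from $L^{p'}\to L^p$ to $\jap{x}^{-\a}\!\cdot\!\jap{x}^{-\b}$ in $B(L^2)$. By Hölder, $\|\jap{x}^{-\a}u\|_{L^{p'}}\lesssim \|u\|_{L^2}$ exactly when $\a>\frac{d}{2}-\frac{d}{p}=\frac{d}{2}-\frac{d(d-1)}{2(d+1)}=\frac{d}{d+1}$, and dually $\|\jap{x}^{-\b}v\|_{L^{p'}}\lesssim\|v\|_{L^2}$ for $\b>\frac{d}{d+1}$; composing gives the claimed estimate for $\a,\b>\frac{d}{d+1}$. To reach the stated range $\a,\b>\frac12+\frac1{2(d-1)}$ (which is larger than $\frac{d}{d+1}$ for $d\ge 3$, so actually weaker — one should double-check which threshold is being claimed; the paper's threshold $\frac12+\frac1{2(d-1)}=\frac{d}{2(d-1)}$ is indeed $>\frac{d}{d+1}$), one interpolates the $L^{p'}\to L^p$ bound with the trivial $L^2\to L^2$ bound away from the spectrum and the $\jap{x}^{-s}$, $s>1/2$, bound near the spectrum coming from Proposition \ref{lapult} and Remark \ref{laprem}; a real-interpolation (Lorentz-space) argument in the spirit of the references then yields the full weighted-$L^2$ statement, the endpoint being reached because the weights sit in the open range. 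The main obstacle I anticipate is the borderline case $\a+\b=2$ together with the lowest frequencies: summing the dyadic pieces there requires either almost-orthogonality (the pieces $\f_j(D)(P-z)^{-1}\jap{x}^{-\b}$ are not literally orthogonal, so one needs a Cotlar--Stein estimate controlling $\|\f_j(D)(P-z)^{-1}\f_{j'}(D)\|$ by a rapidly decaying function of $|j-j'|$) or a Littlewood--Paley square-function bound, and verifying the uniformity of the curvature/oscillatory-integral constants as the level set degenerates toward the lightcone near frequency zero is the delicate geometric point that the rest of the argument rests on.
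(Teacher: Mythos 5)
Your overall architecture (a uniform $L^p$--$L^q$ resolvent bound for $P$, then H\"older's inequality with $\jap{x}^{-\a}\in L^{d/\a,\infty}$ and real interpolation to handle the borderline $\a+\b=2$) is the same as the paper's, which simply quotes the uniform Sobolev inequality for non-elliptic second order operators of Jeong--Kwon--Lee \cite[Theorem 1.1]{JKL} on the line $\frac1p-\frac1q=\frac2d$, $\frac{2d(d-1)}{d^2+2d-4}<p<\frac{2(d-1)}{d}$, upgrades it to Lorentz spaces, and applies H\"older. However, the part of your argument that replaces that citation contains a genuine gap. The key claim that the level sets $\{p(\x)=\l\}\cap\{|\x|\sim1\}$ have \emph{Gaussian} curvature bounded below uniformly in $\l$ is false: as $\l\to0$ these hyperboloids converge to the light cone, on which one principal curvature vanishes identically (only $d-2$ principal curvatures survive), so the Gaussian curvature of $\{p=\l\}$ at points with $|\x|\sim1$ tends to $0$ and the Stein--Tomas constant at the exponent $\frac{2(d+1)}{d-1}$ blows up. This degeneration is not a technicality to be "checked"; it is precisely the reason the admissible range for non-elliptic operators in \cite{JKL} is strictly smaller than the Kenig--Ruiz--Sogge range for $-\Delta$, and handling it requires a decomposition adapted to the distance to the cone that exploits the remaining $d-2$ curvatures --- exactly the content of the result you would be re-proving.

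There is also an arithmetic error with consequences for the range you obtain. You compare the thresholds the wrong way around: $\frac12+\frac1{2(d-1)}=\frac{d}{2(d-1)}$ equals $\frac{d}{d+1}$ when $d=3$ and is \emph{strictly smaller} than $\frac{d}{d+1}$ for $d\geq4$, so the proposition's hypothesis on $\a,\b$ is \emph{weaker} than what your Stein--Tomas/H\"older step delivers; your argument, even if the curvature issue were repaired, would not cover the stated range for $d\geq4$. The proposed repair --- interpolating with the Mourre bound of Proposition \ref{lapult} --- does not close this, since that bound is uniform only for $\Re z$ in compact subsets of $\re\setminus\{0\}$ and does not combine with the $L^p$ theory to produce the region $\a,\b>\frac{d}{2(d-1)}$, $\a+\b\geq2$. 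Finally, the dyadic-in-frequency decomposition with Cotlar--Stein at the endpoint $\a+\b=2$ is unnecessary: once the resolvent bound is stated in Lorentz spaces $L^{p,2}\to L^{q,2}$ (by real interpolation of the Lebesgue bounds), the endpoint is absorbed by H\"older's inequality in weak Lebesgue spaces, since $\jap{x}^{-\a}\in L^{d/\a,\infty}(\re^d)$; note also that $\jap{x}^{-\a}$ is not homogeneous, so the scaling factor $2^{j(-2+\a+\b)}$ you extract from the weights is not literally available.
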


\begin{proof}
This follows from $L^p$-$L^q$ resolvent estimates (see \cite[Theorem 1.1]{JKL}) and a real interpolation argument:
\begin{align*}
\sup_{z\in \mathbb{C}\setminus \re}\|(P-z)^{-1}\|_{B(L^{p,r}(\re^n), L^{q,r}(\re^n))}<\infty
\end{align*}
for 
\begin{align*}
\frac{1}{p}-\frac{1}{q}=\frac{2}{d},\quad \frac{2d(d-1)}{d^2+2d-4}<p<\frac{2(d-1)}{d}, \quad 1\leq r\leq \infty.
\end{align*}
By using H\"older's inequality, we can obtain the following: For $w_1\in L^{\frac{d}{s},\infty}(\re^d)$ and $w_2\in L^{\frac{d}{2-s},\infty}(\re^d)$ with $\frac{1}{2}+\frac{1}{2(d-1)}<s<\frac{3d-4}{2(d-1)}$,
\begin{align}\label{ul1}
\sup_{z\in \mathbb{C}\setminus \re}\|w_1(P-z)^{-1}w_2\|_{B(L^2(\re^d))}\leq C\|w_1\|_{L^{\frac{d}{s},\infty}(\re^d)}\|w_2\|_{L^{\frac{d}{2-s},\infty}(\re^d)}.
\end{align}
In particular, for $\a,\b>\frac{d}{2(d-1)}$ with $\a+\b\geq 2$, 
\begin{align}\label{ul2}
\sup_{z\in \mathbb{C}\setminus \re}\|\jap{x}^{-\a}(P-z)^{-1}\jap{x}^{-\b}\|_{B(L^2(\re^d))}\leq C_{\a\b} <\infty.
\end{align}

\end{proof}

\begin{rem}
In Appendix \ref{appa}, we give a self-contained proof of Proposition \ref{lapconti} with $\a=\b=1$.
\end{rem}

\begin{rem}
Note that if $P$ is elliptic (that is $k=0$ or $k=d$), (\ref{ul1}) holds for $\frac{1}{2}<s<\frac{3}{2}$ and (\ref{ul2}) holds for $\a, \b >\frac{1}{2}$ with $\a+\b\geq 2$ (see \cite[Corollary 2.3]{Mi}). 
\end{rem}

\begin{prop}\label{ulbv}
For $\e>0$, $\jap{x}^{-1-\e}(P-z)^{-1}\jap{x}^{-1-\e}$ is locally H\"older continuous on $B(L^2(\re^d))$ in $\mathbb{C}_{\mp}$. In particular, $\jap{x}^{-1-\e}(P-\l\pm i0)^{-1}\jap{x}^{-1-\e}$ exist in the operator norm topology of $B(L^2(\re^d))$ for $\l\in \re$.
\end{prop}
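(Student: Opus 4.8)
The plan is to derive Proposition~\ref{ulbv} from the limiting absorption principle away from the threshold (Proposition~\ref{lapult} together with Remark~\ref{laprem}) in the high-energy region, and from the uniform resolvent bounds of Proposition~\ref{lapconti} together with a compactness/Neumann-series argument near the threshold $\l=0$. Since the statement is local in $\l$, it suffices to treat two regimes separately: the region $|\l|\geq \d_0>0$, where the Mourre-type estimates already give H\"older continuity of $\jap{x}^{-s}(P-z)^{-1}\jap{x}^{-s}$ for $s>1/2$, and a neighborhood of $\l=0$, which is the genuinely new content here and uses the scaling structure of $P$.

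First I would record that for $|\Re z|$ bounded away from $0$, Proposition~\ref{lapult} and Remark~\ref{laprem} already yield that $\jap{x}^{-1-\e}(P-z)^{-1}\jap{x}^{-1-\e}$ extends H\"older continuously from $\mathbb{C}_{\mp}$ to $\overline{\mathbb{C}_{\mp}}$ locally in $\Re z$ (the standard Mourre theory under the $C^2(A)$ hypothesis gives local H\"older continuity of order up to $s-1/2$ of the boundary values; here $s=1+\e>1$ so we have genuine H\"older continuity, not merely existence of limits). So the whole problem is reduced to a fixed small neighborhood $\{|\l|<\d_0\}$ of the threshold. Near $\l=0$ I would exploit homogeneity: $P$ is homogeneous of degree $2$ under the dilation $(U_\tau u)(x)=\tau^{d/2}u(\tau x)$, namely $U_\tau^{-1}PU_\tau=\tau^2 P$, hence $(P-z)^{-1}=\tau^{-2}U_\tau^{-1}(P-\tau^{-2}z)^{-1}U_\tau$. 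The key estimate needed is a uniform bound, as $z$ ranges over a compact neighborhood of $0$ in $\mathbb{C}$, of $\jap{x}^{-1-\e}(P-z)^{-1}\jap{x}^{-1-\e}$ together with an equicontinuity statement in $z$; this is exactly where Proposition~\ref{lapconti} (with $\a=\b=1+\e>\tfrac12+\tfrac1{2(d-1)}$ and $\a+\b\geq 2$) is used, since that proposition gives the uniform-in-$z$ bound over all of $\mathbb{C}\setminus\re$, in particular on this neighborhood.

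To upgrade the uniform bound to H\"older continuity near $\l=0$, I would use the resolvent identity to write, for $z,z'$ in the small neighborhood,
\begin{align*}
\jap{x}^{-1-\e}(P-z)^{-1}\jap{x}^{-1-\e}-\jap{x}^{-1-\e}(P-z')^{-1}\jap{x}^{-1-\e}
=(z-z')\,\jap{x}^{-1-\e}(P-z)^{-1}\jap{x}^{-1-\e}\cdot\jap{x}^{1+\e}(P-z')^{-1}\jap{x}^{-1-\e},
\end{align*}
which is not directly useful because of the uncontrolled middle weight $\jap{x}^{1+\e}$; instead the standard trick is to split the weight $\jap{x}^{-1-\e}=\jap{x}^{-\s}\jap{x}^{-(1+\e-\s)}$ with $\tfrac12+\tfrac1{2(d-1)}<\s<1+\e$ and distribute, so that each factor of the resolvent is flanked by admissible weights as in \eqref{ul2}; one then iterates and interpolates (a Gr\"onwall/dyadic argument on $|z-z'|$, exactly as in Mourre theory) to get H\"older continuity of some positive exponent. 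An alternative, perhaps cleaner, route near $\l=0$: conjugate by $U_\tau$ with $\tau=|\l|^{-1/2}$ to reduce any $z$ with small $|z|$ to $z'$ on the unit circle $|z'|=1$, apply the already-established H\"older continuity there (from the Mourre region $|\Re z|\geq\text{const}$ combined with Proposition~\ref{lapconti} to handle $\Re z'$ near $0$ on $|z'|=1$ — but note $|z'|=1$ with small real part forces large imaginary part, which is the easy region), and track how the weights transform under $U_\tau$; this is where one pays the price $\jap{x}^{-1-\e}$ rather than $\jap{x}^{-1}$, since $U_\tau^{-1}\jap{x}^{-1-\e}U_\tau=\jap{\tau x}^{-1-\e}\leq C\tau^{-1-\e}\jap{x}^{-1-\e}$ for $\tau\geq 1$, and the exponent $1+\e>1$ is exactly what makes the $\tau$-powers combine with the $\tau^{-2}$ from homogeneity to stay bounded as $\l\to 0$.

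The main obstacle is the behavior at the threshold $\l=0$: there the Mourre estimate degenerates (the commutator $[P,iA]=\mathcal{F}^{-1}\frac{4|\x|^2}{1+|\x|^2}\mathcal{F}$ vanishes on $\{\x=0\}=\{p(\x)=0,\ |\x|=0\}$), so Proposition~\ref{lapult} says nothing, and one must genuinely invoke the global uniform resolvent bound of Proposition~\ref{lapconti} plus homogeneity. The delicate point is that $\{p=0\}$ passes through the origin where the level sets are singular (a cone), so one cannot use a naive stationary-phase/partition argument to localize; the scaling argument circumvents this but forces the strictly-subcritical weight $1+\e$ and a non-sharp H\"older exponent, which is why the proposition is stated with $\jap{x}^{-1-\e}$ and only asserts \emph{local} H\"older continuity. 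I expect the bookkeeping of weights through the dilation together with verifying equicontinuity uniformly up to $\l=0$ to be the most technical part; everything away from the threshold is routine Mourre theory.
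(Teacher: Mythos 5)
Your decomposition (Mourre theory away from $\l=0$, something else at the threshold) is a genuinely different strategy from the paper's, and the away-from-threshold part is fine; but the part you yourself identify as ``the genuinely new content'' --- H\"older continuity \emph{at} the threshold --- is not actually established, and that is a real gap. You correctly observe that the naive resolvent identity fails because of the uncontrolled middle weight $\jap{x}^{1+\e}$, but the repair you sketch does not work: splitting the outer weights $\jap{x}^{-1-\e}=\jap{x}^{-\s}\jap{x}^{-(1+\e-\s)}$ still leaves the product $(P-z)^{-1}(P-z')^{-1}$ with \emph{no} weight between the two resolvent factors, so neither factor is ``flanked by admissible weights'' and \eqref{ul2} cannot be applied to either one. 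The Gr\"onwall/dyadic iteration you invoke is the differential-inequality machinery of Mourre theory, and it is unavailable here precisely because the commutator estimate degenerates at $0$. The alternative dilation route also does not close: the inequality $\jap{\tau x}^{-1-\e}\leq C\tau^{-1-\e}\jap{x}^{-1-\e}$ for $\tau\geq 1$ is false (take $x=0$; in fact $\jap{\tau x}^{-s}\geq\tau^{-s}\jap{x}^{-s}$ for $\tau\geq1$, the reverse direction), and in any case scaling relates the \emph{norm} at energy $z$ to the norm at energy $\tau^{-2}z$ --- it gives no handle on the \emph{difference} of resolvents at two nearby spectral parameters, which is what H\"older continuity requires.

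The paper's proof (following \cite[Lemma 4.7]{RoS}) avoids all of this with a single time-decay argument that treats the threshold and the rest of the spectrum on the same footing. One writes $(P-z)^{-1}-(P-z')^{-1}=\int_0^\infty(e^{itz}-e^{itz'})e^{-itP}\,dt$ and measures the propagator in $B(L^{\frac{2d}{d+2}-\d},L^{\frac{2d}{d-2}+\d})$, where the dispersive estimate \eqref{disp} gives decay $t^{-1-\a_\d}$ with $\a_\d>0$; this is exactly why the weight must be $\jap{x}^{-1-\e}$ rather than $\jap{x}^{-1}$, since the strict inequality $\e>0$ is what allows the embeddings $L^{2}_{1+\e}\subset L^{\frac{2d}{d+2}-\d}$ and $L^{\frac{2d}{d-2}+\d}\subset L^{2}_{-1-\e}$ slightly off the critical exponents, pushing the decay rate above the integrability threshold $t^{-1}$. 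The elementary bound $|e^{itz}-e^{itz'}|\leq\min(2,t|z-z'|)$ then yields $C|z-z'|^{\a_\d}$ after splitting the $t$-integral at $|z-z'|^{-1}$. If you want to salvage your outline, the missing ingredient near $\l=0$ is precisely such a quantitative modulus-of-continuity input beyond the uniform bound of Proposition~\ref{lapconti}; the dispersive representation supplies it, whereas uniform boundedness plus scaling alone cannot.
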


\begin{proof}

The proof is based on the argument in \cite[Lemma 4.7]{RoS}. Set $L^2_k(\re^d)=\jap{x}^{-k}L^2(\re^d)$ for $k\in \re$.
Note that there are two continuous embeddings $L^{\frac{2d}{d-2}+\d}(\re^d)\subset L_{-1-\e}^{2}(\re^d)$
and $L_{1+\e}^{2}(\re^d)\subset L^{\frac{2d}{d+2}-\d}(\re^d)$ for small $\d>0$. By using $($\ref{disp}$)$ in Appendix \ref{appa}, there exists $\a_{\d}>0$ such that
\begin{align*}
\|&(P-z)^{-1}-(P-z')^{-1}\|_{L_{1+\e}^{2}(\re^d)\to L_{-1-\e}^{2}(\re^d)}\\
&\leq C\|(P-z)^{-1}-(P-z')^{-1}\|_{B(L^{\frac{2d}{d+2}-\d}(\re^d), L^{\frac{2d}{d-2}+\d}(\re^d))}\\
&=C\left\|\int_{0}^{\infty}(e^{itz}-e^{itz'})e^{-itP}dt\right\|_{B(L^{\frac{2d}{d+2}-\d}(\re^d), L^{\frac{2d}{d-2}+\d}(\re^d))}\\
&\leq C\int_{0}^{\infty}\min{\left(2,|z-z'|t\right)}\frac{1}{t^{1+\a_{\d}}}dt\\
&=C\int_0^{|z-z'|^{-1}}\frac{|z-z'|}{t^{1+\a_{\d}}}dt+C\int_{|z-z'|^{-1}}^{\infty}\frac{1}{t^{1+\a_{\d}}}dt\\
&\leq C|z-z'|^{\a_{\d}}.
\end{align*}
The existence of the boundary values $\jap{x}^{-1-\e}(P-\l\pm i0)^{-1}\jap{x}^{-1-\e}$ directly follows from the H\"older continuity.
\end{proof}

Next, we state the optimality of the estimate. For a preparation, we need the following lemma.
\begin{lem}\label{reg}
Let $d\geq 3$, $r>0$ and $\f(x)=\chi(x)|x|^{-\frac{d-2}{2}}$, where $\chi\in C_c^{\infty}(\re^d)$ with $\chi=1$ on $|x|\leq r$. Then, $\f\in H^{s}(\re^d)$ for $0\leq s<1$ and $\f\notin H^1(\re^d)$.
\end{lem}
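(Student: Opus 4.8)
The plan is to verify the two membership claims by a direct Fourier/Sobolev computation, localizing the analysis to the singularity at the origin and to the smooth decaying tail separately.

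First I would split $\f = \f_0 + \f_1$, where $\f_0 = \chi_0(x)|x|^{-(d-2)/2}$ with $\chi_0 \in C_c^\infty$ equal to $1$ near $0$ and supported in $\{|x|\le r\}$, and $\f_1 = \f - \f_0$. The function $\f_1$ is smooth and compactly supported, hence lies in $H^s(\re^d)$ for all $s$, so it contributes nothing to the obstruction and can be ignored. For $\f_0$, I would use the homogeneity: $|x|^{-(d-2)/2}$ is locally integrable in dimension $d\ge 3$ (since $(d-2)/2 < d$), so $\f_0 \in L^2(\re^d)$; this already gives $s=0$. To get fractional smoothness, I would compute the distributional gradient $\nabla \f_0$, which equals $\chi_0(x) \cdot \left(-\tfrac{d-2}{2}\right)|x|^{-d/2}\tfrac{x}{|x|} + (\nabla\chi_0)|x|^{-(d-2)/2}$ away from $0$; the second term is in $L^2$, while the first term has size $\sim |x|^{-d/2}$, which is \emph{not} in $L^2$ near $0$ (since $2\cdot\tfrac{d}{2} = d$, the integral $\int_{|x|\le r}|x|^{-d}\,dx$ diverges logarithmically). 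One should check that no cancellation in the distributional derivative rescues this — i.e. that the classical derivative is the distributional derivative, which holds because $|x|^{-(d-2)/2} \in W^{1,1}_{\mathrm{loc}}$ precisely when $(d-2)/2 + 1 < d$, true for $d\ge 3$. This establishes $\f \notin H^1(\re^d)$.

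For the positive direction $\f \in H^s$ with $0 \le s < 1$, I would pass to the Fourier side. Since $|x|^{-(d-2)/2}$ is homogeneous of degree $-(d-2)/2$, its Fourier transform (as a tempered distribution) is homogeneous of degree $-(d - (d-2)/2) = -(d+2)/2$, so $\widehat{|x|^{-(d-2)/2}}(\x) \sim c|\x|^{-(d+2)/2}$. Cutting off by $\chi_0$ convolves this with a Schwartz function, which does not worsen the decay at infinity: $\widehat{\f_0}(\x) = O(|\x|^{-(d+2)/2})$ as $|\x|\to\infty$, and $\widehat{\f_0}$ is bounded near $0$. Then
\begin{align*}
\|\f_0\|_{H^s}^2 = \int_{\re^d} (1+|\x|^2)^s |\widehat{\f_0}(\x)|^2 \, d\x \lesssim 1 + \int_{|\x|\ge 1} |\x|^{2s} |\x|^{-(d+2)} \, d\x,
\end{align*}
and the last integral converges iff $2s - (d+2) + d < 0$, i.e. iff $s < 1$. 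Combined with the $\f_1$ contribution this gives $\f \in H^s$ for all $s\in[0,1)$.

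The main obstacle — really the only subtle point — is justifying the Fourier-side decay estimate $\widehat{\f_0}(\x) = O(|\x|^{-(d+2)/2})$ rigorously, i.e. controlling the effect of the cutoff $\chi_0$ and handling the fact that $|x|^{-(d-2)/2}$ is not itself integrable at infinity before truncation. I would handle this cleanly by writing $\f_0 = \chi_0 \cdot |\cdot|^{-(d-2)/2}$ and noting $\f_0$ is compactly supported with a single algebraic singularity, so $\widehat{\f_0}$ is smooth and its large-$\x$ asymptotics are governed entirely by the local behavior at $x=0$; a standard stationary-phase/homogeneous-distribution argument (e.g. splitting into dyadic annuli $|x|\sim 2^{-j}$ and summing, or citing the known Fourier transform of $\chi_0(x)|x|^{-a}$) yields the stated bound. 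The alternative, slightly more hands-on route is the dyadic decomposition $\f_0 = \sum_{j\ge 0} \f_0(2^{-j}\,\cdot\,)$-type Littlewood–Paley estimate, balancing the $L^2$ mass $\|\f_0\|_{L^2(|x|\sim 2^{-j})}^2 \sim 2^{-j d} 2^{j(d-2)} = 2^{-2j}$ against frequency localization, which reproduces the $H^s$ threshold $s<1$ directly without computing the Fourier transform explicitly; I expect this is the route the authors take.
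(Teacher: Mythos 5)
Your proposal is correct and follows essentially the same route as the paper: the non-membership in $H^1$ via the pointwise lower bound $|\nabla\f|\gtrsim|x|^{-d/2}$ near the origin, and the membership in $H^s$ for $s<1$ via the decay $\hat\f(\x)=O(|\x|^{-(d+2)/2})$ coming from $\widehat{|x|^{-(d-2)/2}}=c_d|\x|^{-(d+2)/2}$. The one step you flag as the ``main obstacle'' is handled in the paper exactly as you suggest, by writing $\hat\f=c_d\,\hat\chi*|\cdot|^{-\frac{d}{2}-1}$ and splitting the convolution integral into the regions $|\x-\y|\le\frac{1}{2}|\x|$ (where $\hat\chi(\y)$ is rapidly decreasing since $|\y|\sim|\x|$) and $|\x-\y|>\frac{1}{2}|\x|$ (where the kernel is bounded by $C|\x|^{-\frac{d}{2}-1}$).
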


\begin{proof}
Note that $\f\in L^2(\re^d)$. We learn
\begin{align*}
\pa_{x_j}\f(x)=(\pa_{x_j}\chi(x))|x|^{-\frac{d-2}{2}}-\frac{d-2}{2}x_j|x|^{-\frac{d+2}{2}},
\end{align*}
and hence
\begin{align*}
|\nabla\f(x)|\geq C|x|^{-\frac{d}{2}}
\end{align*}
near $x=0$. Thus, $|\nabla\f|\notin L^2(\re^d)$ and hence $\f\notin H^1(\re^d)$. 

Next, we show that $\f \in H^s(\re^d)$ for $0\leq s<1$.  It suffices to prove that $\jap{\x}^s\hat{\f}(\x)\in L^2(\re^d)$ where $\hat{\f}(\x)=\mathcal{F}\f(\x)$. Note that $\f\in C^{\infty}(\re^d \setminus \{0\})$ and $\widehat{|x|^{-\frac{d-2}{2}}}(\x)=c_d|\x|^{-\frac{d}{2}-1}$, where $c_d$ is a constant depending only on $d$.
We learn
\begin{align*}
\hat{\f}(\x)=c_d \int_{\re^d}\hat{\chi}(\y)|\x-\y|^{-\frac{d}{2}-1}d\y.
\end{align*}
Since 
\begin{align*}
\left|\int_{|\x-\y|\leq\frac{1}{2}|\x|}\hat{\chi}(\y)|\x-\y|^{-\frac{d}{2}-1}d\y\right|\leq& \left|\int_{\frac{1}{2}|\x|\leq |\y|\leq \frac{3}{2}|\x|}\hat{\chi}(\y)|\x-\y|^{-\frac{d}{2}-1}d\y\right|\\
\leq&C\jap{\x}^{-N}\int_{\frac{1}{2}|\x|\leq |\y|\leq \frac{3}{2}|\x|}|\x-\y|^{-\frac{d}{2}-1}d\y\\
\leq&C\jap{\x}^{-N-1+\frac{d}{2}},
\end{align*}
and
\begin{align*}
\left|\int_{|\x-\y|>\frac{1}{2}|\x|}\hat{\chi}(\y)|\x-\y|^{-\frac{d}{2}-1}d\y\right|\leq& C|\x|^{-\frac{d}{2}-1}\int_{|\x-\y|>\frac{1}{2}|\x|}\hat{\chi}(\y)d\y\\
\leq&C|\x|^{-\frac{d}{2}-1}
\end{align*}
for any $|\x|\geq 1$ and any $N>0$, we have $\jap{\x}^s\hat{\f}\in L^2(\{|\x|\geq1\})$. This and $\f\in L^2(\re^d)$ imply $\jap{\x}^s\hat{\f}\in L^2(\re^d)$.
\end{proof}

Using the above lemma, we obtain:

\begin{prop}\label{ulcoex}
For $0\leq s<1$, we have
\begin{align*}
\sup_{z\in \mathbb{C}\setminus \s(P)}\|\jap{x}^{-s}(P-z)^{-1}\jap{x}^{-s}\|_{B(L^2(\re^d))}=\infty.
\end{align*}
\end{prop}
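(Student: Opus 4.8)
The plan is to exploit scaling together with the borderline regularity of the function $\f(x)=\chi(x)|x|^{-(d-2)/2}$ from Lemma \ref{reg}. The heuristic is that $(P-z)^{-1}$ with $z\to 0$ behaves, after rescaling, like the inverse of the homogeneous operator $P$ itself, and the would-be resolvent at zero energy is the convolution with a Riesz-type kernel whose order is exactly $|\x|^{-2}$ on the Fourier side near the characteristic variety; pairing this against $\jap{x}^{-s}$-localized versions of $\f$ produces a divergent quadratic form precisely because $\f\in H^s$ for $s<1$ but $\f\notin H^1$. Concretely, I would fix $z=-i\e$ (or $z=\e^2 z_0$ for a fixed $z_0\notin\s(P)$) and use the homogeneity $p(\d\x)=\d^2p(\x)$ of the symbol to write, for $u_\d(x)=\d^{d/2}u(\d x)$,
\begin{align*}
(\f_\d, (P-\e^2 z_0)^{-1}\f_\d)_{L^2} = \e^{-2}(\f_{\e^{-1}\cdot\text{(scaled)}}, \dots)\,,
\end{align*}
i.e. choosing $\d=\e$ one gets a factor $\e^{-2}$ times a quantity that remains bounded below; the weight $\jap{x}^{-s}$ scales like $\e^{s}$ on functions concentrated at scale $\e^{-1}$ (or like a bounded multiple on functions concentrated near the origin at scale $\e$), so that the $\jap{x}^{-s}$-weighted form still blows up when $s<1$.

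The key steps, in order, are: (1) Reduce to showing that $(P-z)^{-1}$ does not map $\jap{x}^{s}L^2$ uniformly into $\jap{x}^{-s}L^2$, equivalently that there is no uniform bound on $(\psi,(P-z)^{-1}\psi)$ over $\psi$ in the unit ball of $\jap{x}^{s}L^2$ and $z\in\co\setminus\s(P)$. (2) Take the specific test function related to $\f$: since $\f\in H^s$ for $s<1$, by duality $\jap{D}^{s}\f\in L^2$, and one expects $\jap{x}^{-s}(P-z)^{-1}\jap{x}^{-s}$ to be comparable, after conjugating by the Fourier transform and using the explicit multiplier $(p(\x)-z)^{-1}$, to testing $(p(\x)-z)^{-1}$ against $\widehat{\jap{x}^{-s}\f}$; one uses that $\f$ is, up to the cutoff $\chi$, exactly $|x|^{-(d-2)/2}$, whose Fourier transform is $c_d|\x|^{-d/2-1}$, so that $|\widehat{\f}(\x)|^2\sim|\x|^{-d-2}$ and $\int |\widehat{\f}(\x)|^2/|p(\x)-\e^2 z_0|\,d\x$ diverges as $\e\to0$ — this divergence is of logarithmic or power type and is exactly the failure of $\f\in H^1$. (3) Insert the weights: replacing $\f$ by $\jap{x}^{-s}\f$ only improves decay at infinity and does not affect the singularity at $x=0$ (hence the high-frequency behavior of the Fourier transform), so the same integral against $(p(\x)-\e^2 z_0)^{-1}$ still diverges; this is where Lemma \ref{reg} is used precisely, guaranteeing $\jap{x}^{-s}\f\in L^2$ (so the test vectors are admissible) while the quadratic form is unbounded. (4) Convert the divergent integral into a lower bound $\||V|^{1/2}(P-z)^{-1}|V|^{1/2}\|\to\infty$ along $z=-i\e$, $\e\to0$, with $V=\jap{x}^{-2s}$.

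The main obstacle I expect is step (2)–(3): making rigorous the claim that the weighted resolvent form is genuinely controlled from below by the unweighted Fourier-side integral $\int|\widehat{\f}(\x)|^2|p(\x)-z|^{-1}d\x$. Because $P$ is ultrahyperbolic rather than elliptic, the characteristic set $\{p=0\}$ is a cone of positive dimension, so one cannot simply use ellipticity to localize; one must argue that the contribution of a conic neighborhood of $\{p=0\}$ (or even just of a ray where $p$ vanishes to finite order) already forces the divergence, using that $|\widehat{\f}|^2$ is genuinely of size $|\x|^{-d-2}$ in all directions including along the cone. A clean way around this is to test against a real part: $\Im (\f,(P+i\e)^{-1}\f) = \e\int|\widehat{\f}(\x)|^2(p(\x)^2+\e^2)^{-1}d\x \to \pi\int_{\{p=0\}}|\widehat{\f}|^2\,d\s/|\nabla p|$ by the Plemelj–Sokhotski formula, and the latter surface integral diverges because on the cone $\{p=0\}$ one has $|\widehat\f(\x)|^2/|\nabla p(\x)| \sim |\x|^{-d-2}/|\x| = |\x|^{-d-3}$ — wait, that converges — so in fact one should not integrate over the full cone but rather exploit that the \emph{distributional} limit $\d(P)$ is only defined away from $0$, and near $\x=0$ the blow-up of $(p-z)^{-1}$ as $z\to0$ is what matters; more carefully one keeps $z=\e^2 z_0$ away from the spectrum and uses $\int_{|\x|\le1}|\x|^{-d-2}|p(\x)-\e^2z_0|^{-1}d\x \gtrsim \e^{-?}$, computed by scaling $\x\mapsto\e\x$. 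Handling this low-frequency/scaling argument carefully, and checking it is not killed by the cutoff $\chi$ in $\f$ (which only perturbs $\widehat\f$ by a rapidly decaying convolution, as in the proof of Lemma \ref{reg}), is the technical heart of the argument.
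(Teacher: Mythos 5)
There is a genuine gap, and it sits exactly where you flag "the technical heart": you have placed the singular test function on the wrong side of the Fourier transform. After conjugating by $\mathcal{F}$, the operator $\jap{x}^{-s}(P-z)^{-1}\jap{x}^{-s}$ becomes $\jap{D_\x}^{-s}(p(\x)-z)^{-1}\jap{D_\x}^{-s}$ on $L^2(\re^d_\x)$, so the correct reduction is: exhibit $\f\in H^s(\re^d)$, viewed as a function of the \emph{Fourier variable} $\x$, with $\sup_{z}\bigl|\int |\f(\x)|^2 (p(\x)-z)^{-1}\,d\x\bigr|=\infty$. The paper takes $\f(\x)=\chi(\x)|\x|^{-(d-2)/2}$ itself (in $H^s$ for $s<1$ by Lemma \ref{reg}), uses Stone's theorem to reduce to the surface-delta form $(\f,\delta(p+\e)\f)=\int_{\{p=-\e\}}|\f|^2\,d\s/(2|\x|)$ for $\e>0$ (the boundary values exist for $\e\neq 0$ by Proposition \ref{lapult}), and parametrizes the smooth hyperboloid $\x_1^2=|\x'|^2+\e$ to get $\int_{\re^{d-1}}|\f(\x)|^2\,d\x'/(2\sqrt{|\x'|^2+\e})\gtrsim\int_{|\x'|\le 1}|\x'|^{-(d-1)}d\x'=\infty$ as $\e\to 0$: the divergence comes from $|\f(\x)|^2\sim|\x|^{-(d-2)}$ together with $|\nabla p|\sim|\x|$ near the vertex of the cone. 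You instead pair $(p-z)^{-1}$ against $|\widehat\f(\x)|^2\sim|\x|^{-d-2}$, and you correctly notice mid-argument that the resulting surface integral converges. Your attempted rescue, $\int_{|\x|\le 1}|\x|^{-d-2}|p(\x)-\e^2z_0|^{-1}d\x$, is not available either: $\widehat\f$ is the Fourier transform of an $L^1$ function, hence bounded near $\x=0$, so $|\widehat\f|^2$ is \emph{not} of size $|\x|^{-d-2}$ at low frequency, and the exponent is left as ``$\e^{-?}$''. As written, no divergent quantity that actually bounds the operator norm from below has been produced.

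Your opening scaling heuristic does point at a legitimate alternative (it is the paper's Remark \ref{sculrem}): from $\|(1+|x|)^{-s}(P-z)^{-1}(1+|x|)^{-s}\|=\e^{2-2s}\|(\e^{-1}+|x|)^{-s}(P-\e^2z)^{-1}(\e^{-1}+|x|)^{-s}\|$ and $(\e^{-1}+|x|)^{-s}\le(1+|x|)^{-s}$ for $0<\e<1$, taking the supremum over $z\in\co\setminus\re$ and letting $\e\to0$ contradicts any uniform bound when $2s<2$. Had you committed to that route and written it out, the proof would be complete (and would not need Lemma \ref{reg} at all). But in the proposal the scaling is only a heuristic framing for the Fourier-side computation, and that computation, as it stands, does not close.
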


\begin{proof}
For simplicity, we deal with $k=1$ only. We may assume $s>1/2$. Note that by Proposition \ref{lapult} and Remark \ref{laprem}, $\jap{x}^{-s}(P+\e\pm i0)^{-1}\jap{x}^{-s}$ exist in $B(L^2(\re^d))$ for $\e\neq 0$. Moreover, it follows that
\begin{align*}
&\jap{x}^{-s}(P+\e- i0)^{-1}\jap{x}^{-s}-\jap{x}^{-s}(P+\e+ i0)^{-1}\jap{x}^{-s}\\
&=\jap{x}^{-s}\delta(P+\e)\jap{x}^{-s}
\end{align*}
due to Stone's theorem. Thus it suffices to prove that
\begin{align*}
\sup_{\e>0}\|\jap{x}^{-s}\delta(P+\e)\jap{x}^{-s}\|_{B(L^2(\re^d))}=\infty.
\end{align*}
By using the Fourier transform, it is sufficient to find $\f\in H^s(\re^d)$ such that
\begin{align*}
\sup_{\e>0}\left|(\f,\delta(p+\e)\f)_{L^2(\re^d)}\right|=\infty.
\end{align*}
Note that
\begin{align*}
(\f,\delta(p(\x)+\e)\f)=\int_{p(\x)=-\e}|\f(\x)|^2\frac{d\s(\x)}{|\nabla_{\x}p|}=\int_{p(\x)=-\e}|\f(\x)|^2\frac{d\s(\x)}{2|\x|},
\end{align*}
and
\begin{align*}
p(\x)=-\e\Leftrightarrow \x_1^2=\sum_{j=2}^d\x_j^2+\e.
\end{align*}
Using the formula $($\ref{surf}$)$, we learn
\begin{align*}
(\f,\delta(p(\x)+\e)\f)=&\sum_{\pm}\int_{\re^{d-1}}|\f(\x)|^2\sqrt{1+\left|\frac{\x'}{\sqrt{|\x'|^2+\e}}\right|^2}\frac{d\x'}{2|\x|}\\
=&\sum_{\pm}\int_{\re^{d-1}}|\f(\x)|^2\frac{d\x'}{2\sqrt{|\x'|^2+\e}},
\end{align*}
where $\x'=(\x_2,...,\x_d)$ and $\x=(\pm \sqrt{|\x'|^2+\e}, \x')$ for $\x\in \{p(\x)=\e\}$. Thus we now take $\f(\x)=\frac{1}{|\x|^{(d-2)/2}}\chi(\x)$, where $\chi\in C_c^{\infty}(\re^d)$ such that $\chi=1$ on $|\x|\leq 1$. Note that $\f\in H^s(\re^n)$ due to Lemma \ref{reg}. Since 
\begin{align*}
\int_{ \x'\in\re^{d-1}, |\x'|\leq 1}\frac{1}{|\x'|^{d-1}}d\x'=\infty,
\end{align*}
$\f$ has the desired property.
\end{proof}

\begin{rem}\label{sculrem}
This proposition also follows from a scaling argument. In fact, for $\a,\b>\frac{1}{2}$ we have
\begin{align*}
&\|(1+|x|)^{-\a}(P-z)^{-1}(1+|x|)^{-\b}\|_{B(L^2(\re^d))}\\
=&\e^{2-(\a+\b)}\|(\e^{-1}+|x|)^{-\a}(P-\e^2z)^{-1}(\e^{-1}+|x|)^{-\b}\|_{B(L^2(\re^d))}\\
\leq&\e^{2-(\a+\b)}\|(1+|x|)^{-\a}(P-\e^2z)^{-1}(1+|x|)^{-\b}\|_{B(L^2(\re^d))}.
\end{align*}
for $0<\e<1$. If we take supremum of $z\in \mathbb{C}\setminus \re$ and take $\e\to 0$, then we obtain  a contradiction unless $\a+\b\geq 2$. For the Laplace operators, see \cite{BR}. However we give a more direct proof for a special case since the above argument can be applicable to the discrete Schr\"odinger operators near the hyperbolic thresholds. See Remark \ref{remcoex}. 
\end{rem}

\section{Proofs of positive results}\label{secpo}

In this section, we prove Theorems \ref{mainpo}, \ref{bv} and Corollary \ref{cor}.
\subsection{Proof of Theorem \ref{mainpo} $(i)$}\label{sttouni}

It is known that there is a deep connection between the time decay of the Schr\"odinger propagator $e^{itP}$ and the threshold property of the resolvent of $P$. We refer \cite[\S XIII-A]{RS}. Here we employ a bit technical, but very strong tool due to Duyckaerts. His method allows us to deduce $L^p$-$L^q$ uniform resolvent estimates from Strichartz estimates.

First, we state the dispersive estimates and the Strichartz estimates for the discrete Schr\"odinger operators.

\begin{prop}[\cite{SK}]
Let $d\geq 1$. Then, there exists $C>0$ such that for any $t\in \re$
\begin{align*}
\|e^{-itH_0}\|_{l^1(\ze^d)\to l^{\infty}(\ze^d)}\leq C\jap{t}^{-\frac{d}{3}}.
\end{align*}
\end{prop}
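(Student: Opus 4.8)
The plan is to exploit the product structure of $H_0$ and reduce the $d$-dimensional dispersive bound to a one-dimensional oscillatory integral estimate. Since $h_0(\x)=\sum_{j=1}^d h_1(\x_j)$ with $h_1(\theta)=4\sin^2(\pi\theta)=2-2\cos(2\pi\theta)$, the propagator factorizes as $e^{-itH_0}=\bigotimes_{j=1}^d e^{-itH_0^{(1)}}$, where $H_0^{(1)}$ is the one-dimensional negative discrete Laplacian on $l^2(\ze)$. Hence the convolution kernel factorizes, $K_t(x,y)=\prod_{j=1}^d K_t^{(1)}(x_j-y_j)$, with
\begin{align*}
K_t^{(1)}(n)=\int_0^1 e^{-ith_1(\theta)}e^{2\pi i n\theta}\,d\theta=e^{-2it}\int_0^1 e^{i(2t\cos(2\pi\theta)+2\pi n\theta)}\,d\theta .
\end{align*}
Since $\|e^{-itH_0}\|_{l^1(\ze^d)\to l^\infty(\ze^d)}=\sup_{x,y}|K_t(x,y)|$, it suffices to prove the one-dimensional estimate $\sup_{n\in\ze}|K_t^{(1)}(n)|\leq C\jap{t}^{-1/3}$ and take the $d$-fold product.

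For $|t|\leq1$ the trivial bound $|K_t^{(1)}(n)|\leq1$ suffices, so I would assume $|t|\geq1$ and apply the van der Corput lemma to the phase $\phi(\theta)=2t\cos(2\pi\theta)+2\pi n\theta$. Its higher derivatives $\phi''(\theta)=-8\pi^2 t\cos(2\pi\theta)$ and $\phi'''(\theta)=16\pi^3 t\sin(2\pi\theta)$ are independent of $n$, and $\cos(2\pi\theta)$ and $\sin(2\pi\theta)$ have no common zero on $[0,1]$; therefore there exist $c>0$ and a finite cover of $[0,1]$ by intervals, \emph{independent of $n$}, on each of which either $|\phi''|\geq c|t|$ or $|\phi'''|\geq c|t|$. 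On an interval of the first type van der Corput with $k=2$ gives a contribution $O(|t|^{-1/2})$, on one of the second type van der Corput with $k=3$ gives $O(|t|^{-1/3})$; summing over the finitely many pieces yields $|K_t^{(1)}(n)|\leq C|t|^{-1/3}$ uniformly in $n$. Combining the two regimes gives $\sup_{n}|K_t^{(1)}(n)|\leq C\jap{t}^{-1/3}$, and multiplying over $j=1,\dots,d$ completes the proof.

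The only genuine point is the uniformity in $n$ of the van der Corput bound, and this is built into the argument: the parameter $n$ enters the phase only through the linear term $2\pi n\theta$, which affects neither $\phi''$ nor $\phi'''$, so the region decomposition is chosen once and for all. Everything else — the tensorization and the identification of the $l^1\to l^\infty$ operator norm with the supremum of the kernel — is routine. Alternatively one can observe that $|K_t^{(1)}(n)|=|J_n(2t)|$ and quote the classical uniform bound $\sup_{n}|J_n(s)|\leq C\jap{s}^{-1/3}$ for Bessel functions, but the van der Corput route is self-contained and is the one I would write out.
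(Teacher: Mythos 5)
Your argument is correct. Note that the paper does not prove this proposition at all: it is quoted from the reference [SK] (Stefanov--Kevrekidis), where the estimate is obtained in exactly the way you describe, by tensorizing to one dimension, identifying the kernel with $i^{n}J_n(2t)$ (Bessel functions), and invoking the uniform decay $\sup_n|J_n(s)|\lesssim \langle s\rangle^{-1/3}$, which itself rests on the same van der Corput analysis with $k=2$ away from the degenerate points and $k=3$ near them. Your self-contained version handles the only delicate point correctly: the decomposition of $[0,1]$ is dictated by $\cos(2\pi\theta)$ and $\sin(2\pi\theta)$ alone, so it is independent of $n$ and $t$, and the linear term $2\pi n\theta$ drops out of $\phi''$ and $\phi'''$, giving the required uniformity in $n$.
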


\begin{cor}
Let $d\geq 4$. Set $3^*=\frac{2d}{d-3}$ and $3_*=\frac{2d}{d+3}$.
Then, we have the following Strichartz estimates: 
Suppose that $u\in C(\re, l^2(\ze^d))$ and $F\in L^2(\re, l^{3_*,2}(\ze^d))$ satisfy
\begin{align}\label{loSt}
i\pa_tu(t)-H_0u(t)=F,\quad u|_{t=0}=f\in l^2(\ze^d).
\end{align}
Then there exists $C>0$ such that for $0<T\leq \infty$ we have
\begin{align*}
\|u\|_{L^{2}(-T,T)l^{3^*,2}(\ze^d)}\leq C\|f\|_{l^2(\ze^d)}+C\|F\|_{L^{2}(-T,T)l^{3_{*},2}(\ze^d)}. 
\end{align*}

\end{cor}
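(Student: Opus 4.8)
The plan is to derive the Strichartz estimates from the dispersive estimate by the standard $TT^*$ argument of Keel--Tao, combined with a real interpolation step to pass to the Lorentz-space endpoint formulation stated here. First I would record that the discrete propagator $e^{-itH_0}$ is unitary on $l^2(\ze^d)$, so together with the dispersive bound $\|e^{-itH_0}\|_{l^1\to l^\infty}\le C\jap{t}^{-d/3}$ we have a dispersive decay of exponent $d/3$. Since $d\ge 4$ gives $d/3>1$, the decay is integrable near infinity in the Keel--Tao sense, so we are in the \emph{non-endpoint} regime of that abstract machinery except precisely at the pair corresponding to exponent $2$ on the time side; this is exactly the pair $(2, 3^*)$ with $3^*=\tfrac{2d}{d-3}$, which by Sobolev-type bookkeeping is the ``endpoint'' admissible pair for decay rate $d/3$. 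The Keel--Tao endpoint theorem then yields the estimate with $L^2_t L^{3^*,2}_x$ on the left and $L^2_t L^{3_*,2}_x$ on the right (the Lorentz refinement $L^{3^*,2}$, $L^{3_*,2}$ is built into the endpoint case, which is why Lorentz rather than Lebesgue norms appear), uniformly for $0<T\le\infty$ by applying the estimate on $\re$ to $\mathbf 1_{(-T,T)}F$ and restricting.

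Concretely, the steps I would carry out are: (1) cite or restate the abstract Keel--Tao theorem for a one-parameter family $U(t)$ of operators satisfying the ``energy estimate'' $\|U(t)\|_{l^2\to l^2}\le C$ and the ``decay estimate'' $\|U(t)U(s)^*\|_{l^1\to l^\infty}\le C|t-s|^{-\sigma}$ with $\sigma=d/3>1$, and apply it with $U(t)=e^{-itH_0}$, $\sigma=d/3$; the admissible exponents are those with $\tfrac1q+\tfrac{\sigma}{r}=\tfrac{\sigma}{2}$, $q,r\ge 2$, $(q,r,\sigma)\ne(2,\infty,1)$, and the endpoint $q=2$ forces $r=\tfrac{2\sigma}{\sigma-1}=\tfrac{2d/3}{d/3-1}=\tfrac{2d}{d-3}=3^*$. (2) Translate the abstract conclusion into the inhomogeneous estimate for solutions of \eqref{loSt}: the homogeneous part $\|e^{-itH_0}f\|_{L^2_tl^{3^*,2}_x}\lesssim\|f\|_{l^2}$ and the Duhamel term $\big\|\int_0^t e^{-i(t-s)H_0}F(s)\,ds\big\|_{L^2_tl^{3^*,2}_x}\lesssim\|F\|_{L^2_tl^{3_*,2}_x}$, which is precisely the Keel--Tao endpoint inhomogeneous bound since $3_*=\tfrac{2d}{d+3}$ is the dual Lorentz exponent of $3^*$ in the relevant sense. (3) Handle the finite time interval: given $F\in L^2(-T,T)l^{3_*,2}$, extend by zero, apply the $\re$-estimate, restrict $u$ to $(-T,T)$; the constant is independent of $T$.

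The main obstacle is the honest verification of the \emph{endpoint} Strichartz estimate rather than the easier non-endpoint range: the Keel--Tao endpoint argument is delicate (it requires a dyadic decomposition in $|t-s|$, a bilinear interpolation between $(l^1,l^\infty)$- and $(l^2,l^2)$-type bounds, and the use of Lorentz spaces to recover the borderline summation), and one must make sure the abstract hypotheses are met with the correct decay exponent $\sigma=d/3$ coming from the van der Corput estimate in the cited proposition of \cite{SK}. For $d=4$ the exponent $\sigma=4/3$ is only slightly above $1$, so the dyadic sums converge but with little room to spare; nonetheless $\sigma>1$ strictly is all that is needed, so no separate argument is required for small $d$. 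I would also remark that since only the endpoint admissible pair $(2,3^*)$ is used in the sequel (to feed Duyckaerts' method), it suffices to prove this single estimate and one need not state the full admissible range. A secondary, purely bookkeeping point is to confirm that the Lorentz exponents match: $L^{3^*,2}_x$ on the left and $L^{3_*,2}_x$ on the right are exactly what the endpoint theorem produces, and these are the spaces that interpolate correctly with $l^2=l^{2,2}$, which is what makes the subsequent real-interpolation and Duyckaerts arguments in Section~\ref{secpo} go through.
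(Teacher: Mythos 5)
Your proposal is correct and follows essentially the same route as the paper, whose entire proof is to invoke the abstract Keel--Tao theorem (Theorem 10.1 of \cite{KT}) with $\mathcal{H}=B_0=l^{2}(\ze^d)$, $B_1=l^{1}(\ze^d)$, $\sigma=\frac{d}{3}$ and $q=2$. Your identification of the energy estimate (unitarity on $l^2$), the truncated decay exponent $\sigma=d/3>1$ (whence $d\geq 4$), the endpoint pair $(2,3^*)$ with $3^*=\frac{2\sigma}{\sigma-1}\cdot\frac{d}{d}=\frac{2d}{d-3}$, and the Lorentz refinement $l^{3^*,2}$, $l^{3_*,2}$ is precisely the bookkeeping behind that citation.
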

\begin{proof}
We apply Theorem 10.1 in \cite{KT} with $\mathcal{H}=B_0=l^{2}(\ze^d)$, $B_1=l^1(\ze^d)$, $\s=\frac{d}{3}$ and $q=2$. 
\end{proof}

The next argument is due to T. Duyckaerts (see \cite[Proposition 5.1]{BM}).

\begin{prop}\label{discuni}
Let $d\geq 4$. Then, there exists $C>0$ such that for $z\in \mathbb{C}\setminus \s(H_0)$
\begin{align*}
\|(H_0-z)^{-1}u\|_{l^{3^*,2}(\ze^d)}\leq C\|u\|_{l^{3_*,2}(\ze^d)},\quad u\in l^2(\mathbb{Z}^d)\cap l^{3_*,2}(\mathbb{Z}^d).
\end{align*}
Moreover, for $w\in l^{\frac{2d}{3},\infty}(\re^d)$,
\begin{align*}
\sup_{z\in\mathbb{C}\setminus \re}\|w(H_0-z)^{-1}w\|_{B(\mathcal{H})}\leq C\|w\|_{l^{\frac{2d}{3},\infty}(\ze^d)}^2.
\end{align*}
In particular, if $V\in l^{\frac{d}{3},\infty}(\ze^d)$, (\ref{BS}) holds.
\end{prop}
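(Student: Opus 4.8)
The plan is to argue in three steps. The core is a uniform $l^{3_*,2}\to l^{3^*,2}$ resolvent bound; the weighted estimate and (\ref{BS}) then follow by two applications of H\"older's inequality in Lorentz spaces. For the first step I would run Duyckaerts' argument exactly as in \cite[Proposition 5.1]{BM}, fed by the Strichartz estimates of the preceding Corollary. One starts from the representation of the resolvent as a time integral of the free propagator,
\begin{align*}
(H_0-z)^{-1}=i\int_0^{\infty}e^{itz}e^{-itH_0}\,dt\qquad(\Im z>0),
\end{align*}
together with the symmetric formula for $\Im z<0$, and combines it with the dual ($TT^*$) form of the homogeneous Strichartz estimate $\|e^{-itH_0}f\|_{L^2(\re)l^{3^*,2}(\ze^d)}\lesssim\|f\|_{l^2(\ze^d)}$. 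This should yield
\begin{align*}
\sup_{z\in\co\setminus\s(H_0)}\|(H_0-z)^{-1}\|_{B(l^{3_*,2}(\ze^d),\,l^{3^*,2}(\ze^d))}<\infty.
\end{align*}
The delicate point here — which I expect to be the main obstacle — is that the two Strichartz time exponents coincide (both $L^2$, the ``double endpoint''), so the Christ--Kiselev lemma cannot be invoked to pass from the full space-time inhomogeneous estimate to the retarded (Duhamel/resolvent) one; Duyckaerts' device replaces Christ--Kiselev by a direct argument exploiting the modulation $e^{\pm it\Re z}$, and it is precisely this that produces control of the full resolvent — not merely of the spectral measure — uniformly down to the real axis.

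For the second step, note that $\tfrac1{3_*}=\tfrac3{2d}+\tfrac12$ and $\tfrac12=\tfrac3{2d}+\tfrac1{3^*}$, with the second Lorentz exponents matching; hence H\"older's inequality in Lorentz spaces (see \cite{G}) gives, for $w\in l^{\frac{2d}{3},\infty}(\ze^d)$,
\begin{align*}
\|wf\|_{l^{3_*,2}(\ze^d)}\leq C\|w\|_{l^{\frac{2d}{3},\infty}(\ze^d)}\|f\|_{l^2(\ze^d)},\qquad \|wg\|_{l^2(\ze^d)}\leq C\|w\|_{l^{\frac{2d}{3},\infty}(\ze^d)}\|g\|_{l^{3^*,2}(\ze^d)}.
\end{align*}
Composing the three bounded maps $w\colon l^2(\ze^d)\to l^{3_*,2}(\ze^d)$, $(H_0-z)^{-1}\colon l^{3_*,2}(\ze^d)\to l^{3^*,2}(\ze^d)$ (from Step 1), and $w\colon l^{3^*,2}(\ze^d)\to l^2(\ze^d)$ — first on finitely supported vectors, for which every intermediate quantity manifestly lies in the relevant space, and then extending by density using $(H_0-z)^{-1}\in B(\H)$ for $z\notin\s(H_0)$ — gives $\sup_{z\in\co\setminus\re}\|w(H_0-z)^{-1}w\|_{B(\H)}\leq C\|w\|_{l^{2d/3,\infty}(\ze^d)}^2$.

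Finally, if $V\in l^{\frac{d}{3},\infty}(\ze^d)$ then $|V|^{1/2}\in l^{\frac{2d}{3},\infty}(\ze^d)$, with $\||V|^{1/2}\|_{l^{2d/3,\infty}(\ze^d)}^2=\|V\|_{l^{d/3,\infty}(\ze^d)}$, so applying the second step with $w=|V|^{1/2}$ gives (\ref{BS}). Thus Steps 2 and 3 are soft, and all the genuine work sits in Step 1 — extracting, via Duyckaerts' method, a resolvent estimate uniform up to the spectrum from Strichartz estimates at the double endpoint.
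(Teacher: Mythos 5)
Your Steps 2 and 3 are fine: the Lorentz--H\"older bookkeeping $\tfrac{1}{3_*}=\tfrac{3}{2d}+\tfrac12$, $\tfrac12=\tfrac{3}{2d}+\tfrac1{3^*}$ and the identity $\||V|^{1/2}\|_{l^{2d/3,\infty}}^2=\|V\|_{l^{d/3,\infty}}$ are exactly the (implicit) soft part of the paper's proof. The gap is in Step 1, which you yourself identify as the only real content, and which you do not actually prove. What you describe there --- the representation $(H_0-z)^{-1}=i\int_0^\infty e^{itz}e^{-itH_0}\,dt$ combined with the $TT^*$ form of the homogeneous Strichartz estimate, with the double-endpoint difficulty ``replaced by a direct argument exploiting the modulation $e^{\pm it\Re z}$'' --- is not Duyckaerts' argument, and as written it does not yield uniformity as $\Im z\to 0$: estimating the time integral crudely produces the factor $\|e^{itz}\|_{L^2(0,\infty)}=(2\,\Im z)^{-1/2}$, and you supply no mechanism that removes it. The Christ--Kiselev remark is also a red herring in this setting: the retarded inhomogeneous estimate at the double $L^2_t$ endpoint is already furnished by Keel--Tao \cite{KT} (that is what the Corollary for (\ref{loSt}) asserts), so nothing needs to be upgraded from non-retarded to retarded. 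The genuine issue is how to convert a space-time Strichartz inequality into a stationary estimate uniform in $z$, and at that point your proposal says only ``this should yield.''

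The device of \cite{BM}, which the paper reproduces, avoids any time-integral representation of the resolvent. One takes $f$ finitely supported and substitutes the explicit solution $u(t)=e^{itz}f$ into the inhomogeneous Strichartz estimate for (\ref{loSt}) on $(-T,T)$; the forcing is then $e^{itz}$ times the fixed vector $\pm(H_0-z)f$, so both $\|u\|_{L^2(-T,T)l^{3^*,2}}$ and $\|F\|_{L^2(-T,T)l^{3_*,2}}$ equal $\gamma(z,T)=\|e^{izt}\|_{L^2(-T,T)}$ times the corresponding spatial norms, while the homogeneous term stays $C\|f\|_{l^2}$. Since $|e^{izt}|\ge 1$ on half of $(-T,T)$ one has $\gamma(z,T)\ge\sqrt{T}$, so dividing by $\gamma(z,T)$ and letting $T\to\infty$ discards the $l^2$ term and gives the a priori bound $\|f\|_{l^{3^*,2}}\le C\|(H_0-z)f\|_{l^{3_*,2}}$ with $C$ independent of $z$; taking $f=(H_0-z)^{-1}u$ after a density argument (a point the paper also only gestures at, and which your write-up would likewise need) gives the first display of Proposition \ref{discuni}. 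It is the divergence of $\gamma(z,T)$, not the modulation by $e^{it\Re z}$ in a resolvent integral, that produces the uniform bound; without this substitution (or an equivalent replacement) your Step 1 remains an assertion rather than a proof.
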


\begin{proof}
Suppose that $f$ is a finitely supported function on $\ze^d$. Let $z\in \mathbb{C}\setminus \s(H_0)$. We substitute $u(t)=e^{itz}f$ into (\ref{loSt}) and then we have
\begin{align*}
\gamma(z,T)\|f\|_{l^{3^*,2}(\ze^d)}\leq C\|f\|_{l^2(\ze^d)}+C\gamma(z,T)\|(H_0-z)f\|_{l^{3_*,2}(\ze^d)},
\end{align*}
where $\gamma(z,T)=\|e^{izt}\|_{L^2(-T,T)}$. Since $\gamma(z,T)\geq \sqrt{T}$, by letting $T\to \infty$,
\begin{align*}
\|f\|_{l^{3^*,2}(\ze^d)}\leq C\|(H_0-z)f\|_{l^{3_*,2}(\ze^d)}.
\end{align*}
It remains to justify a density argument. 
\end{proof}

\subsection{Proof of Theorem \ref{mainpo} $(ii)$}\label{disL^2}

In this subsection, we assume $d\geq 3$. 
\begin{prop}\label{discuni2}
For $\a,\b>\frac{1}{2}+\frac{1}{2(d-1)}$ with $\a+\b\geq 2$, there exists $C>0$ such that 
\begin{align*}
\sup_{z\in \mathbb{C}\setminus \re}\|\jap{x}^{-\a}(H_0-z)^{-1}\jap{x}^{-\b}\|_{B(\mathcal{H})}\leq C.
\end{align*}
\end{prop}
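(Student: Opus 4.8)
The plan is to microlocalize $H_0$ near its critical set $\Gamma$ and, on each localized piece, to compare $(H_0-z)^{-1}$ with the resolvent of an ultrahyperbolic model operator, for which the required weighted bound is Proposition~\ref{lapconti}.

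Since $H_0$, $(H_0-z)^{-1}$ and all Fourier multipliers on $\T^d$ commute, I would fix a finite partition of unity $1=\sum_j\psi_j$ on $\T^d$ in which each $\psi_j$ is supported either in $\{|\nabla h_0|\geq c\}$ for a fixed small $c>0$, or in a small ball $B(q,\d)$ around a single $q\in\Gamma$; this is possible since $\Gamma$ is finite. Expanding $\jap{x}^{-\a}(H_0-z)^{-1}\jap{x}^{-\b}=\sum_{j,k}\jap{x}^{-\a}\psi_j(D)(H_0-z)^{-1}\psi_k(D)\jap{x}^{-\b}$, only the terms with $\supp\psi_j\cap\supp\psi_k\neq\emptyset$ survive, and since $\jap{x}^{-s}\chi(D)\jap{x}^{s}$ is bounded for $\chi\in C^\infty(\T^d)$, it suffices to bound $\|\jap{x}^{-\a}\chi(D)(H_0-z)^{-1}\chi(D)\jap{x}^{-\b}\|_{B(\H)}$ uniformly in $z\in\co\setminus\re$ for a fixed $\chi\in C_c^\infty$ supported either (a) in $\{|\nabla h_0|\geq c\}$, or (b) in a ball $B(q,\d)$ with $q\in\Gamma$.

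Case (a) is the limiting absorption principle away from thresholds: $\s(H_0)=[0,4d]$ is compact, so only $z$ with $\Re z$ near $[0,4d]$ matters, and on $\supp\chi$ every level set $\{h_0=\m\}$ is a smooth hypersurface, whence the bound holds for all $\a,\b>\tfrac12$ --- e.g. by Mourre's method with conjugate operator $\tfrac12(x\cdot\nabla h_0(D)+\nabla h_0(D)\cdot x)$, or by estimating $\jap{x}^{-\a}\d(h_0(D)-\m)\jap{x}^{-\b}$ uniformly in $\m$ via the trace lemma. For case (b), conjugating by the unitary modulation $\tau_q u(x)=e^{2\pi iq\cdot x}u(x)$ (which commutes with the weights) replaces $h_0(\xi)$ by $h_0(\xi+q)$; subtracting $h_0(q)$ and setting $w=z-h_0(q)\in\co\setminus\re$, it remains to bound $\jap{x}^{-\a}\chi_0(D)(\tilde H-w)^{-1}\chi_0(D)\jap{x}^{-\b}$ uniformly in $w$, where $\chi_0\in C_c^\infty$ is supported in a tiny ball about $0$ and $\tilde H=\tilde h(D)$, $\tilde h(\xi)=h_0(\xi+q)-h_0(q)$ --- a function with a non-degenerate critical point at $0$ having $k:=\#\{j:q_j=0\}$ positive and $d-k$ negative eigenvalues, since $4\sin^2\pi t$ is even with $4\sin^2\pi t=4\pi^2t^2+O(t^4)$. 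By the Morse lemma there is a local diffeomorphism $\kappa$, $\kappa(0)=0$, with $\tilde h=p_k\circ\kappa$ near $0$ (a linear change of variables and rescaling, absorbed into $\kappa$, normalizes the coefficients), where $p_k(\zeta)=\zeta_1^2+\dots+\zeta_k^2-\zeta_{k+1}^2-\dots-\zeta_d^2$. Extending $\kappa$ to a global diffeomorphism and letting $\mathcal{U}$ be the associated change-of-variables unitary, a resolvent identity --- in which all error terms vanish because $p_k\circ\kappa-\tilde h$ is supported away from $\chi_0$ --- together with the identification (discussed below) of the localized discrete resolvent of the model symbol with the corresponding continuous resolvent of $P_k=\mathcal{F}^{-1}p_k\mathcal{F}$ yields
\begin{align*}
\jap{x}^{-\a}\chi_0(D)(\tilde H-w)^{-1}\chi_0(D)\jap{x}^{-\b}=A_1\,\bigl[\jap{x}^{-\a}(P_k-w)^{-1}\jap{x}^{-\b}\bigr]\,A_2,
\end{align*}
where $A_1,A_2$ are $w$-independent bounded operators built from $\chi_0(D)$, $\mathcal{U}^{\pm1}$ and weights. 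By Proposition~\ref{lapconti}, the middle factor is bounded uniformly in $w\in\co\setminus\re$ under the hypotheses $\a,\b>\tfrac12+\tfrac1{2(d-1)}$ and $\a+\b\geq2$, and case (b) follows.

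The step I expect to be the main obstacle is the bookkeeping underlying case (b): verifying that $\jap{x}^{s}\mathcal{U}^{\pm1}\jap{x}^{-s}$ is bounded on $l^2(\ze^d)$ for $s\geq0$, which follows from the invariance of pseudodifferential operators under diffeomorphisms; and that a Fourier multiplier supported in a ball of radius $<1/2$ has comparable weighted norms on $l^2(\ze^d)$ and on $L^2(\re^d)$ (a band-limited sampling comparison of convolution kernels). Both are standard but not purely mechanical; all remaining steps are elementary localization together with Proposition~\ref{lapconti}. Note that the precise range $\a,\b>\tfrac12+\tfrac1{2(d-1)}$, $\a+\b\geq2$ in the statement is exactly what Proposition~\ref{lapconti} demands, and it is forced by the hyperbolic thresholds $q\in\Gamma$ (those with $1\leq k\leq d-1$); at the elliptic thresholds ($k\in\{0,d\}$) the weaker condition $\a,\b>\tfrac12$ would suffice.
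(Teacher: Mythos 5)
Your proposal is correct and follows essentially the same route as the paper: a frequency partition of unity, the Morse lemma to reduce each critical piece to the ultrahyperbolic model $p_k$, and then Proposition~\ref{lapconti}, with the non-critical pieces handled by a limiting absorption argument (the paper's Proposition~\ref{cutMo}). The two technical points you flag as the main obstacles are exactly what the paper supplies in its appendix (Lemmas~\ref{lemb2} and~\ref{lemb3}, i.e.\ coordinate invariance of $H^{s}$ for non-integer $s$ via pseudolocality), the only cosmetic difference being that the paper carries out the reduction as a quadratic-form computation on the Fourier side rather than as an operator factorization.
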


\begin{proof}

By using a partition of unity, it suffices to prove for each $\chi\in C^{\infty}(\T^d)$ with a small support, $f \in H^{\a}(\mathbb{T}^d)$ and  $g \in H^{\b}(\mathbb{T}^d)$
\begin{align}\label{lap1}
|(f,\chi^2(h_0-z)^{-1}g)_{L^2(\mathbb{T}^d)}|\leq C\|f\|_{H^{\a}(\mathbb{T}^d)}\|g\|_{H^{\b}(\mathbb{T}^d)},
\end{align}
where $C>0$ is independent of $f$, $g$ and $z$.
We may suppose $\chi$ has one of the following properties: $\nabla h_0\neq 0$ on $\supp\chi$ or $\supp\chi$ contains just one element of $\Gamma$. Since $($\ref{lap1}$)$ follows from Proposition \ref{cutMo} in the former case, we may only deal with the latter case. We take a unique element $\x_0\in \Gamma\cap \supp \chi$. Then there exists a diffeomorphism $\k$ from a neighborhood of $\supp \chi$ onto its image such that
\begin{align*}
h_0(\k^{-1}(\y))=&h_0(\x_0)+\y_1^2+...+\y_{k}-\y_{k+1}^2-...-\y_d^2,\quad \y \in \k(\supp \chi)\subset \re^d
\end{align*}
for some $0\leq k\leq d$. Set $J(\y)=|\det d\k^{-1}(\y)|$ and $f_{\k}(\y)=f(\k^{-1}(\y))$. By using the change of variables and Proposition \ref{lapconti}, we have
\begin{align*}
|(f,\chi^2(h_0-z)^{-1}g)_{L^2(\mathbb{T}^d)}|=&\left|\int_{\mathbb{T}^d}\frac{\chi(\x)^2\bar{f}(\x)g(\x)}{h_0(\x)-z}d\x\right|\\
=&\left|\int_{\re^d}\frac{\chi_{\k}(\y)^2\bar{f}_{\k}(\y)g_{\k}(\y)}{p_k(\y)+h_0(\x_0)-z}J(\y)d\y\right|\\
\leq& C\|\chi_{\k}f_{\k}\|_{H^{\a}(\re^d)}\|\chi_{\k}g_{\k}\|_{H^{\b}(\re^d)}\\
\leq& C\|f \|_{H^{\a}(\mathbb{T}^d)}\|g\|_{H^{\b}(\mathbb{T}^d)},
\end{align*}
where we used Lemma \ref{lemb3} in the last inequality. Thus we obtain $($\ref{lap1}$)$.

 \end{proof}

\subsection{Proof of Theorem \ref{bv}}\label{pfbv}
Let $s>1$ and $\a_{\d}>0$ as in the proof of Proposition \ref{ulbv}. Similarly to Subsection \ref{disL^2}, it is enough to prove that
\begin{align*}
|(f,\chi^2((h_0-z)^{-1}-(h_0-z')^{-1})g)|\leq C|z-z'|^{\a_{\d}}\|f\|_{H^s(\mathbb{T}^d)}\|g\|_{H^s(\mathbb{T}^d)}
\end{align*}
for $\chi\in C^{\infty}(\T^d)$ which is as in subsection \ref{disL^2}. However, it is proved by changing variables, Proposition \ref{ulbv}, Lemma \ref{lemb2} and Proposition \ref{cutMo}.

\subsection{Proof of Corollary \ref{cor}}\label{pfcor}

Corollary \ref{cor} follows from Lemma \ref{numberEV}.
The argument in the proof is due to \cite[Lemma 2.1]{HMO}.

\begin{lem}\label{numberEV}
Let $H$ be a bounded self-adjoint operator on a Hilbert space $\mathcal{H}$ which has no eigenvalues and $W$ be a finite rank self-adjoint operator on $\mathcal{H}$. Then:
\begin{itemize}
\item[(i)] For any $\m \in \re$, 
$\dim \Ker (H+W-\m)  \leq \dim \Ran W$.

\item[(ii)] Suppose that $\sigma(H)=[a,b]$, $-\infty<a<b<\infty$ and $W=W_+ - W_-$, where $W_\pm$ are positive operators.
Then 
\begin{align*}
&\dim(\Ran E_{H+W}^{pp}((-\infty,a ])) \leq \dim\operatorname{Ran} W_-,\\
&\dim(\Ran E_{H+W}^{pp}([ b,\infty))) \leq \dim\operatorname{Ran} W_+. 
\end{align*}
\end{itemize}
\end{lem}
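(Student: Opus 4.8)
The plan is to prove Lemma \ref{numberEV} by a variational/counting argument exploiting that $H$ itself has no eigenvalues, so all eigenvectors of $H+W$ must ``see'' the perturbation $W$.

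\textbf{Proof of (i).} Suppose for contradiction that $\dim\Ker(H+W-\m) > \dim\Ran W$. Then the subspace $\Ker(H+W-\m)$ has dimension strictly larger than $\dim\Ran W$, so there is a nonzero $\f\in\Ker(H+W-\m)$ with $\f\in\Ker W$ (since $\Ker W$ has codimension $\dim\Ran W$, it must intersect $\Ker(H+W-\m)$ nontrivially). For such $\f$ we have $(H+W)\f=\m\f$ and $W\f=0$, hence $H\f=\m\f$, contradicting the assumption that $H$ has no eigenvalues. This gives (i).

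\textbf{Proof of (ii).} I would treat the two inequalities symmetrically; consider the one for $(-\infty,a]$. Let $M=\Ran E_{H+W}^{pp}((-\infty,a])$, a subspace of $\mathcal{H}$ spanned by eigenvectors of $H+W$ with eigenvalues $\leq a$. The idea is to show the map $\f\mapsto W_-^{1/2}\f$ (or simply the projection onto $\Ran W_-$) is injective on $M$; then $\dim M\leq\dim\Ran W_-$. Suppose $\f\in M$ lies in $\Ker W_-$. Decompose $\f=\sum_j c_j\f_j$ with $(H+W)\f_j=\m_j\f_j$, $\m_j\leq a$. For each such eigenvector, $\big(\f_j,(H+W)\f_j\big)=\m_j\|\f_j\|^2\leq a\|\f_j\|^2$. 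Now $\big(\f_j,H\f_j\big)\geq a\|\f_j\|^2$ since $\sigma(H)=[a,b]$, and $\big(\f_j,W_+\f_j\big)\geq 0$; combined with $\big(\f_j,(H+W)\f_j\big)=\big(\f_j,H\f_j\big)+\big(\f_j,W_+\f_j\big)-\big(\f_j,W_-\f_j\big)\leq a\|\f_j\|^2$ this forces $\big(\f_j,H\f_j\big)=a\|\f_j\|^2$, $\big(\f_j,W_+\f_j\big)=0$, and $\big(\f_j,W_-\f_j\big)=0$. The first of these means $\f_j$ attains the bottom of the spectrum of $H$ in the quadratic-form sense, hence $\f_j$ is in the spectral subspace $E_H(\{a\})\mathcal{H}=\{0\}$ (as $H$ has no eigenvalues), so $\f_j=0$. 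Thus $M\cap\Ker W_-=\{0\}$, giving the first bound; the second is identical with $a,b,W_-,W_+$ interchanged and inequalities reversed.

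\textbf{Main obstacle.} The delicate point in (ii) is handling a general $\f\in M$ that is a linear combination of eigenvectors with possibly distinct eigenvalues all $\leq a$: the argument above that $\big(\f,H\f\big)=a\|\f\|^2$ forces $\f$ into the bottom eigenspace needs $\f\in\Ker W_-$ to be used for the whole vector, and then one must propagate the rigidity ($\big(\f,H\f\big)=a\|\f\|^2$ with $\f\perp$ everything above $a$ in $H$'s spectrum and $\big(\f,W_+\f\big)=0$) correctly. The clean way is: for $\f\in M\cap\Ker W_-$ write $a\|\f\|^2\geq\big(\f,(H+W)\f\big)=\big(\f,H\f\big)+\big(\f,W_+\f\big)\geq a\|\f\|^2+0$, so equality holds throughout; then $\big(\f,(H-a)\f\big)=0$ with $H-a\geq 0$ gives $(H-a)^{1/2}\f=0$, i.e. $H\f=a\f$, whence $\f=0$ by the hypothesis on $H$. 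So the only real care needed is the functional-calculus step $\big(\f,(H-a)\f\big)=0\Rightarrow H\f=a\f$, which is standard. I expect no genuine difficulty beyond bookkeeping; the whole content is the ``$H$ has no eigenvalues'' dichotomy applied to vectors that are simultaneously extremal for $H$ and annihilated by the relevant part of $W$.
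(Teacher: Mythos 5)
Your proposal is correct and follows essentially the same route as the paper: part (i) via the nontrivial intersection of $\Ker(H+W-\m)$ with $\Ker W$, and part (ii) via the inequality chain $a\|u\|^2\geq (u,(H+W)u)=(u,(H+W_+)u)\geq (u,Hu)\geq a\|u\|^2$ for $u$ in the spectral subspace with $W_-u=0$, forcing $Hu=au$ and hence $u=0$. Your initial eigenvector-by-eigenvector version of (ii) is indeed flawed (as you note, $W_-\f=0$ does not give $W_-\f_j=0$ termwise), but the ``clean way'' you substitute is exactly the paper's argument, so no gap remains.
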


\begin{proof}
(i) 
Suppose that the inequality fails. Let $P$ be the projection onto $\Ran W = (\Ker W)^{\perp}$. Then $P\mid_{\Ker(H+W-\m)}: \Ker(H+W-\m) \to \Ran W$ has a non-trivial kernel, i.e.\ we can choose $u \in \Ker(H+W-\m)$ such that $Wu=0$ and $\|u\|_{\H}=1$. Therefore
\begin{align*}
0=(H+W-\m)u=(H-\m)u,
\end{align*}
which contradicts the assumption that $H$ has no eigenvalues.


(ii) Suppose that the first inequality fails. Then the same argument as in (i) implies that there exists $u \in \Ran E_{H+W}^{pp}((-\infty,a])$ such that $W_- u=0$ and $\|u\|_{\H}=1$. Therefore we have
\begin{align*}
a \geq (u,(H+W)u)_{\H} = (u,(H+W_+)u)_{\H} \geq (u,Hu)_{\H},
\end{align*}
where the last inequality follows from the positivity of $W_+$.
On the other hand, the assumption on $H$ implies $(u,Hu)_{\H} \in (a,b)$, which is a contradiction.
The other inequality is similarly proved.
\end{proof}


\section{Proofs of negative results}\label{secne}

\subsection{Proof of Theorem \ref{mainne} $(i)$}

The following argument is similar to \cite[Theorem XIII.11]{RS}. Note that $h_0(\x)\sim 4\pi^2 |\x|^2$ near $\x=0$ and the operator $H_0$ is positive.

Set $K_{\m}=|V|^{1/2}(H_0+\m^2)^{-1}|V|^{1/2}$ for $\m\in \re$. First, we note that $H_0+\l V$ has a negative eigenvalue if and only if there exists $\m>0$ such that $1/\l$ is an eigenvalue of $K_\m$. In fact, a direct calculus implies
\begin{align*}
H_0 u+\l Vu=-\m^2u\Leftrightarrow& \l (H_0 +\m^2)^{-1}Vu=-u\\
\Rightarrow& \l K_{\m}\g=\g,
\end{align*}
where $\g=|V|^{\frac{1}{2}}u$.
Conversely, if  there exists $\g\in \mathcal{H}$ such that $\l K_{\m}\g=\g$, then
\begin{align*}
\l |V|u=|V|^{1/2}\g=(H_0 +\m^2)u,
\end{align*}
where $u=(H_0+\m^2)^{-1}|V|^{1/2}\g$.

Since $V$ vanishes at infinity, $K_{\m}$ is a positive compact operator. Then it suffices to prove that
\begin{align*}
\lim_{\m^2\to 0}\|K_{\m}\|_{B(\H)}=\infty.
\end{align*}
In fact, the spectral radius of $K_{\m}$ is equal to $\|K_{\m}\|_{B(\H)}$, $\s(K_{\m})=\s_{pp}(K_{\m})$ and  $\s(K_{\m})$ has no accumulation point except at $0$. Thus we need only find $\y\in \H$ such that
\begin{align*}
\lim_{\m^2\to 0}\left(|V|^{1/2}\y, (H_0+\m^2)^{-1}|V|^{1/2}\y\right)_{\H}=\infty.
\end{align*}
We choose a non negative finitely supported function $\y \in \mathcal{H}$ which satisfies $\y(x)> 0$ for some $x\in \supp V$ and set $\f=|V|^{1/2}\y$. Then $\hat{\f}(0)=\sum_{x\in\ze^d}|V(x)|^{1/2}\y(x)>0$. Since $\f$ is finitely supported, then $\hat{\f}\in L^2(\T^d)\cap C(\T^d)$. Thus, $\hat{\f}\neq 0$ near zero. Note that $h_0(\x)\sim 4\pi^2 |\x|^2$ near $\x=0$. Consequently, 

\begin{align*}
\left(|V|^{1/2}\y, (H_0+\m^2)^{-1}|V|^{1/2}\y\right)_{\H}=\int_{\mathbb{T}^d}\frac{|\hat{\f}(\x)|^2}{h_0(\x)+\m^2}d\x
\end{align*}
diverges as $\m^2\to 0$ if $d=1$ or $2$.

\subsection{Proof of Theorem \ref{mainne} $(ii)$}
We consider near $\x_1=\x_2=\frac{1}{4}$ only, the other cases being similar. 
\begin{lem}\label{2flat}
In a neighborhood of $(\frac{1}{4},\frac{1}{4})\in \mathbb{T}^2$, $h_0(\x)=4$ is equivalent to $\x_1+\x_2=\frac{1}{2}$.
\end{lem}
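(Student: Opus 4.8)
The plan is to verify Lemma \ref{2flat} by an explicit local computation at the hyperbolic threshold $\x_0=(\frac14,\frac14)$. Recall $h_0(\x)=4\sin^2(\pi\x_1)+4\sin^2(\pi\x_2)$, so $h_0(\x_0)=4\sin^2(\pi/4)+4\sin^2(\pi/4)=2+2=4$; thus $\x_0$ indeed lies on the level set $\{h_0=4\}$, and since $\x_0\notin\Gamma$ we have $\nabla h_0(\x_0)\neq 0$, so $\{h_0=4\}$ is a smooth curve near $\x_0$. First I would observe that both $\{h_0(\x)=4\}$ and $\{\x_1+\x_2=\tfrac12\}$ pass through $\x_0$, and compute that the line $\x_1+\x_2=\tfrac12$ is actually contained in $\{h_0=4\}$: substituting $\x_2=\tfrac12-\x_1$ gives $4\sin^2(\pi\x_1)+4\sin^2(\pi/2-\pi\x_1)=4\sin^2(\pi\x_1)+4\cos^2(\pi\x_1)=4$ identically. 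Hence the whole line lies on the level set.

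Next I would argue that near $\x_0$ the level set $\{h_0=4\}$ coincides with this line. Since $\nabla h_0(\x_0)=(4\pi\sin(\pi/2),4\pi\sin(\pi/2))=(4\pi,4\pi)$ is nonzero with both components equal, the implicit function theorem shows $\{h_0=4\}$ is locally the graph of a smooth function $\x_2=g(\x_1)$ with $g(\tfrac14)=\tfrac14$ and $g'(\tfrac14)=-1$ (the tangent direction being orthogonal to $\nabla h_0(\x_0)$, i.e.\ the direction $(1,-1)$, which is exactly the tangent to $\x_1+\x_2=\tfrac12$). Since the line $\x_1+\x_2=\tfrac12$ is one such graph lying in the level set and passing through $\x_0$, uniqueness in the implicit function theorem forces the two to agree on a neighborhood of $\x_0$. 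Equivalently, one can factor: on a neighborhood of $\x_0$, $h_0(\x)-4 = (\x_1+\x_2-\tfrac12)\,m(\x)$ for some smooth $m$ with $m(\x_0)\neq 0$ — this follows because $h_0-4$ vanishes on the line and the line is noncharacteristic, so division by the defining function $\x_1+\x_2-\tfrac12$ is smooth.

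I do not expect any serious obstacle here; the statement is elementary once one uses the identity $\sin^2\theta+\cos^2\theta=1$. The only point requiring a little care is making precise in what sense the two sets are ``equivalent'' near $\x_0$: the cleanest phrasing is that there is a neighborhood $U$ of $\x_0$ in $\mathbb{T}^2$ on which $\{h_0=4\}\cap U=\{\x_1+\x_2=\tfrac12\}\cap U$, and moreover the factorization $h_0-4=(\x_1+\x_2-\tfrac12)m$ with $m$ smooth and nonvanishing holds on $U$. This factored form is presumably what is actually needed in the sequel (to reduce the resolvent analysis near $\x_0$ to a model with a linear phase), so I would state the lemma in that sharper form if it is not already implicit.
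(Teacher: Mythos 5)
Your argument is correct, but it follows a different route from the paper. The paper's proof is a one-line trigonometric factorization: writing $h_0(\x)=4-2\cos 2\pi\x_1-2\cos 2\pi\x_2$, the level condition $h_0(\x)=4$ becomes $\cos 2\pi\x_1+\cos 2\pi\x_2=2\cos(\pi(\x_1+\x_2))\cos(\pi(\x_1-\x_2))=0$, and near $\x_1=\x_2=\tfrac14$ the second factor is close to $1$, so the condition reduces to $\cos(\pi(\x_1+\x_2))=0$, i.e.\ $\x_1+\x_2=\tfrac12$ locally. You instead check that the line $\x_1+\x_2=\tfrac12$ is contained in $\{h_0=4\}$ via $\sin^2+\cos^2=1$, note $\nabla h_0(\tfrac14,\tfrac14)=(4\pi,4\pi)\neq0$, and conclude by the uniqueness part of the implicit function theorem that the level set coincides with the line near $(\tfrac14,\tfrac14)$; this is a complete proof of the set-theoretic equivalence, which is all the lemma asserts and all that is used afterwards (in Proposition \ref{prop2d} only the identification of the curve and the value of $|\nabla h_0|$ along it enter). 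The trade-off is that the paper's identity gives, for free, the explicit global factorization $h_0-4=-4\cos(\pi(\x_1+\x_2))\cos(\pi(\x_1-\x_2))$, which is exactly the sharper ``$h_0-4=(\x_1+\x_2-\tfrac12)m$ with $m$ nonvanishing'' statement you propose (since $\cos(\pi(\x_1+\x_2))=-\sin(\pi(\x_1+\x_2-\tfrac12))$), whereas your Hadamard-type division argument obtains it abstractly; your IFT route, on the other hand, would generalize to situations where no closed-form identity is available. One terminological nitpick: the smooth divisibility of $h_0-4$ by $\x_1+\x_2-\tfrac12$ follows from $h_0-4$ vanishing on the hypersurface whose defining function has nonvanishing differential (Hadamard's lemma); ``noncharacteristic'' is not really the relevant notion there.
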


\begin{proof}
Note that $h_0(\x)=4\sin^2\pi \x_1+4\sin^2\pi \x_2=4-2\cos 2\pi\x_1-2\cos 2\pi \x_2$. Thus, $h_0(\x)=4$ is equivalent to 
\begin{align*}
\cos 2\pi\x_1+\cos 2\pi \x_2=\cos(\pi(\x_1+\x_2))\cos(\pi(\x_1-\x_2))=0.
\end{align*}
Near $\x_1=\x_2=\frac{1}{4}$, this is equivalent to $\x_1+\x_2=\frac{1}{2}$.
\end{proof}

\begin{prop}\label{prop2d}
Let $\chi \in C^{\infty}(\mathbb{T}^2)$ be a non-negative function which is equal to $1$ near $\x_1=\x_2=1/4$ and  is supported near $\x_1=\x_2=1/4$.
If $q\neq \infty$,
\begin{align*}
(H_0-4\pm i0)^{-1}\mathcal{F}_d^{-1}(\chi)\notin l^q(\ze^2).
\end{align*}
\end{prop}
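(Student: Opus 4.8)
The plan is to show that $(H_0-4\pm i0)^{-1}\mathcal{F}_d^{-1}(\chi)$ fails to be in $l^q(\ze^2)$ for finite $q$ by analysing the singularity of the Fourier-side expression $\chi(\x)/(h_0(\x)-4\mp i0)$ near the critical set $\{\x_1=\x_2=1/4\}$. First I would note that since $\chi$ is supported in a small neighborhood of $(1/4,1/4)$ where, by Lemma \ref{2flat}, the level set $\{h_0=4\}$ agrees with the hyperplane $\{\x_1+\x_2=1/2\}$, I can change variables to $\y_1 = \x_1+\x_2-1/2$, $\y_2=\x_1-\x_2$ (or a smooth diffeomorphism achieving $h_0(\x)-4 = c(\y)\y_1$ for a smooth nonvanishing $c$; more precisely, using the factorization $h_0-4 = -2\cos(\pi(\x_1+\x_2))\cos(\pi(\x_1-\x_2))$, the factor $\cos(\pi(\x_1+\x_2))$ vanishes simply and transversally, so $h_0-4 = g(\x)\sin(\pi \y_1)$ with $g$ smooth and nonzero near the relevant region). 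Thus, up to a smooth cutoff, the function to invert-Fourier-transform looks like $a(\y)/(\y_1 \mp i0)$ with $a \in C_c^\infty$.

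Next I would compute, or rather estimate, the inverse Fourier transform of such a distribution. Writing $\psi(\x) = a(\y)/(\y_1\mp i0)$ in the $\y$-coordinates, the $\y_1$-variable contributes (via the one-dimensional formula $\mathcal{F}^{-1}(1/(\cdot \mp i0))$ being a multiple of the Heaviside function times an exponential) a factor behaving like $\pm i\pi\, \mathrm{sgn}$-type term plus principal value, and crucially the distribution $1/(\y_1\mp i0)$ is homogeneous of degree $-1$, hence its inverse Fourier transform in one variable is bounded but does not decay. Concretely, $\mathcal{F}^{-1}_d$ of $\chi(\x)/(h_0(\x)-4\mp i0)$ will, along the direction $x$ dual to $\y_1$ (i.e. the direction normal to $\{\x_1+\x_2=1/2\}$), decay only like $|x|^{-1}$ in the remaining $\y_2$-directions but like $|x|^0$ (no decay, or only logarithmic) along the singular direction — this is the standard phenomenon that the Fourier transform of a surface-carried distribution with a simple pole decays anisotropically, and along the co-normal direction of a flat piece of the surface there is essentially no decay. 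Since in $d=2$ the surface $\{h_0=4\}$ is (locally) a line, the inverse transform behaves like $\sim c_\pm / x_j$ along the relevant lattice direction, which is exactly borderline: $\sum_{x} |x|^{-q}$ over a one-dimensional sublattice converges iff $q>1$, but here the relevant quantity along the co-normal direction does not even decay, giving a contribution $\sum |c_\pm|^q \cdot (\text{number of lattice points}) = \infty$ for every finite $q$.

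More carefully, the key step I would carry out is to extract the leading asymptotics of $u_\pm(x) := (H_0-4\pm i0)^{-1}\mathcal{F}_d^{-1}(\chi)(x)$ as $|x|\to\infty$ along the direction $(1,1)$ (normal to the flat level line). Using the change of variables and a stationary-phase / non-stationary-phase splitting: the $\y_2$-integral against $a$ is a smooth rapidly decaying (in the $\y_2$-dual variable) factor, contributing a nonzero constant in the direction where that dual variable is $0$; the $\y_1$-integral of $e^{2\pi i x\cdot(\text{direction})\y_1}/(\y_1\mp i0)$ against a cutoff gives, by the classical computation, a term that tends to a nonzero constant $\mp \pi i$ (times the value of the cutoff at $0$) as $x\to+\infty$ in the co-normal direction. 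Hence $u_\pm(x)$ does not tend to $0$ along an infinite ray of lattice points $\{n(1,1) : n\in\na\}$, so $\|u_\pm\|_{l^q(\ze^2)}^q \geq \sum_n |u_\pm(n,n)|^q = \infty$ whenever $q<\infty$. I expect the main obstacle to be making the ``no decay along the co-normal direction'' claim rigorous and uniform: one must carefully handle the cutoff $\chi$ (it is smooth but not constant away from the center), ensure the $\y_2$-integration genuinely leaves a nonzero limit rather than something that could cancel, and control the error terms coming from the curvature of $\{h_0=4\}$ outside the exactly-flat piece and from the difference between the true $h_0-4$ and its linearization. A clean way to organize this is to write $\chi/(h_0-4\mp i0) = \chi_0/(\y_1\mp i0)\cdot b(\y) + (\text{smoother remainder})$ where $\chi_0$ is identically $1$ near $0$, $b$ smooth, and show the remainder has inverse Fourier transform in $l^q$ while the main term does not — isolating the obstruction to the single model computation of $\mathcal{F}^{-1}\big(\chi_0(\y)/(\y_1\mp i0)\big)$.
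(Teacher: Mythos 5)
Your proposal is correct and rests on the same key observation as the paper's proof: by Lemma \ref{2flat} the level set $\{h_0=4\}$ coincides exactly with the line $\{\x_1+\x_2=1/2\}$ on the (small) support of $\chi$, so the inverse Fourier transform has no decay along the conormal lattice direction $(1,1)$, which rules out $l^q$-membership for every finite $q$. The only difference in execution is that the paper first passes to the difference of the two boundary values via Stone's theorem, reducing the singularity to the surface delta $\delta(h_0-4)$ whose Fourier transform is computed exactly as $e^{\pi i x_2}J(x_1-x_2)$ with $J(0)\neq0$, whereas you treat $1/(\y_1\mp i0)$ directly through its Heaviside-type inverse transform plus a smooth remainder --- an equivalent computation (your worries about curvature and linearization errors evaporate since the level set is \emph{exactly} flat there) with the minor advantage of handling each sign of $\pm i0$ individually rather than through the difference.
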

\begin{proof}
Let us denote $\mathcal{H}_s=\jap{x}^{-s}\mathcal{H}$.
If we take the support of $\chi$ small near $\x_1=\x_2=\frac{1}{4}$, then $\chi(D)(H_0-4\pm i0)^{-1}$ exists in $B(\H_s, \H_{-s})$ for $s>\frac{1}{2}$ since $\nabla h_0(\x)\neq 0$ on $\supp \chi$. Here we used Lemma \ref{cutMo}. Then, it suffices to prove that $\left((H_0-4- i0)^{-1}-(H_0-4+ i0)^{-1}\right)\mathcal{F}_d^{-1}\chi\notin l^q(\ze^2)$ for $q\neq \infty$. Stone's theorem implies
\begin{align*}
&\frac{1}{2\pi i}((H_0-4- i0)^{-1}-(H_0-4+ i0)^{-1})\mathcal{F}_d^{-1}\chi\\
&=\d(H_0-4)\mathcal{F}_d^{-1}\chi\\
&=\int_{h_0(\x)=4}e^{2\pi ix_1\x_1+2\pi ix_2\x_2}\chi(\x)\frac{d\s(\x)}{|\nabla h_0(\x)|}.
\end{align*}
By using Lemma \ref{2flat} and the formula $($\ref{surf}$)$,
\begin{align*}
I(x_1,x_2):=&\int_{h_0(\x)=4}e^{2\pi ix_1\x_1+2\pi ix_2\x_2}\chi(\x)\frac{d\s(\x)}{|\nabla h_0(\x)|}\\
=&\int_{\re}e^{2\pi ix_1\x_1+2\pi ix_2(\frac{1}{2}-\x_1)}\chi(\x_1,\frac{1}{2}-\x_1)\frac{d\x_1}{4\pi \sin(2\pi \x_1)}
\end{align*}
is rapidly decreasing with respect to $|x_1-x_2|$ since $\chi(1/4,1/4)=1$. However, we cannot obtain any decay with respect to $|x_1+x_2|$. We write
\begin{align*}
I(x_1,x_2)=e^{\pi ix_2}J(x_1-x_2),\,\, J(t)=\int_{\re}e^{2\pi it\x_1}\chi(\x_1,\frac{1}{2}-\x_1)\frac{d\x_1}{4\pi \sin(2\pi \x_1)}.
\end{align*}
We employ the change of variables: $s=x_1+x_2$ and $t=x_1-x_2$ and write $I_1(s,t)=I(x_1,x_2)$.
Since $|I_1(s,t)|=|J(t)|$ is independent of $s$, $|I_1(s,t)|\nrightarrow 0$ as $|s|\to \infty$ unless $t\in \{J(t)= 0\}$. By the assumption of $\chi$, we have $|I_1(s,0)|=|J(0)|\neq 0$. Thus $|I|$ does not decay with respect to $s=x_1+x_2$.
As a consequence, $((H_0-4- i0)^{-1}-(H_0-4+i0)^{-1})\mathcal{F}^{-1}_d(\chi)$ does not belong to $l^q(\ze^2)$ unless $q=\infty$.
\end{proof}
The above proposition shows that if $d=2$, $\chi(D)(H_0-z)^{-1}$ is not bounded from $l^p(\ze^2)$ to $l^q(\ze^2)$ unless $q=\infty$. Thus, the proof of the second part of Theorem \ref{mainne} $($ii$)$ is completed.


In the rest of the subsection, we prove the first part of Theorem \ref{mainne} $($ii$)$.
Let $w(x_1,x_2)=\jap{x_1+x_2}^{-1/2}\jap{x_1-x_2}^{-1}\in l^{2,\infty}(\ze^2)$ and $\g$ be a non zero finitely supported function such that $\g\geq 0$. Let $u(x)=e^{-\pi i\frac{x_1+x_2}{2}}(w^{-1}\g)(x)$. Note that
\begin{align*}
\mathcal{F}_d^{-1}(wu)(1/4,1/4)=\sum_{x\in \ze^2}e^{2\pi i\times \frac{x_1+x_2}{4}}e^{-\pi i\frac{x_1+x_2}{2}}\g(x)=\sum_{x\in \ze^2}\g(x)>0.
\end{align*}
Thus $|w(x)\chi(D)\delta(H_0-4)wu(x)|\sim C (1+|x_1+x_2|)^{-1/2}(1+|x_1-x_2|)^{-\infty}$ as in the proof of Proposition \ref{prop2d} and the right hand side does not belong to $\H$.

\subsection{Proof of Theorem \ref{mainne} $(iii)$}

In Proposition \ref{discuni} and Theorem \ref{discuni2}, we have seen uniform bounds of Birman-Schwinger operator for $V\in l^{d/3,\infty}(\ze^d)$ or $V(x)=\jap{x}^{-2}$. Since $\jap{x}^{-2}\in l^{d/2,\infty}(\ze^d)$, it is natural to ask whether it is true for general potentials $V\in l^{d/2,\infty}(\ze^d)$. However, the next proposition says that it is false at least if $d\geq 5$. 

\begin{prop}
Let $d\geq 3$ and 
\begin{align*}
w(x)=w_p(x)=\jap{x_d}^{-1/p}\prod_{j=1}^{d-1}\jap{x_j-x_d}^{-1/p}\in l^{p,\infty}(\ze^d),\quad p>0.
\end{align*}
Suppose that
\begin{align}\label{colap}
\sup_{\l\in \re\setminus \s(H_0)}\|w_p(H_0-\l)^{-1}w_p\|_{B(\H)}<\infty.
\end{align}
Then, $p\leq 2(d+2)/3$ holds. In particular, if $d\geq 5$ then $w_d$ does not satisfy $($\ref{colap}$)$.
\end{prop}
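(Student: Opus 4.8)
The plan is to contradict $(\ref{colap})$ by showing that the boundary values of $w_p(H_0-\l)^{-1}w_p$ blow up as $\l$ approaches the hyperbolic threshold value $4$ (more precisely, a threshold at which $h_0$ has exactly one plus-sign and $d-1$ minus-signs, so that the level set near the critical point is, after a change of variables, the ultrahyperbolic cone $\{\y_1^2 = \y_2^2+\dots+\y_d^2\}$; the weight $w_p$ is adapted precisely to the hyperplane-type structure of this cone). Concretely, I would first localise with a cutoff $\chi\in C^\infty(\T^d)$ supported near the chosen critical point $\x_0\in\Gamma$ and equal to $1$ nearby, and pass through the diffeomorphism $\k$ from Section 2 flattening $h_0$ to $h_0(\x_0)+p_k(\y)$ with $k=1$; away from $\x_0$ the resolvent is harmless by Proposition \ref{cutMo}. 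So the problem reduces to a lower bound for $\|\tilde w_p \delta(p_1 + \e)\tilde w_p\|_{B(L^2)}$ on the ultrahyperbolic cone, where $\tilde w_p$ is the push-forward of $w_p$ under $\k$; and by Stone's theorem this is controlled by the jump of $w_p(H_0-\l\mp i0)^{-1}w_p$ across $\l=4$.

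Next I would compute the $\delta$-measure on the cone explicitly. As in the proof of Proposition \ref{ulcoex}, $\{p_1(\y)=-\e\}$ is $\{\y_1^2 = |\y'|^2+\e\}$, and using $(\ref{surf})$ the surface measure against $1/|\nabla p_1|$ collapses to $\tfrac{1}{2}\,d\y'/\sqrt{|\y'|^2+\e}$ on $\re^{d-1}$. I would then test the quadratic form $(\f,\delta(p_1+\e)\f)$ on a function $\f$ chosen so that $|\f|^2$ overlaps the region where $\tilde w_p$ is large — namely a function concentrated near $\y'=0$ — and track the power of $\e$ produced by the integral $\int_{|\y'|\le 1}\tilde w_p(\y)^2\,d\y'/\sqrt{|\y'|^2+\e}$. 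The point is to count how the specific profile of $w_p$, i.e. $\jap{x_d}^{-1/p}\prod_{j<d}\jap{x_j-x_d}^{-1/p}$, transfers through $\mathcal{F}_d$ and $\k$ into a singularity in $\y'$; boundedness of $(\ref{colap})$ uniformly in $\e$ forces the exponent of $\e$ to be nonnegative, and this inequality is exactly $p\le 2(d+2)/3$. For the final assertion, $p=d > 2(d+2)/3 \iff 3d > 2d+4 \iff d>4$, so $d\ge 5$ gives the stated counterexample $w_d\in l^{d,\infty}(\ze^d)$.

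The main obstacle I anticipate is the bookkeeping in the reduction from the discrete operator to the model cone: the weight $w_p$ is a function of the \emph{combinations} $x_d$ and $x_j-x_d$ rather than of $|x|$, so unlike the isotropic case one cannot simply invoke a scaling argument (cf. Remark \ref{sculrem}), and one must verify that the linear change of coordinates $(x_1,\dots,x_d)\mapsto(x_d, x_1-x_d,\dots,x_{d-1}-x_d)$ is exactly the Fourier-side dual of the flattening $\k$ near the relevant hyperbolic threshold. Getting this matching right — so that the hyperplane $\{\y_1 + \text{(something)} = \text{const}\}$ structure of the cone's level set lines up with the product structure of $w_p$, in analogy with the two-dimensional Lemma \ref{2flat} — is where the real content lies; once the model problem is correctly set up, the $\e$-power count and the resulting arithmetic on $p$ are routine.
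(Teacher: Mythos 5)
Your construction is aimed at the wrong point of the momentum torus, and this is not a cosmetic issue: the exponent $2(d+2)/3$ cannot come from a hyperbolic threshold. The paper's counterexample is a Knapp-type example centred at the \emph{regular} point $\x_0=(1/4,\dots,1/4)$ on the energy surface $h_0=2d$, which does not lie in $\Gamma$. What matters there is that all second derivatives of $h_0$ vanish (since $\cos(2\pi\cdot\tfrac14)=0$), so the level surface deviates from its tangent hyperplane $\sum_j(\x_j-1/4)=0$ only at cubic order; consequently an $\e\times\dots\times\e$ cap of the surface (measure $\sim\e^{d-1}$) fits inside a slab of thickness $a\e^3$ in the direction $\sum_j\x_j$. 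The weight $w_p$ is adapted exactly to this slab: writing $x\cdot\x=x_d\sum_j\x_j+\sum_{j<d}(x_j-x_d)\x_j$, the variable $x_d$ is dual to the thin ($a\e^3$) direction and $x_j-x_d$ to the tangential ($\e$) directions, and an explicit test function with $\widehat{w\f_\e}=\chi\bigl(\sum_j\tfrac{\x_j-1/4}{a\e^3}\bigr)\prod_{j<d}\chi\bigl(\tfrac{\x_j-1/4}{\e}\bigr)$ gives $(\f_\e,w\,\delta(H_0-2d)\,w\f_\e)\gtrsim\e^{d-1}$ while $\|\f_\e\|^2\lesssim\e^{(d+2)(1-2/p)}$; comparing powers yields $p\le 2(d+2)/3$, the factor $3$ coming precisely from the $\e^3$ thickness. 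Lemma \ref{2flat} is the $d=2$ shadow of this flatness, not of any cone. At a hyperbolic threshold the Hessian is nondegenerate, the model surface $\{\y_1^2=|\y'|^2+\e\}$ has $d-2$ nonvanishing curvatures and only quadratic tangency along its rulings, so any Knapp-type count there produces a different (and for this purpose insufficient) numerology; in particular it cannot reach $p\le 2(d+2)/3$, which for $d\ge 5$ is strictly stronger than what cone or scaling considerations (cf.\ Remark \ref{sculrem} and Proposition \ref{ulcoex}, which only see isotropic $\jap{x}^{-s}$ decay) can give — and the whole point of the proposition is to exclude $w_d$ for $d\ge5$.

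There is a second, independent gap in your reduction step. You propose to flatten $h_0$ near the threshold by the Morse diffeomorphism $\k$ and to replace $w_p$ by ``the push-forward $\tilde w_p$ under $\k$''. But $w_p$ is a multiplication operator on the lattice side, while $\k$ acts nonlinearly on the momentum torus; conjugating by $\k$ on the Fourier side does not transform $w_p$ into another weight, but into a genuinely nonlocal operator. In the isotropic negative result (Theorem \ref{mainne} $(iv)$) the paper sidesteps this because $\jap{x}^{-s}$ weights correspond to the Sobolev spaces $H^s(\T^d)$, which are diffeomorphism-invariant by Lemma \ref{lemb3}; no such invariant characterization exists for the anisotropic $w_p$, and your argument never supplies a substitute. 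The correct proof avoids any change of variables altogether: it works directly on the Fourier side at the flat regular point with linear combinations of the $\x_j$, and handles the weight by the explicit summation bound $\sum_{x_j}\jap{x_j-x_d}^{2/p}|(\mathcal{F}^1)^{-1}\chi(\e(x_j-x_d))|^2\lesssim\e^{-1-2/p}$. Your final arithmetic ($d>2(d+2)/3$ iff $d\ge5$) is correct, but the core of the argument — where the blow-up comes from and how the weight couples to it — needs to be relocated and redone as above.
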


\begin{proof}
We construct a variant of the Knapp counter example near the energy surface $h_0(\x)=2d$. We denote the $d$-dimensional Fourier expansion $\mathcal{F}_d$ of $u$ by $\hat{u}$ and the one dimensional Fourier transform by $\mathcal{F}^1$.  We take a real valued function $\chi\in C_c^{\infty}\left((-\frac{1}{4}, \frac{1}{4})\right)$ such that $\chi=1$ near $0$. We can regard $\chi$ as a function on $S^1$. Let
\begin{align*}
\f_{\e}(x)=&e^{2\pi i(dx_d/4+\sum_{j=1}^{d-1}x_j/4)}a\e^{d+2}w_{p}^{-1}(x)((\mathcal{F}^1)^{-1}{\chi})(a\e^3x_d)\\
&\times\prod_{j=1}^{d-1}((\mathcal{F}^1)^{-1}\chi)(\e (x_j-x_d))
\end{align*}
for $0<\e\leq 1$, $a>0$ and $x\in \ze^d$. Then,
\begin{align*}
\widehat{w\f_{\e}}(\x)=\chi \left(\sum_{j=1}^d{\frac{\x_j-1/4}{a\e^3}}\right) \prod_{j=1}^{d-1}\chi\left(\frac{\x_j-1/4}{\e}\right)\in C^{\infty}(\mathbb{T}^d),
\end{align*}
where $\x\in [0,1)^d$ and we regard the function $\widehat{w\f_{\e}}$ on $[0,1)^d$ as a function on $\T^d$ by virtue of the support property of $\chi$. 
Note that $w\f_{\e}$ is rapidly decreasing and $\widehat{w\f_{\e}}$ has a small support near $\{\x_j=1/4,\,j=1,...,d\}$ which does not contain critical points of $h_0(\x)=4\sum_{j=1}^d\sin^2 (\pi\x_j)$. Thus $(\f_{\e},w\delta(H_0-2d)w\f_{\e})_{\mathcal{H}\to \mathcal{H}}$ exists by Proposition \ref{cutMo}. We observe that if $\x'=(\x_1,...,\x_{d-1}) \in \supp \left(\prod_{j=1}^{d-1}\chi(\frac{\x_j-1/4}{\e})\right)$ and $\x\in h_0^{-1}(\{2d\})$, then
\begin{align}\label{order}
\sum_{j=1}^d(\x_j-1/4)=O(\e^3).
\end{align}
In fact, by using the Taylor expansion near $\{\x_j=1/4,\,j=1,...,d\}$, we have
\begin{align*}
0=h_0(\x)-2d=&4\sum_{j=1}^d(\x_j-1/4)+O\left(\sum_{j=1}^d(\x_j-1/4)^3\right)\\
=&4\sum_{j=1}^d(\x_j-1/4)+O(\e^3)+O((\x_d-1/4)^3).
\end{align*}
This implies (\ref{order}). Therefore, if we take $a>0$ large enough (which remains to be independent of $\e$), it follows that
\begin{align*}
\supp \left(\prod_{j=1}^{d-1}\chi\left(\frac{\x_j-1/4}{\e}\right)\right)\cap h_0^{-1}(\{2d\})\subset \supp\chi\left(\sum_{j=1}^d\frac{\x_j-1/4}{a\e^3}\right).
\end{align*}
By using this, we obtain

\begin{align*}
(\f_{\e},w\delta(H_0-2d)w\f_{\e})_{\H}=&(\widehat{w\f_{\e}}, \delta(h_0-2d)\widehat{w\f_{\e}})_{L^2(\mathbb{T}^d)}\\
=&\int_{h_0(\{2d\})\cap (-\frac{1-\e}{4},\frac{1+\e}{4})^d}|\widehat{w\f_{\e}}(\x)|^2 \frac{d\s(\x)}{|\nabla{h_0(\x)}|}\\
\geq& C\e^{d-1}
\end{align*}
for some $C>0$ which is independent of $\e$.

On the other hand, we observe that for $s>2$
\begin{align*}
\sum_{x_j\in \mathbb{Z}}&\jap{x_j-x_d}^{2/p}|((\mathcal{F}_d^1)^{-1}\chi)(\e(x_j-x_d))|^2\\
\leq& C\sum_{x_j\in \mathbb{Z}}\jap{x_j-x_d}^{2/p}\jap{\e(x_j-x_d)}^{-2s}\\
=&C\left(\sum_{|x_j|<1/\e}+\sum_{|x_j|\geq 1/\e}\right)\jap{x_j}^{2/p}\jap{\e x_j}^{-2s}\leq C\e^{-1-2/p}.
\end{align*}
Then, we obtain
\begin{align*}
\|\f_{\e}\|_{\H}^2\leq C\e^{2(d+2)}\cdot \e^{(d+2)(-1-2/p)}=C\e^{(d+2)(1-2/p)}.
\end{align*}
By using (\ref{colap}), we have $\e^{d-1}\leq C\e^{(d+2)(1-2/p)}$. Since this holds for small $\e>0$, we conclude $p\leq 2(d+2)/3$.

\end{proof}

\subsection{Proof of Theorem \ref{mainne} $(iv)$}

For $0\leq s<1$, we prove
\begin{align*}
\sup_{z\in \mathbb{C}\setminus \s(H_0)}\|\jap{x}^{-s}(H_0-z)^{-1}\jap{x}^{-s}\|_{B(\H)}=\infty.
\end{align*}
It suffices to prove that there exists $\f\in H^s(\mathbb{T}^d)$ such that
\begin{align*}
\lim_{\e\to 0, \e>0}|(\f,(h_0(\x)+\e)^{-1}\f)_{L^2(\mathbb{T}^d)}|=\infty.
\end{align*}
Fix $\x_0\in h_0^{-1}(\{0\})$. Then there exists a diffeomorphism $f$ from a small neighborhood of $\x_0$ to a small open ball in $\re^d$ such that $h_0(f^{-1}(\y))=|\y|^2$. We take
\begin{align*}
\f(\x)=\chi(\x)f^{*}\left(\frac{1}{|\y|^{\frac{d-2}{2}}}\right)(\x),
\end{align*}
where $\chi\in C^{\infty}(\mathbb{T}^d)$ has a small support near $\x_0$. Note that $\f\in H^{s}(\mathbb{T}^d)$ for $0\leq s<1$ due to Lemma \ref{reg}. Thus, we obtain
\begin{align*}
|(\f,(h_0(\x)+\e)^{-1}\f)_{l^2}|\geq C\int_{\y\in \re^d, |\y|\text{:near}\,0}\frac{1}{|\y|^{d-2}(|\y|^2+\e)}d\y\to \infty
\end{align*}
as $\e\to 0$.

\begin{rem}\label{remcoex}
In the above proof, we have constructed a function supported near an elliptic threshold. However, this argument is applicable to near a hyperbolic threshold. See the proof of Proposition \ref{ulcoex}. 
\end{rem}

\appendix

\section{Self-contained proof of Proposition \ref{lapconti} in a particular case}\label{appa}
We can apply the argument in Subsection \ref{sttouni}  to the ultrahyperbolic operators $P$: Note that
\begin{align*}
e^{4\pi^2 itP}u(x,y)=&e^{-it\Delta_x}e^{it\Delta_y}u(x,y)\\
=&\frac{1}{(-4\pi i t)^{\frac{k}{2}}}\frac{1}{(4\pi i t)^{\frac{d-k}{2}}}\int_{\re^k_{x'}}\int_{\re^{d-k}_{y'}}e^{\frac{|x-x'|^2}{4it}}e^{-\frac{|y-y'|^2}{4it}}u(x',y')dy'dy',
\end{align*}
where $x\in \re^k$ and $y\in \re^{d-k}$.
Thus, we obtain the following dispersive estimates: 
\begin{align*}
\|e^{4\pi^2 itP}\|_{L^1(\re^d)\to L^{\infty}(\re^d)}\leq \frac{1}{(4\pi|t|)^{\frac{d}{2}}},\quad t\in \re\setminus \{0\}.\\
\end{align*}
Using a complex interpolation, we have
\begin{align}\label{disp}
\|e^{4\pi^2itP}\|_{L^p(\re^d)\to L^{p'}(\re^d)}\leq \frac{1}{(4\pi|t|)^{\frac{d}{2}(\frac{1}{p}-\frac{1}{p'})}}
\end{align}
for $1\leq p\leq 2$ and $p'=(p-1)/p$.
By using the unitarity of $e^{4\pi^2itP}$ and \cite[Theorem 10.1]{KT}, we have the following:

Let $d\geq 3$ and $P$ be an ultrahyperbolic operator. Let $2^*=\frac{2d}{d-2}$ and $2_*=\frac{2d}{d+2}$. Suppose that $u\in C(\re, L^2(\re^d))$ and $F\in L^2L^{2_*,2}$ satisfy
\begin{align}\label{loStul}
i\pa_tu(t)-Pu(t)=F,\quad u|_{t=0}=f\in L^2(\re^d).
\end{align}
Then there exists $C>0$ such that for $0<T\leq \infty$ we have
\begin{align*}
\|u\|_{L^{2}(-T,T)L^{2^*,2}(\re^d)}\leq C\|f\|_{L^2(\re^d)}+C\|F\|_{L^{2}(-T,T)L^{2_{*},2}(\re^d)}. 
\end{align*}

Replacing $3_*$, $3^*$ in the arguments in Subsection \ref{sttouni} by $2_*$, $2^*$ respectively, we have the following statements:
Let $R_0(z)=(P-z)^{-1}$ for $z\in \mathbb{C}\setminus \s(P)$. Then there exists $C>0$ such that for $z\in \mathbb{C}\setminus \s(P)$ and $f\in L^2(\re^d)\cap L^{2^*,2}(\re^d)$
\begin{align}\label{uniul}
\|R_0(z)f\|_{L^{2^*,2}(\re^d)}\leq C\|f\|_{L^{2_*,2}(\re^d)}.
\end{align}
Moreover, for $w\in L^{d,\infty}(\re^d)$ we have
\begin{align*}
\sup_{z\in \mathbb{C}\setminus \re}\|wR_0(z)w\|_{B(L^2(\re^d))}\leq C\|w\|_{L^{d,\infty}(\re^d)}^2.
\end{align*}
In particular, $\|\jap{x}^{-1}R_0(z)\jap{x}^{-1}\|_{B(L^2(\re^d))}$ is bounded in $z\in \mathbb{C}\setminus \re$.

\section{Resolvent near regular points}

In this section, we study properties of the cut-off resolvent of $H_0$ near regular points of $h_0$.

\begin{lem}\label{aplem}
Let $d\geq 1$ and $\e>0$. Then,
\begin{align}\label{apb1}
\sup_{z\in \mathbb{C}\setminus \re}\|\jap{\y_1}^{-\frac{1}{2}-\e}(D_{\y_1}-z)^{-1}\jap{\y_1}^{-\frac{1}{2}-\e}\|_{B(L^2(\re^d))}<\infty.
\end{align}
If $0<\e\leq 1$, there exists $0<\a_{\e}\leq 1$ such that $\jap{\y_1}^{-\frac{1}{2}-\e}(D_{\y_1}-z)^{-1}\jap{\y_1}^{-\frac{1}{2}-\e}$ is $\a_{\e}$-H\"older continuous in the operator norm topology of $B(L^2(\re^d))$.
\end{lem}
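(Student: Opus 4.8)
The statement is a one-dimensional limiting absorption principle for the operator $D_{y_1}$ on $L^2(\re^d)$, acting only in the first variable. The plan is to reduce immediately to the genuinely one-dimensional operator $D_t$ on $L^2(\re)$ by a partial Fourier transform in the variables $y' = (y_2,\dots,y_d)$: since $D_{y_1}$ commutes with this transform and with multiplication by $\jap{y_1}^{-\frac12-\e}$, the norm in \eqref{apb1} equals $\sup_{z}\|\jap{t}^{-\frac12-\e}(D_t-z)^{-1}\jap{t}^{-\frac12-\e}\|_{B(L^2(\re))}$, and the same reduction works for the H\"older continuity claim. So it suffices to treat $d=1$.

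For $d=1$, I would compute the resolvent kernel explicitly via the Fourier transform: $(D_t-z)^{-1}$ is multiplication by $(\tau-z)^{-1}$ on the Fourier side, whose inverse Fourier transform is (up to constants) $\mp 2\pi i\, e^{2\pi i z t}\mathbf{1}_{\pm t>0}$ depending on the sign of $\Im z$. Thus $\jap{t}^{-\frac12-\e}(D_t-z)^{-1}\jap{t}^{-\frac12-\e}$ is an integral operator with kernel bounded in absolute value by $C\jap{t}^{-\frac12-\e}\jap{s}^{-\frac12-\e}$ on the relevant half-plane $\{\pm(t-s)>0\}$; since $\jap{t}^{-\frac12-\e}\in L^2(\re)$, this kernel is Hilbert--Schmidt with norm bounded uniformly in $z$, giving \eqref{apb1}. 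For the H\"older continuity, I would estimate the difference kernel: $e^{2\pi i z t}-e^{2\pi i z' t}$ on $\{t>0\}$ (say) is bounded by $\min(2, 2\pi|z-z'|\,t)$ when $\Im z,\Im z'\ge 0$, and combining this with the weights and splitting the $t$-integral at $|z-z'|^{-1}$ — exactly as in the proof of Proposition \ref{ulbv} — yields a bound $C|z-z'|^{\a_\e}$ in Hilbert--Schmidt norm, hence in $B(L^2(\re))$, with $\a_\e$ depending on $\e$ through the decay rate $\frac12+\e$ of the weights; taking $0<\a_\e\le 1$ is harmless. One must be slightly careful that $z$ and $z'$ lie in the same half-plane (the kernel formula changes across the real axis), but since the two boundary values $(D_t-\m\pm i0)^{-1}$ differ by the bounded operator $\jap{t}^{-\frac12-\e}\delta(D_t-\m)\jap{t}^{-\frac12-\e}$ (a rank-one-type operator with kernel $e^{2\pi i\m(t-s)}$ times the weights, manifestly Hilbert--Schmidt), one can patch the two half-planes together, or simply state H\"older continuity on $\mathbb C_+$ and $\mathbb C_-$ separately as in Proposition \ref{ulbv}.

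The main obstacle is essentially bookkeeping rather than conceptual: one needs the decay exponent $\frac12+\e>\frac12$ precisely so that $\jap{t}^{-\frac12-\e}\in L^2$, which is what makes the Hilbert--Schmidt argument work and is the discrete/continuous analogue of the classical one-dimensional LAP threshold; and one must track how $\a_\e$ degrades as $\e\to 0$. Everything else — the partial Fourier reduction, the explicit kernel, the $\min(2,|z-z'|t)$ trick — is routine and parallels arguments already used in the paper.
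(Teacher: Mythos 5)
Your proof is correct, and the uniform bound \eqref{apb1} is obtained essentially as in the paper: the explicit kernel $2\pi i\,e^{2\pi i z(t-s)}\mathbf{1}_{\pm(t-s)>0}$ is dominated by $2\pi$ on the relevant half-plane, and the weights $\jap{t}^{-\frac12-\e}\in L^2(\re)$ make the weighted kernel Hilbert--Schmidt uniformly in $z$ (the paper phrases this as a Cauchy--Schwarz estimate in $s$ followed by integration in $\y_1$ and $\y'$, which is the same computation). For the H\"older continuity you diverge from the paper in an interesting way: you estimate the difference kernel directly via $\min\left(2,\,2\pi|z-z'|\,|t-s|\right)$, absorb the factor $|t-s|^{\a}$ into the weights (which costs $\a<\e$), and conclude in Hilbert--Schmidt norm; the paper instead proves a genuine Lipschitz bound with the heavier weights $\jap{\y_1}^{-\frac32-\e}$ (using $|e^{2\pi iz r}-e^{2\pi iz'r}|\le 2\pi|z-z'|r$) and then complex-interpolates against \eqref{apb1} to trade weight decay for a H\"older exponent. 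Your route is more direct and, if you optimize, gives any exponent $\a_\e<\min(\e,1)$, which is at least as good as what the interpolation yields; the paper's route has the minor advantage of packaging the $\e$-dependence into a standard interpolation step. One small correction: the option of ``patching the two half-planes together'' is not available, because the boundary values from above and below differ exactly by the nonzero jump $2\pi i$ times the weighted $\delta(D_{\y_1}-\m)$, so the operator is genuinely discontinuous across $\re$; the correct reading of the statement (and what the paper proves, via ``the case $\Im z<0$ is similarly proved'') is H\"older continuity on each closed half-plane separately, which is your stated fallback.
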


\begin{proof}
Suppose $\Im z> 0$. We denote $\y=(\y_1, \y')$ for $\y\in \re^d$, $\y_1\in\re$ and $\y'\in \re^{d-1}$. Using the Cauchy-Schwarz inequality, we have
\begin{align*}
|(D_{\y_1}-z)^{-1}(\jap{\y}^{-\frac{1}{2}-\e}u(\y))|=&\left|2\pi i\int_{-\infty}^{\y_1}e^{2\pi iz(\y_1-s)}\jap{s}^{-\frac{1}{2}-\e}u(s,\y')ds\right|\\
\leq&C\|u(\cdot, \y')\|_{L^2(\re)}
\end{align*}
for $u\in L^2(\re^d)$, where $C>0$ is independent of $z$, $\y$ and $u$. Thus, we have
\begin{align*}
\int_{\re}|\jap{\y_1}^{-\frac{1}{2}-\e}(D_{\y_1}-z)^{-1}(\jap{\y_1}^{-\frac{1}{2}-\e}u(\y))|^2d\y_1\leq C\|u(\cdot, \y')\|_{L^2(\re)}^2
\end{align*}
with some $C>0$ which is independent of $z$ and $u$.
Integrating the above inequality with respect to $\y'\in\re^{d-1}$, we obtain $($\ref{apb1}$)$. 

Suppose $\Im z, \Im z'>0$. Set $w(\y)=\jap{\y_1}^{-\frac{3}{2}-\e}u(\y)$ for $\e>0$. Using the Taylor theorem and the Cauchy-Schwarz inequality, we have
\begin{align*}
&|\jap{\y_1}^{-\frac{3}{2}-\e}((D_{\y_1}-z)^{-1}-(D_{{\y_1}}-z')^{-1})w(\y)|\\
&=\left|2\pi i\jap{\y_1}^{-\frac{3}{2}-\e}\int_{-\infty}^{\y_1}(e^{2\pi iz(\y_1-s)}-e^{2\pi iz'(\y_1-s)})w(s,\y')ds\right|\\
&\leq2\pi  |z-z'| \jap{\y_1}^{-\frac{3}{2}-\e}\int_{-\infty}^{\y_1} |(\y_1-s)w(s,\y')|ds\\
&\leq C|z-z'|\jap{\y_1}^{-\frac{1}{2}-\e}\|u(\cdot,\y')\|_{L^2(\re)}.
\end{align*}
Integrating the square of the above inequality, we obtain
\begin{align}\label{apb2}
\|\jap{\y_1}^{-\frac{3}{2}-\e}((D_{\y_1}-z)^{-1}-(D_{\y_1}-z')^{-1})\jap{\y_1}^{-\frac{3}{2}-\e}\|_{B(L^2(\re^d))}\leq C|z-z'|.
\end{align}
Moreover, by $($\ref{apb1}$)$, we have
\begin{align}\label{apb3}
\|\jap{\y_1}^{-\frac{1}{2}-\e}((D_{\y_1}-z)^{-1}-(D_{\y_1}-z')^{-1})\jap{\y_1}^{-\frac{1}{2}-\e}\|_{B(L^2(\re^n))}\leq C.
\end{align}

By using a complex interpolation between $($\ref{apb2}$)$ and $($\ref{apb3}$)$, we obtain the H\"older continuity of $\jap{\y_1}^{-\frac{1}{2}-\e}(D_{\y_1}-z)^{-1}\jap{\y_1}^{-\frac{1}{2}-\e}$ in $B(L^2(\re^d))$ for $\e>0$. The case $\Im z<0$ is similarly proved.

\end{proof}

For a proof of our main result in this section, we need the following two lemmas.
\begin{lem}\label{lemb2}
Let $\chi, \g\in C_c^{\infty}(\re^d)$ satisfy $\supp\chi\subset \{\g=1\}$. Then, for $\a\in \re$ there exits $C>0$ such that
\begin{align*}
\|(1-\g)\jap{D}^{\a}\chi u\|_{L^2(\re^d)}\leq C\|u\|_{L^2(\re^d)},\quad u\in L^2(\re^d).
\end{align*}

\end{lem}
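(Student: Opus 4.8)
The plan is to exploit the disjointness of the supports of $\chi$ and $1-\g$. Since $\chi\in C_c^{\infty}(\re^d)$ and $\g\equiv 1$ near $\supp\chi$, the compact set $\supp\chi$ and the closed set $\supp(1-\g)$ are disjoint, so $\d:=\dist(\supp\chi,\supp(1-\g))>0$. I would write $\jap{D}^{\a}$ as convolution with the Bessel-potential kernel $G_{\a}=\mathcal{F}^{-1}\jap{\x}^{\a}$, using the standard fact that $G_{\a}\in C^{\infty}(\re^d\setminus\{0\})$ and decays faster than any polynomial at infinity; this follows by integration by parts, since $D_{\x}^{\b}\jap{\x}^{\a}$ is a symbol of order $\a-|\b|$, hence lies in $L^1(\re^d)$ once $|\b|$ is large, so that $x^{\b}G_{\a}(x)$ is bounded for such $\b$.

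Next I would cut off the singularity of $G_{\a}$ at the origin: choose $\r\in C_c^{\infty}(\re^d)$ with $\r=1$ near $0$ and $\supp\r\subset\{|x|<\d/2\}$, and split $G_{\a}=\r G_{\a}+(1-\r)G_{\a}$. Correspondingly, for $u\in L^2(\re^d)$ (so that $\chi u\in L^2(\re^d)$ has compact support),
\[
(1-\g)\jap{D}^{\a}\chi u=(1-\g)\bigl((\r G_{\a})*(\chi u)\bigr)+(1-\g)\bigl(((1-\r)G_{\a})*(\chi u)\bigr).
\]
The first term is identically zero, because $(\r G_{\a})*(\chi u)$ is supported in $\supp\r+\supp\chi$, a set at distance $<\d/2$ from $\supp\chi$ and hence disjoint from $\supp(1-\g)$. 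For the second term, the factor $1-\r$ removes the singularity of $G_{\a}$ at the origin, so $(1-\r)G_{\a}$ is smooth and rapidly decreasing and thus lies in $L^1(\re^d)$; Young's inequality together with $\|1-\g\|_{L^{\infty}}<\infty$ then yields
\[
\bigl\|(1-\g)\jap{D}^{\a}\chi u\bigr\|_{L^2}\leq\|1-\g\|_{L^{\infty}}\,\|(1-\r)G_{\a}\|_{L^1}\,\|\chi\|_{L^{\infty}}\,\|u\|_{L^2},
\]
which is the asserted estimate.

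The only delicate point is the off-diagonal decay of $G_{\a}$ when $\a\geq 0$, where $G_{\a}$ is genuinely singular at the origin; but this is classical for Bessel potentials and, as indicated, reduces to symbol bounds on $\jap{\x}^{\a}$. Alternatively, one could avoid kernels altogether and note that $(1-\g)\jap{D}^{\a}\chi$ is a pseudodifferential operator whose full symbol vanishes (because $\supp\chi\cap\supp(1-\g)=\emptyset$), hence is smoothing and in particular bounded on $L^2(\re^d)$; I would nevertheless prefer the elementary convolution argument above since it is completely self-contained.
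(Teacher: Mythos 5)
Your proof is correct, and it rests on the same underlying fact as the paper's argument --- the off-diagonal rapid decay of the kernel of $\jap{D}^{\a}$, extracted from symbol bounds on $\jap{\x}^{\a}$ by integration by parts --- but the organization is genuinely different. The paper estimates the full Schwartz kernel of $(1-\g)\jap{D}^{\a}\chi$ in one stroke: on its support $|x-y|$ is bounded below and comparable to $\jap{x}$, so repeated integration by parts in $\x$ via $L=(1+|x-y|^{2})^{-1}(1-(x-y)\cdot D_{\x})$ gives a pointwise bound $|K(x,y)|\leq C|x-y|^{-d-1}$, and $L^{2}$-boundedness follows by Cauchy--Schwarz in $y$ over the compact set $\supp\chi$. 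You instead split the convolution kernel $\mathcal{F}^{-1}\jap{\x}^{\a}$ at the scale of the support gap $\d$, discard the near-origin piece by support arithmetic, and finish with Young's inequality; this cleanly isolates the only place the singularity of $\mathcal{F}^{-1}\jap{\x}^{\a}$ could matter and shows it does not. Two points in your version deserve an explicit sentence: for $\a\geq 0$ the identity $\jap{D}^{\a}(\chi u)=(\mathcal{F}^{-1}\jap{\x}^{\a})*(\chi u)$ and the splitting must be read distributionally ($\r\,\mathcal{F}^{-1}\jap{\x}^{\a}$ is a compactly supported distribution convolved with the compactly supported $L^{2}$ function $\chi u$, and it is the support rule for such convolutions that kills the first term); and the assertion that $(1-\r)\mathcal{F}^{-1}\jap{\x}^{\a}$ is Schwartz needs rapid decay of all its derivatives, obtained from the same symbol estimates, although for Young's inequality the $L^{1}$ bound you actually use already suffices. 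Your closing alternative --- that $(1-\g)\jap{D}^{\a}\chi$ has vanishing symbol and is therefore smoothing --- is exactly the ``disjoint support property of pseudodifferential operators'' that the paper invokes before giving its self-contained computation.
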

\begin{proof}
This lemma follows from the disjoint support property of pseudodifferential operators. For the sake of the completeness of this paper, we give a self-contained proof. Considering the support property of $\chi$ and $\g$, we observe that $c|x|\leq |x-y|\leq C|x|$ on $\supp (1-\g(x))\chi(y)$. Set $L=(1+|x-y|^2)^{-1}(1-(x-y)\cdot D_{\x})$, then note that $Le^{2\pi i(x-y)\cdot \x}=e^{2\pi i(x-y)\cdot \x}$. Integrating by parts, we have
\begin{align*}
&|(1-\g(x))\jap{D}^{\a}\chi u(x)|\\
=&\left|(1-\g(x))\int_{\re^{2d}}(L^*)^N(\jap{\x}^{\a})e^{2\pi i(x-y)\cdot \x}(\chi u)(y)dyd\x\right|\\
\leq&|1-\g(x)|\int_{\re^d}\frac{1}{|x-y|^{d+1}}|\chi(y)u(y)|dy\\
\leq&\jap{x}^{-2-2d}\|u\|_{L^2(\re^d)}
\end{align*}
for any integer $N>\a+d+1$. Integrating the square of the above inequality with respect to $x\in \re^d$, we obtain the desired result.
\end{proof}

\begin{lem}\label{lemb3}
Let $U\subset \T^d$ be an open set and $\k$ be a diffeomorphism from $U$ onto an open set in $\re^d$. Set $u_{\k}(\y)=u(\k^{-1}(\y))$. Then, for $\chi\in C^{\infty}_c(U)$ and $\a\geq 0$, we have
\begin{align*}
\|\chi_{\k} u_{\k}\|_{H^{\a}(\re^d)}\leq C\|u\|_{H^{\a}(\T^d)},\quad u\in H^{\a}(\T^d)
\end{align*}
for some $C>0$.
\end{lem}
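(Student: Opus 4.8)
The plan is to prove the estimate first for integer $\a$ by a direct chain-rule computation and then to recover the remaining values by interpolation. First I would make a harmless reduction: choose $\g\in C_c^{\infty}(U)$ with $\g\equiv1$ on a neighborhood of $\supp\chi$, so that $\chi u=\chi\cdot(\g u)$. Identifying a neighborhood of $\supp\chi$ with an open subset of $\re^d$ via the periodic structure, multiplication by the fixed smooth compactly supported function $\g$ is bounded on $H^{\a}(\T^d)$, and a compactly supported element of $H^{\a}(\T^d)$ belongs to $H^{\a}(\re^d)$ with comparable norm; hence $\|\g u\|_{H^{\a}(\re^d)}\le C\|u\|_{H^{\a}(\T^d)}$. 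Thus it suffices to show that the pullback-and-cut-off map $T\colon v\mapsto \chi_{\k}v_{\k}$ is bounded from $H^{\a}(\re^d)$, restricted to functions supported in a fixed neighborhood of $\supp\chi$, into $H^{\a}(\re^d)$.

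For $\a=m\in\na\cup\{0\}$ I would argue directly. By the Leibniz rule and the chain rule (Faà di Bruno's formula), for every multi-index $\b$ with $|\b|\le m$ the derivative $\pa^{\b}(\chi_{\k}v_{\k})$ is a finite sum of terms $a_{\b\c}(\y)\,(\pa^{\c}v)(\k^{-1}(\y))$ with $|\c|\le m$, where each coefficient $a_{\b\c}$ is a polynomial in the partial derivatives of $\chi_{\k}$ and of $\k^{-1}$. Since $\k$ is a diffeomorphism of $U$ and $\supp\chi\Subset U$, every $a_{\b\c}$ is bounded and supported in the compact set $\k(\supp\chi)$. Changing variables $\x=\k^{-1}(\y)$ and using that the Jacobian $|\det d\k^{-1}|$ is bounded on $\k(\supp\chi)$, each term is controlled in $L^2(\re^d)$ by $C\|\pa^{\c}v\|_{L^2(\re^d)}\le C\|v\|_{H^{m}(\re^d)}$; summing over $\b$ and $\c$ yields $\|Tv\|_{H^{m}(\re^d)}\le C\|v\|_{H^{m}(\re^d)}$ (the case $m=0$ is the same computation with no derivatives).

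Finally, for arbitrary $\a\ge0$ I would pick the integer $m$ with $m\le\a\le m+1$, set $\th=\a-m$, and invoke complex interpolation: since $[H^{m},H^{m+1}]_{\th}=H^{\a}$ with equivalent norms both on $\T^d$ and on $\re^d$, the boundedness of $T$ on $H^{m}$ and on $H^{m+1}$ established above gives boundedness on $H^{\a}$, which is the assertion. The only slightly delicate points are clerical: tracking the coefficients generated by Faà di Bruno's formula for higher-order derivatives, and recording the standard fact that for functions supported in a single chart the $H^{\a}$ norms on $\T^d$ and on $\re^d$ are comparable. There is no genuine analytic obstacle.
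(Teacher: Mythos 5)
Your argument is correct, but it follows a genuinely different route from the paper's. The paper does not touch the integer case at all: it writes
\begin{align*}
\|\chi_{\k}u_{\k}\|_{H^{\a}(\re^d)}\leq \|\g_{\k}\jap{D}^{\a}\chi_{\k}u_{\k}\|_{L^2(\re^d)}+\|(1-\g_{\k})\jap{D}^{\a}\f_{\k}\chi_{\k}u_{\k}\|_{L^2(\re^d)}
\end{align*}
with nested cutoffs $\supp\chi\subset\{\f=1\}$, $\supp\f\subset\{\g=1\}$, kills the second (non-local) term by the disjoint-support kernel estimate of Lemma \ref{lemb2}, and disposes of the first by invoking the coordinate invariance of (properly supported) pseudodifferential operators of order $\a$. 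You instead prove the bound for integer order by the Leibniz/Fa\`a di Bruno computation and then complex-interpolate the fixed linear map $T\colon v\mapsto\chi_{\k}v_{\k}$ between $H^{m}(\re^d)$ and $H^{m+1}(\re^d)$, using $[H^{m},H^{m+1}]_{\th}=H^{\a}$; this is the classical proof of coordinate invariance of fractional Sobolev spaces, it is self-contained (Lemma \ref{lemb2} is not needed), and it directly addresses the difficulty flagged in the paper's remark (non-locality of $\jap{D}^{2\a}$ for $2\a\notin\ze$) by never applying a fractional power to a composed function. The trade-off is that you import the interpolation identity for the Bessel-potential scale (immediate from the Fourier transform, and likewise via Fourier series for the torus version needed in your reduction $\|\g u\|_{H^{\a}(\re^d)}\le C\|u\|_{H^{\a}(\T^d)}$), whereas the paper imports the $\Psi$DO coordinate-invariance theorem. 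One clerical point you should make explicit: a neighborhood of $\supp\chi$ in $\T^d$ need not unfold into a single open subset of $\re^d$, so the identification ``via the periodic structure'' should be preceded by a partition of unity subordinate to fundamental-domain charts (harmless, since $T$ is linear and the pieces are finitely many); with that, the proof is complete.
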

\begin{proof}
We take $\f,\g\in C_c^{\infty}(U)$ satisfying $\supp \chi\subset \{\f=1\}$ and $\supp \f\subset \{\g=1\}$. Then we have
\begin{align*}
\|\chi_{\k}u_{\k}\|_{H^{\a}(\re^d)}\leq \|\g_{\k}\jap{D}^{\a}\chi_{\k}u_{\k}\|_{L^2(\re^d)}+\|(1-\g_{\k})\jap{D}^{\a}\f_{\k}\chi_{\k}u_{\k}\|_{L^2(\re^d)}. 
\end{align*}
Using Lemma \ref{lemb2}, we learn
\begin{align*}
\|(1-\g_{\k})\jap{D}^{\a}\f_{\k}\chi_{\k}u_{\k}\|_{L^2(\re^d)}\leq C\|\chi_{\k}u_{\k}\|_{L^2(\re^d)}\leq C\|u\|_{L^2(\T^d)}.
\end{align*}
Due to the coordinate invariance of the Sobolev spaces and the support property of $\g_{\k}$, we obtain
\begin{align*}
 \|\g_{\k}\jap{D}^{\a}\chi_{\k}u_{\k}\|_{L^2(\re^d)}\leq C\|u\|_{H^{\a}(\T^d)}.
\end{align*}
This completes the proof.
\end{proof}

\begin{rem}
The above lemma is trivial if $2\a$ is an integer. The difficulty is due to the lack of the local property of the pseudodifferental operator $\jap{D}^{2\a}$ if $2\a$ is not an integer. 
\end{rem}

We now state the main result of this section.

\begin{prop}\label{cutMo}
Suppose $d\geq 1.$ Let $\chi\in C^{\infty}(\T^d)$ be a real-valued function satisfying $\supp\chi \subset \{\nabla h_0\neq 0\}$. Then,
\begin{align*}
\sup_{z\in \mathbb{C}\setminus \re}\|\jap{x}^{-\frac{1}{2}-\e}\chi(D)(H_0-z)^{-1}\chi(D)\jap{x}^{-\frac{1}{2}-\e}\|_{B(\H)}<\infty.
\end{align*}
Moreover, $\jap{x}^{-\frac{1}{2}-\e}\chi(D)(H_0-z)^{-1}\chi(D)\jap{x}^{-\frac{1}{2}-\e}$ is $\a_{\e}$-H\"older continuous in the operator norm topology of $B(\H)$, where $\a_{\e}$ is the constant in Lemma \ref{aplem}.
\end{prop}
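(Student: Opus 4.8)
The plan is to localize on $\supp\chi$ with a partition of unity and, near each point of $\supp\chi$ --- all of which are regular points of $h_0$ --- straighten $h_0$ to a coordinate function, which replaces the cut-off resolvent of $H_0$ by a smooth localization of the resolvent of the one-dimensional differential operator $D_{\y_1}$; Lemma \ref{aplem} is precisely the weighted bound and H\"older continuity for the latter. Since $H_0=\Fd^{-1}h_0\Fd$ and $\chi(D)(H_0-z)^{-1}\chi(D)=\Fd^{-1}\chi^2(h_0-z)^{-1}\Fd$, and since under $\Fd$ the weighted space $\jap{x}^{-s}\H$ is identified with $H^s(\T^d)$, the assertion is equivalent to
\begin{align*}
|(f,\chi^2(h_0-z)^{-1}g)_{L^2(\T^d)}|\le C\|f\|_{H^{1/2+\e}(\T^d)}\|g\|_{H^{1/2+\e}(\T^d)},\qquad z\in\co\setminus\re,
\end{align*}
with $C$ independent of $z$, together with the analogous estimate with $(h_0-z)^{-1}-(h_0-z')^{-1}$ in place of $(h_0-z)^{-1}$ and $C|z-z'|^{\a_{\e}}$ on the right.

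First I would record the consequence of Lemma \ref{aplem} that is actually used, obtained from it by Plancherel's theorem: uniformly in $z\in\co\setminus\re$,
\begin{align*}
\left|\int_{\re^d}\frac{\overline{p(\y)}\,q(\y)}{\y_1-z}\,d\y\right|\le C\|p\|_{H^{1/2+\e}(\re^d)}\|q\|_{H^{1/2+\e}(\re^d)},\qquad p,q\in H^{1/2+\e}(\re^d),
\end{align*}
together with the corresponding $\a_{\e}$-H\"older continuity in $z$ of the left-hand side. Next, cover $\supp\chi$ by finitely many open sets $U_1,\dots,U_N\subset\{\nabla h_0\neq0\}$, each small enough to carry a diffeomorphism $\k_j$ onto an open subset of $\re^d$ with $h_0(\k_j^{-1}(\y))=\y_1+c_j$ for a constant $c_j$, and pick $\f_j\in C_c^{\infty}(U_j)$ with $\sum_j\f_j=1$ near $\supp\chi$, so that $\chi^2=\sum_j\f_j\chi^2$. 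On the $j$-th piece the change of variables $\x=\k_j^{-1}(\y)$ --- legitimate since $(h_0-z)^{-1}$ is bounded for $\Im z\neq0$, so the integrand is $L^1$ --- gives
\begin{align*}
(f,\f_j\chi^2(h_0-z)^{-1}g)_{L^2(\T^d)}=\int_{\re^d}\frac{\overline{f_{\k_j}(\y)}\,(\f_j\chi^2)_{\k_j}(\y)\,J_j(\y)\,g_{\k_j}(\y)}{\y_1-(z-c_j)}\,d\y,
\end{align*}
where $f_{\k_j}=f\circ\k_j^{-1}$ and $J_j=|\det d\k_j^{-1}|$. Writing the smooth factor $(\f_j\chi^2)_{\k_j}J_j$ as a product of two real-valued functions in $C_c^{\infty}(\k_j(U_j))$ and distributing it between $f_{\k_j}$ and $g_{\k_j}$, the recorded consequence of Lemma \ref{aplem} (applicable uniformly since $z\mapsto z-c_j$ preserves $\co\setminus\re$) together with Lemma \ref{lemb3} bounds this integral by $C_j\|f\|_{H^{1/2+\e}(\T^d)}\|g\|_{H^{1/2+\e}(\T^d)}$; summing over the finitely many $j$ gives the uniform bound. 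The H\"older estimate follows the same way, with the H\"older part of Lemma \ref{aplem} in place of its uniform bound, exactly as in the proof of Theorem \ref{bv}.

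I do not expect a genuine obstacle: all the analytic content sits in Lemma \ref{aplem} and in the transfer estimates of Lemmas \ref{lemb2} and \ref{lemb3}, and the rest is bookkeeping --- straightening a regular level set of $h_0$ turns the cut-off resolvent of $H_0$ into a smooth localization of the resolvent of the one-dimensional operator $D_{\y_1}$, so Lemma \ref{aplem} applies with no multidimensional analysis at all. The one step that needs care is the translation of the order-$(1/2+\e)$ weight through the Fourier transform: the position-space weight $\jap{\y_1}^{-1/2-\e}$ of Lemma \ref{aplem} becomes, on the other side, a Bessel potential of order $-1/2-\e$ acting in a single coordinate, so that only a one-coordinate Sobolev bound of order $1/2+\e$ --- which is dominated by the full $H^{1/2+\e}(\re^d)$ norm --- is needed; the naive position-space inequality $\|\jap{\y_1}^{1/2+\e}p\|_{L^2(\re^d)}\le\|p\|_{H^{1/2+\e}(\re^d)}$ is false and must not be used.
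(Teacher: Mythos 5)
Your proposal is correct and follows essentially the same route as the paper's proof: a partition of unity, a diffeomorphism straightening $h_0$ to the coordinate $\y_1$ near each regular point, a change of variables reducing the bilinear form to the quadratic form of $(D_{\y_1}-z)^{-1}$ handled by Lemma \ref{aplem}, and Lemma \ref{lemb3} to return to $H^{1/2+\e}(\T^d)$; the H\"older continuity is obtained identically. The only differences are cosmetic (the paper takes $\k(\x)=(h_0(\x),\x')$ so that no constant $c_j$ appears), and your closing remark about the one-coordinate weight being dominated by the full $H^{1/2+\e}(\re^d)$ norm is a correct reading of how Lemma \ref{aplem} is invoked.
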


\begin{proof}
By using a partition of unity, we may suppose that $\supp\chi$ is small enough. Thus, we may suppose   $\pa_{\x_1}h_0(\x)\neq 0$ on $\supp \chi$ without loss of generality. Set $\y=\k(\x)=(h_0(\x),\x')$. Then the inverse function theorem implies that $\k$ is a diffeomorphism from a neighborhood of $\supp \chi$ onto its image. We denote $\k^{-1}(\y)=(\x_1(\y),\y')$ for $\y\in \k(\supp \chi)$. We also denote $f_{\k}(\y)=f(\k^{-1}(\y))$. Using Lemma \ref{aplem} and Lemma \ref{lemb3}, we have
\begin{align*}
&\left|\int_{\T^d}\bar{f}(\x)\chi(\x)^2g(\x)(h_0(\x)-z)^{-1}d\x\right|\\
&=\left|\int_{\re^d}\bar{f}_{\k}(\y)\chi_{\k}(\y)^2g_{\k}(\y)(\y_1-z)^{-1}\frac{d\y}{|(\pa_{\x_1}h_0)(\x_1(\y),\y')|}\right|\\
&\leq C\|\chi_{\k}f_{\k}\|_{H^{\frac{1}{2}+\e}(\re^d)}\|\chi_{\k}g_{\k}\|_{H^{\frac{1}{2}+\e}(\re^d)}\\
&\leq C\|f\|_{H^{\frac{1}{2}+\e}(\T^d)}\|g\|_{H^{\frac{1}{2}+\e}(\T^d)}.
\end{align*}
Similarly, we have
\begin{align*}
&\left|\int_{\T^d}\bar{f}(\x)\chi(\x)^2g(\x)((h_0(\x)-z)^{-1}-(h_0(\x)-z')^{-1})d\x\right|\\
&\leq C|z-z'|^{\a_{\e}}\|f\|_{H^{\frac{1}{2}+\e}(\T^d)}\|g\|_{H^{\frac{1}{2}+\e}(\T^d)}.
\end{align*}
By using the Fourier transform, these imply the desired results.
\end{proof}


\begin{thebibliography}{99}

\bibitem{BPL} V. Bach, W. de Siqueira Pedra, S. N. Lakaev,  Bounds on the Pure Point Spectrum of Lattice Schr\"odinger Operators. preprint. https://arxiv.org/pdf/1709.09200.pdf.

\bibitem{BM} J.-M. Bouclet, H. Mizutani, Uniform resolvent and Strichartz estimates for Schr\"odinger equations with critical singularities. Trans. Amer. Math. Soc. 370 (2018), no. 10, 7293--7333.


\bibitem{BR} J.-M. Bouclet, J. Royer, Sharp low frequency resolvent estimates on asymptotically conical manifolds. Comm. Math. Phys. 335 (2015), no. 2, 809--850.

\bibitem{C} J.-C. Cuenin, Embedded eigenvalues of generalized Schr\"odinger operators. preprint. https://arxiv.org/pdf/1709.06989.pdf.

\bibitem{FS}  R. L. Frank, B. Simon, Eigenvalue bounds for Schr\"odinger operators with complex potentials. II. J. Spectr. Theory 7 (2017), no. 3, 633--658.


\bibitem{G} L. Grafakos, Classical Fourier analysis. Second edition. Graduate Texts in Mathematics, 249. Springer, New York, (2008).

\bibitem{HMO} Y. Higuchi, T. Matsumoto, O. Ogurisu, On the spectrum of a discrete Laplacian on $\ze$ with finitely supported potential. Linear and Multilinear Algebra. 59 (2011), no. 8, 917--927.


\bibitem{IJ}  A. D. Ionescu, D. Jerison, On the absence of positive eigenvalues of Schr\"odinger operators with rough potentials. Geom. Funct. Anal. 13 (2003), no. 5, 1029--1081.

\bibitem{IK} H. Isozaki, E. Korotyaev, Inverse problems, trace formulae for discrete Schr\"odinger operators. Ann. Henri Poincar\'e 13 (2012), no. 4, 751-788.

\bibitem{JKL}  E. Jeong, Y. Kwon, S. Lee, Uniform Sobolev inequalities for second order non-elliptic differential operators. Adv. Math. 302 (2016), 323--350. 

 \bibitem{KY} T. Kato, K. Yajima, Some examples of smooth operators and the associated smoothing effect. Rev. Math. Phys. 1 (1989), no. 4, 481--496.

\bibitem{KT}  M. Keel, T. Tao, Endpoint Strichartz estimates. Amer. J. Math. 120 (1998), no. 5, 955--980.

\bibitem{KRS} C. E. Kenig, A. Ruiz, C. D. Sogge, Uniform Sobolev inequalities and unique continuation for second order constant coefficient differential operators. Duke Math. J. 55 (1987), no. 2, 329--347.

\bibitem{KM} E. Korotyaev, J.C. Moller, Weighted estimates for the discrete Laplacian on the cubic lattice. preprint. https://arxiv.org/abs/1612.07500.

\bibitem{Mi} H. Mizutani, Uniform Sobolev estimates for Schr\"odinger operators with scaling-critical potentials and applications. preprint. https://arxiv.org/abs/1609.03253.
 
\bibitem{M} E. Mourre, Absence of singular continuous spectrum for certain selfadjoint operators. Comm. Math. Phys. 78 (1980/81), no. 3, 391--408.

\bibitem{RS} M. Reed, B. Simon, {\it The Methods of Modern Mathematical 
Physics}, Vol.\ I--IV.  Academic Press, 1972--1980. 

\bibitem{RoS} I. Rodnianski, W.  Schlag,  Time decay for solutions of Schr\"odinger equations with rough and time-dependent potentials. Invent. Math. 155 (2004), no. 3, 451--513. 
 
\bibitem{R} A. Ruiz, Harmonic analysis and inverse problem. Lecture notes, (2002). 
https://www.uam.es/gruposinv/inversos/publicaciones/Inverseproblems.pdf.


\bibitem{SK} A. Stefanov, P. G. Kevrekidis, Asymptotic behaviour of small solutions for the discrete nonlinear Schr\"odinger and Klein-Gordon equations. Nonlinearity 18 (2005), no. 4, 1841--1857.







\end{thebibliography}
\end{document}